\newcommand{\galg}{\mathfrak{g}}
\newcommand{\So}{\mathfrak{so}}
\newcommand{\Sl}{\mathfrak{sl}}
\newcommand{\R}{\mathbb{R}}
\DeclareMathOperator{\ud}{d}
\DeclareMathOperator{\GL}{GL}
\DeclareMathOperator{\tr}{tr}
\DeclareMathOperator{\Span}{span}
\DeclareMathOperator{\diag}{diag}
\renewcommand{\phi}{\varphi}
\newtheorem{theorem}{Theorem}[section]
\newtheorem{corollary}[theorem]{Corollary}
\newtheorem{lemma}[theorem]{Lemma}
\newtheorem{proposition}[theorem]{Proposition}
\theoremstyle{definition}
\newtheorem{definition}[theorem]{Definition}
\newtheorem{example}[theorem]{Example}
\theoremstyle{remark}
\newtheorem{remark}[theorem]{Remark}
\numberwithin{equation}{section}
\title{Lie algebras with compatible scalar products for non-homogeneous Hamiltonian operators }
\author[1,2]{G. Gubbiotti}
\author[3]{F. Oliveri}
\author[3]{E. Sgroi}
\author[2,4]{P. Vergallo}
\small \affil[ 1]{\small Universit\`a degli Studi di Milano, Dipartimento di Matematica
``Federigo Enriques'', Via Cesare Saldini 50, 20133, Milano, Italy}
\affil[2]{INFN Sezione di Milano, Via Giovanni Celoria 16, 20133, Milano, Italy}
\affil[3]{Universit\`a degli Studi di Messina, Dipartimento di Scienze Matematiche e Informatiche, Scienze Fisiche e Scienze della Terra, V.le F. Stagno D'Alcontres 31, I-98166 Messina, Italy}
\affil[4]{Universit\`a degli Studi di Messina, Dipartimento di Ingegneria, Contrada Di Dio, 98166 Sant'Agata, Messina, Italy}
\date{\today}
\begin{document}
\maketitle

\vspace{10mm}

\begin{abstract}
    We study from an algebraic and geometric viewpoint Hamiltonian operators which are sum of a non-degenerate first-order homogeneous operator and a Poisson tensor. In flat coordinates, also known as Darboux coordinates, these operators are uniquely determined by a triple composed by a Lie algebra, its most general non-degenerate quadratic Casimir and a 2-cocycle. We present some classes of operators associated to Lie algebras with non-degenerate quadratic Casimirs and we give a description of such operators in low dimensions. Finally, motivated by the example of the KdV equation we discuss the conditions of bi-Hamiltonianity of such operators.
\end{abstract}

\newpage

\section{Introduction}

The theory of integrable systems is a broad topic which lies in the exact middle between geometry, differential \cite{DubKri,Olver:ApLGDEq} and algebraic \cite{donsha,pvan} mainly, and mathematical physics, especially in the contexts of mechanics \cite{arnold2007mathematical}, hydrodynamics \cite{NovikovManakovPitaevskiiZakharov:TS} and field theories \cite{dubfield}. As a building block, a key role in this theory is played by Poisson structures and the related Hamiltonian formalism. Such a formalism was developed first for Ordinary Differential Equations (ODEs) and later extended to Partial Differential Equations (PDEs), see \cite{Magri}. In its present form, the Hamiltonain theory of PDEs represents a successful tool to investigate nonlinear phenomena using a geometric approach. We refer to the review paper \cite{mokhov98:_sympl_poiss} for further details. 

Let us specify to the case of evolutionary systems of order $\ell$ in $N$ components,
i.e.\ systems of PDEs of the following form:
\begin{equation}\label{a1}
    u^i_t=F^i(x, u, u_x, \ldots,u_{\ell x}), \qquad i=1,\dots , N,
\end{equation}
where $x,t$ are independent variables, $u^i:X\subset \mathbb{R}^2\longrightarrow \mathbb{R}, i=1,\dots n$, and  $F^i$ are smooth functions in their arguments. In this context, we say that \eqref{a1} is Hamiltonian if there exists a differential operator $\mathcal{A}=(\mathcal{A}^{ij})_{i,j=1,\dots n}$ and a functional $H=\int{h\, dx}$ such that the system can be re-written as
\begin{equation}
    u^i_t=\mathcal{A}^{ij}\left(\frac{\delta H}{\delta u^j}\right), \qquad i=1,\dots, N,
\end{equation}
with $\mathcal{A}$ a Hamiltonian operator (HO) and $\delta/\delta u^j$ the variational derivative with respect to the $j$-th field variable. We recall that the operator $\mathcal{A}$ is said Hamiltonian if for every pair of functionals $F=\int{f\, dx},G=\int{g\, dx}$ the bracket
\begin{equation}\label{bra}
    \{F,G\}_{\mathcal{A}}=\int{\frac{\delta f}{\delta u^i}\mathcal{A}^{ij}\left(\frac{\delta g}{\delta u^j}\right) \, dx}
\end{equation}
define a Poisson bracket, i.e. it is skew-symmetric and satisfies the Jacobi identity. Note that in the finite-dimensional case of ODEs, such a formalism reduces to the well-studied Poisson bracket defined in terms of a Poisson tensor $\omega$ on the field variables manifold.

The geometric nature of the operators is almost clear not only in the framework of classical analytic mechanics, but also when dealing with nonlinear PDEs. As an example, the easiest structure for a purely differential operator is the one introduced by Dubrovin and Novikov in 1983:
\begin{equation}\label{dubnov}
    \mathcal{A}_1^{ij}=g^{ij}\partial_x+b^{ij}_ku^k_x,
\end{equation}
where the coefficients $g$ and $b$ depend on $u^1,\ldots, u^n$ only. Such an operator has a homogeneous degree equal to one according to the natural grading rules $\deg \partial_x^k=k$ and $\deg u_{lx}=l$. In \cite{DN83}, the authors proved that an operator of type \eqref{dubnov} defines a Poisson bracket \eqref{bra} in the non-degenerate case $\det g^{ij}\neq 0$, if and only if  $g_{ij}=(g^{ij})^{-1}$ is a flat metric and the symbols $b^{ij}_k$ satisfy $b^{ij}_k=-g^{is}\Gamma^j_{sk}$, where $\Gamma^i_{jk}$ define the Levi-Civita connection of $g$. In this case, $\mathcal{A}_1$ is also known as a Dubrovin-Novikov operator. 

Since the pioneering work of Dubrovin and Novikov, the theory of homogeneous Hamiltonian operators has been developed from different viewpoints: from the projective geometric one (see \cite{FPV14,VerVit2,LSV,GubvGeVer1}), from the cohomological one (see \cite{PC1,PC2}), from the bi-Hamiltonian perspectives (see \cite{Magri,omokh,LSV,PavVerVit1,LV}), and through the properties of the associated system of quasilinear conservation laws (see \cite{AgaFer,tsa2,tsa1}).

In this paper, we consider non-homogeneous operators of hydrodynamic type, i.e. Hamiltonian operators of type $1+0$:
\begin{equation}\label{op1}
    \mathcal{A}= \mathcal{A}_1\, +\, \mathcal{A}_0,
\end{equation}
where
\begin{align}
    \mathcal{A}_1^{ij}=g^{ij}\partial_x+b^{ij}_ku^k_x\, , \qquad 
\mathcal{A}^{ij}_0=\omega^{ij},
\end{align}
here the coefficients $g^{ij},b^{ij}_k,\omega^{ij}$ depend on the field variables $u^1,\dots, u^n$ only. We also refer to \cite{koinho} for non-homogeneous operators of different type.

Operators of this type arise in different contexts in mathematical physics and geometry. They were firstly introduced by Dubrovin and Novikov in \cite{DubrovinNovikov:PBHT} as a natural extension of first order homogeneous Hamiltonian operators in 1984. Few years later, in 1994,   E.V. Ferapontov and O.I. Mokhov in \cite{MF94} found the necessary and sufficient conditions for them to be Hamiltonian, then  S.P. Tsarev proved that the key integrable model of the KdV equation admits such a Hamiltonian structure after the inversion procedure (see \cite{tsa3}). Moreover, non-homogeneous operators of type $1+0$ (also known as non-homogeneous hydrodynamic operators) are proved to arise as the natural (average) Poisson structure of perturbed systems after the application of Whitham modulation procedure (see \cite{DB2}). More recently, one of us found necessary geometric conditions for a non-homogeneous quasilinear system to admit Hamiltonian formalism with $1+0$ operators \cite{Ver2}. This last result was recently generalized by X. Hu and M. Casati for the multidimensional case \cite{HuCas24}. A recent classification of operators \eqref{op1} was developed by M. Dell'Atti e P. Vergallo for $2$ and $3$ number of components when the leading coefficient $g$ is degenerate (see \cite{DellAVer1}). This last result was recently generalised to the case of $2$ spatial variables (see \cite{Riz1}).
%{\color{magenta} FORSE POSSIAMO TOGLIERE QUESTA FRASE: Finally, in a forthcoming paper, M. Dell'Atti, G. Gubbiotti and P. Vergallo are underlying the analogies of Dubrovin's differential-geometric Poisson brackets in the semi-discrete case to operators of type \eqref{op1}. (see the preprint \cite{DellAGubVer1})}

Operators of the form \eqref{op1} are not Hamiltonian in general, indeed the following Theorem holds:
\begin{theorem}[\cite{FerMok1, mokhov98:_sympl_poiss}]\label{thm1}
The operator \eqref{op1} is Hamiltonian if and only if $\mathcal{A}_1$ 
is Hamiltonian, $\mathcal{A}_0$ is Hamiltonian, and the following compatibility conditions are satisfied
\begin{subequations}
\begin{align}\label{cond1}
    \Phi^{i j k} &= \Phi^{k i j} \,,
\\\label{eq:phi}
        \frac{\partial \Phi^{i j k}}{\partial u^r} & = \sum_{(i,j,k)}  b^{s i}_{r}\, \frac{\partial \omega^{j k}}{\partial u^s} + \left( \frac{\partial b^{i j}_{r}}{\partial u^s} - \frac{\partial b^{i j}_{s}}{\partial u^r}  \right)\omega^{s k}\,,
\end{align}
\end{subequations}
where $\Phi^{i j k}$ is the $(3,0)$-tensor
\begin{align}
        \Phi^{i j k} & = g^{i s}\, \frac{\partial \omega^{j k}}{\partial u^s} - b^{i j}_{s} \, \omega^{s k} - b^{i k}_{s} \, \omega^{j s} \,. 
\end{align} 
\end{theorem}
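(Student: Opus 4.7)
The strategy is to encode the Hamiltonian property of $\mathcal{A}=\mathcal{A}_1+\mathcal{A}_0$ as skew-symmetry plus the vanishing of the Schouten square $[\mathcal{A},\mathcal{A}]=0$, and then exploit the different differential weights of the two summands to split these conditions into independent pieces. Skew-symmetry is immediate: because $\mathcal{A}_1^{*}$ carries a $\partial_x$ while $\mathcal{A}_0^{*}$ does not, the equation $\mathcal{A}=-\mathcal{A}^{*}$ holds if and only if $\mathcal{A}_1=-\mathcal{A}_1^{*}$ and $\mathcal{A}_0=-\mathcal{A}_0^{*}$ hold separately.

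For the Jacobi identity I would pass to the Olver functional-polyvector formalism, representing $\mathcal{A}_1$ and $\mathcal{A}_0$ by the bivectors
\begin{equation*}
P_1=\tfrac{1}{2}\!\int\!\bigl(g^{ij}\theta^i\theta^j_x+b^{ij}_ku^k_x\theta^i\theta^j\bigr)\,dx,\qquad
P_0=\tfrac{1}{2}\!\int\!\omega^{ij}\theta^i\theta^j\,dx,
\end{equation*}
with anticommuting fibre coordinates $\theta^i$ and grading $\deg\theta^i_{kx}=k+1$, $\deg u^i_{kx}=k$. The bilinear expansion
\begin{equation*}
[\mathcal{A},\mathcal{A}]=[P_1,P_1]+2[P_1,P_0]+[P_0,P_0]
\end{equation*}
decomposes into homogeneous trivector pieces of weights $3$, $2$ and $1$ respectively, so each bracket must vanish independently. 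The highest-weight term yields the Dubrovin--Novikov conditions, i.e.\ $\mathcal{A}_1$ Hamiltonian, while the lowest-weight term $[P_0,P_0]=0$ is the ordinary Jacobi identity for the bivector $\omega^{ij}$, i.e.\ $\mathcal{A}_0$ Hamiltonian.

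The heart of the argument is then to show that the mixed bracket $[P_1,P_0]=0$ is equivalent to \eqref{cond1}--\eqref{eq:phi}. Applying the prolongation formula for the Euler-Lagrange vector field associated to $P_0$ against $P_1$ (and vice versa), and integrating by parts in $x$ so that at most one $\theta$ carries a derivative, the result takes the form of an integrated functional trivector with two homogeneous monomials in the $\theta$'s: a $\theta^i\theta^j\theta^k_x$ term whose coefficient is $\Phi^{ijk}$, and a $u^r_x\,\theta^i\theta^j\theta^k$ term. Total antisymmetry in $(i,j,k)$ then forces the first coefficient to be cyclically invariant, which is \eqref{cond1}; using \eqref{cond1} to clean up the second, its vanishing modulo antisymmetrisation becomes precisely \eqref{eq:phi}.

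The main obstacle is this explicit Schouten-bracket computation. It is routine in spirit but combinatorially delicate: one must track many $\partial_x$'s generated by the prolongations, redistribute them via integration by parts, and carefully antisymmetrise in the indices carried by the anticommuting $\theta$'s, while also exploiting the defining identities for $\mathcal{A}_1$ and $\mathcal{A}_0$ proved in the weight-$3$ and weight-$1$ pieces. Once the two homogeneous monomials are correctly isolated, matching their coefficients with $\Phi^{ijk}$ and its cyclic permutations is direct and gives exactly the stated compatibility conditions.
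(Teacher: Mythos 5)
You should first note that the paper itself offers no proof of Theorem \ref{thm1}: it is quoted verbatim from \cite{FerMok1, mokhov98:_sympl_poiss}, so there is no internal argument to compare with line by line. Your plan is nonetheless sound and is essentially the modern repackaging of the original Mokhov--Ferapontov derivation: the cited references verify skew-symmetry and the Jacobi identity for the bracket \eqref{bra} directly, whereas you organise the same computation through the functional trivector $[\mathcal{A},\mathcal{A}]=[P_1,P_1]+2[P_1,P_0]+[P_0,P_0]$ and let homogeneity force each piece to vanish separately; this buys the clean splitting ``$\mathcal{A}_1$ Hamiltonian, $\mathcal{A}_0$ Hamiltonian, plus one mixed compatibility condition'' without any ad hoc case analysis, at the price of the combinatorial bookkeeping you acknowledge. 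Two points need repair but are not structural gaps. First, with the grading you declare ($\deg\theta^i_{kx}=k+1$, $\deg u^i_{kx}=k$) the three brackets have degrees $5$, $4$, $3$ (equivalently $2$, $1$, $0$ if one counts only the total number of $x$-derivatives), not $3$, $2$, $1$; the argument only needs the degrees to be distinct, so this is a harmless miscount. Second, the phrase ``total antisymmetry in $(i,j,k)$'' is not the correct normalisation: the monomials $\theta^i\theta^j\theta^k_x$ are antisymmetric only in $(i,j)$ and are further related by integration by parts via $\int(\theta^i\theta^j\theta^k)_x\,dx=0$, so you must first reduce the mixed trivector to a canonical form and then set the appropriately symmetrised coefficients to zero; it is precisely at that stage, using that $\Phi^{ijk}$ is skew in its last two indices, that the $\theta^i\theta^j\theta^k_x$ part yields the cyclic condition \eqref{cond1} and the $u^r_x\theta^i\theta^j\theta^k$ part, simplified with \eqref{cond1} and with the conditions already extracted from $[P_1,P_1]=0$ and $[P_0,P_0]=0$, yields \eqref{eq:phi}. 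With those two adjustments your outline reproduces the known proof of the theorem.
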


Note that no assumption of non-degeneracy is made up to now, i.e.\ if the conditions of \Cref{thm1} are satisfied the operator is Hamiltonian also when $\det g=0$.

In \cite{DellAVer1}, the authors showed that a large class of scalar evolutionary 
equations $u_t=F(u,\dots , u_{(\ell-1)x})+u_{\ell x}$ can be recast into a non-homogeneous 
quasilinear system of first order PDEs. This procedure is very common for ODEs and was 
deeply investigated by S.P.\ Tsarev with respect to the Hamiltonian property of the 
systems obtained through such transformations \cite{tsa3, Kersten2008HamiltonianSF}. 
In short, being $\ell$ the order of the equation, we introduce new variables $u^1=u,u^2=u_x, \dots , u^\ell=u_{(\ell-1)x}$ so 
that the scalar equation is brought into an evolutionary system in $u^j_x$ of equations 
of first order in $u_t$. Such a procedure is also known as the \emph{inversion} procedure, 
because the resulting system is evolutionary with respect to $x$. 

As an example, let us consider the generalised KdV equation:
\begin{equation}\label{mkdv}
    u_t+3(n+1)\,u^n\,u_x+u_{xxx}=0
\end{equation}
where $n$ is a positive integer. We introduce the variables 
$u^1=u,\,u^2=u_{x},\,u^3=u_{xx}$, and map the equation into 
\begin{equation}\label{sysmkdv}
    \begin{cases}
    u^1_x=u^2\\[1.2ex]
    u^2_x=u^3\\[1.2ex]
    u^3_x=-u^1_t-3(n+1)(u^1)^n \, u^2
    \end{cases} \,.
\end{equation}
It turns out (see \cite{DellAVer1}) that system \eqref{sysmkdv} is Hamiltonian with a non-homogeneous operator \eqref{op1}:  
\begin{equation}\label{opmkdv}
\begin{split}
    A&=\begin{pmatrix}
    0&0&0\\0&0&0\\0&0&1
    \end{pmatrix}\partial_t+ \begin{pmatrix}
    0&1&0\\-1&0&-3(n+1)(u^1)^{n-1}\\
    0&3(n+1)(u^1)^{n-1}&0
    \end{pmatrix},
\end{split} 
\end{equation} 
and the Hamiltonian functional 
\begin{equation}
    H=\int\left( 3(u^1)^{n+1}-u^1u^3+\frac{(u^2)^2}{2}\right) \, dx.
\end{equation}
Note that this class of equations contains the KdV equation itself (for $n=1$) 
and the modified KdV  equation (for $n=2$). 

%Both of them are integrable, however 
%a bi-Hamiltonian structure where the operators are of type $1+0$ has been only 
%found with an additional change of variables (the local quadratic unimodular one) 
%by O. Mokhov. We will discuss this aspect in Section \ref{sec4}.

\vspace{5mm}

\paragraph{Structure of the paper.} The aim of this paper is to give a new insight on non-homogeneous hydrodynamic type operators using the theory of Lie algebras, and retrieving the geometric properties from the Lie algebraic
ones. 
{
The paper is divided into three main sections, plus the conclusions and two appendices. In Section \ref{sec1}, we deepen the study of Hamiltonian structure with non-degenerate leading coefficient, introducing the Darboux form of the operator and discussing the geometric interpretation of non-homogeneous structures in the contest of Lie algebras. In Section \ref{sec3}, we describe classes of Lie algebras giving arise to $1+0$ Hamiltonian operators; we use purely algebraic results to study in details abelian and semi-simple Lie algebras, structures which are direct sum of other ones and the particular case of two-step nilpotent Lie algebras. Furthermore, in Section \ref{sec4}, we present the complete list of Hamiltonian operators up to $n=6$ number of components. Finally, in the Conclusions we draw some further perspectives for pairs of compatible operators of type $1+0$, led by the example of the KdV equation.
}

\section{Non-degenerate structures}\label{sec1}

Let us firstly assume that $g$, viewed as matrix, has non-zero determinant. In this case, 
 we can reformulate Theorem \ref{thm1} as follows
\begin{theorem}
    In the non-degenerate case,  the operator $\mathcal{A}$ is Hamiltonian if and only if $\mathcal{A}_1$ is a Dubrovin-Novikov operator and  $\mathcal{A}_0=\omega$ is a Poisson tensor that is also a Killing-Yano tensor for the metric $g$.
    \label{thm:structgeom}
\end{theorem}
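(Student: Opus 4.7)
The plan is to apply \Cref{thm1} to $\mathcal{A}=\mathcal{A}_1+\mathcal{A}_0$ and to reinterpret each of its three conditions geometrically under the standing assumption $\det g\neq 0$. The first condition---that $\mathcal{A}_1$ is Hamiltonian---is, by the classical Dubrovin--Novikov theorem \cite{DN83}, equivalent to $g_{ij}=(g^{ij})^{-1}$ being a flat (pseudo-)Riemannian metric and $b^{ij}_k=-g^{is}\Gamma^j_{sk}$, where $\Gamma$ are the Levi-Civita symbols of $g$; this is exactly the Dubrovin--Novikov form of $\mathcal{A}_1$. The second condition---that $\mathcal{A}_0=\omega$ is Hamiltonian---is, because $\mathcal{A}_0$ is an ultralocal zeroth-order operator, equivalent to $\omega^{ij}$ being antisymmetric and satisfying the Jacobi identity, i.e.\ to $\omega$ being a Poisson bivector.

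The central calculation is to recast the compatibility conditions \eqref{cond1}--\eqref{eq:phi} in terms of the Levi-Civita connection of $g$. Substituting $b^{ij}_k=-g^{is}\Gamma^j_{sk}$ into the definition of $\Phi^{ijk}$ and using $\nabla g = 0$ gives
\[
\Phi^{ijk}=g^{is}\nabla_s\omega^{jk},
\]
and hence, lowering indices with $g$, one has $\Phi_{ijk}=\nabla_i\omega_{jk}$, where $\omega_{jk}=g_{ja}g_{kb}\omega^{ab}$ is the $(0,2)$-tensor associated with $\omega$. Antisymmetry of $\omega$ makes $\Phi_{ijk}$ antisymmetric in the pair $(jk)$; combined with the cyclic symmetry \eqref{cond1}, $\Phi_{ijk}=\Phi_{kij}$, this generates the alternating representation of $S_3$, so $\Phi_{ijk}$ is totally antisymmetric in $(i,j,k)$. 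Total antisymmetry of $\nabla_i\omega_{jk}$ is, for a 2-form on a (pseudo-)Riemannian manifold, precisely the Killing--Yano equation $\nabla_{(i}\omega_{j)k}=0$.

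The remaining step---and the main technical obstacle---is to prove that the differential condition \eqref{eq:phi} is automatically satisfied once $\mathcal{A}_1$ is of Dubrovin--Novikov form, $\omega$ is Poisson, and the Killing--Yano equation holds. My plan is to rewrite both sides of \eqref{eq:phi} covariantly: after substituting the Dubrovin--Novikov expression for $b^{ij}_k$ and converting every partial derivative into a covariant one, the left-hand side becomes $g^{is}\nabla_r\nabla_s\omega^{jk}$ modulo explicit Christoffel corrections, while the cyclic sum on the right-hand side recombines through the identity for $\partial_s b^{ij}_r-\partial_r b^{ij}_s$, whose antisymmetric part is controlled by the Riemann curvature of $g$. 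The residual obstruction then splits into two pieces: a contraction of the Riemann tensor of $g$, which vanishes by flatness, and a contraction of the Schouten bracket $[\omega,\omega]$ of the bivector with itself, which vanishes by the Poisson property; what is left is an identity already implied by the total antisymmetry of $\nabla_i\omega_{jk}$. This shows the redundancy of \eqref{eq:phi} in the non-degenerate setting and completes the proof.
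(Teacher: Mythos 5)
Your handling of the first two conditions of \Cref{thm1} and of \eqref{cond1} is correct and follows the same route as the paper: substituting $b^{ij}_k=-g^{is}\Gamma^j_{sk}$ gives $\Phi^{ijk}=\nabla^i\omega^{jk}$, and skewness of $\omega$ together with the cyclic symmetry \eqref{cond1} is equivalent to total antisymmetry of $\nabla_i\omega_{jk}$, i.e.\ the Killing--Yano condition \eqref{form1}.

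The gap is in the last step, which is the only non-trivial point of the statement (the paper itself dismisses it with the remark that \eqref{eq:phi} is automatically satisfied for Killing--Yano tensors). Your sketch of why \eqref{eq:phi} is redundant is both incomplete and structurally wrong: every term in \eqref{eq:phi} is \emph{linear} in $\omega$ (the left-hand side in second derivatives of $\omega$, the right-hand side in $\omega$ and $\partial\omega$ multiplied by $b$ and $\partial b$, which depend on $g$ only), so no contraction of the Schouten bracket $[\omega,\omega]$, which is quadratic in $\omega$, can arise in any covariant rewriting; the Poisson property of $\omega$ plays no role whatsoever in \eqref{eq:phi}. What is actually needed, and what your proposal never establishes, is that a Killing--Yano $2$-form of a \emph{flat} metric has vanishing second covariant derivative. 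The cheapest argument: since Hamiltonianity and all the hypotheses are tensorial under changes of the field variables, check everything in flat coordinates of $g$, where $b^{ij}_k=0$; there \eqref{eq:phi} reduces to $\eta^{is}\partial_r\partial_s\omega^{jk}=0$, and lowering indices with the constant $\eta$, the tensor $S_{rijk}=\partial_r\partial_i\omega_{jk}$ is symmetric in $(r,i)$ and, by the Killing--Yano condition, totally antisymmetric in $(i,j,k)$; the standard index chase $S_{rijk}=S_{irjk}=-S_{ijrk}=-S_{jirk}=S_{jrik}=S_{rjik}=-S_{rijk}$ forces $S=0$, so $\omega$ is affine in flat coordinates and \eqref{eq:phi} holds identically (this is also why Corollary \ref{cor:mokhov10} produces $\omega^{ij}=c^{ij}_ku^k+f^{ij}$). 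Equivalently, one can invoke the classical identity expressing $\nabla\nabla\omega$ of a Killing--Yano form through the Riemann tensor, which vanishes by flatness. With this ingredient your argument closes; without it the ``if'' direction is not proved.
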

Note that, in coordinates the property of being Killing-Yano {\cite{kY1,kY2}} for $g$ reads as $\nabla^{[i}\omega^{j]k}=0$, or more explicitly 
\begin{equation}\label{form1}
    \nabla^i\omega^{jk}+\nabla^j\omega^{ik}=0,
\end{equation}
where $\nabla^i=g^{is}{\nabla_s}$ and $\nabla_s$ is the covariant derivative with respect to $g_{ij}$. Formula \eqref{form1} corresponds to condition \eqref{cond1} in Theorem \ref{thm1}. We remark that \eqref{eq:phi} is trivially satisfied for every Killing-Yano tensor. 

\vspace{5mm}

\subsection{Darboux form of Hamiltonian operators}
We briefly recall the following definition 
\begin{definition}\label{casy}
    Let $\mathcal{A}$ be a Hamiltonian operator and $\{\,,\,\}_{\mathcal{A}}$ the associated Poisson bracket as in \eqref{bra}. A functional $C=\int{c\, dx}$ is said a Casimir for $\mathcal{A}$ if 
    \begin{equation}
        \{C,F\}_{\mathcal{A}}=0 
    \end{equation}for every other functional $F=\int{f\, dx}$.
\end{definition}
In the particular case of a first-order homogeneous operator \eqref{dubnov}, if $\mathcal{A}_1$ is a non-degenerate operator, then there exist $n$ functionally independent functions $f_l$ such that 
\begin{equation}
    \nabla_i\nabla_j f_l=0 \qquad i,j=1,\dots, n,
\end{equation}which are the Casimir densities of the operator $\mathcal{A}_1$. Introducing the change of variables defined as
\begin{equation*}
    \tilde{u}^1=f_1(u^1,\dots ,u^n)\, ,\quad  \dots , \quad \tilde{u}^n=f_n(u^1,\dots , u^n),
\end{equation*}
the operator simply reduces to $\mathcal{A}^{ij}_1=\eta^{ij}\partial_x$, where $\eta^{ij}\in\mathbb{R}$ (and so all the Christoffel symbols vanish). This form of the operator is also known as \emph{Darboux form} of a first-order Hamiltonian structure, in complete analogy with the Darboux form of a symplectic tensor, see \cite{getz}. So, we reduce Theorem \ref{thm:structgeom} to 
\begin{corollary}[Mokhov, \cite{mokhov98:_sympl_poiss}]\label{corollary1}In Darboux form for $\mathcal{A}_1^{ij}=\eta^{ij}\partial_x$, the operator $\mathcal{A}^{ij}$ is Hamiltonian if and only if $\mathcal{A}_0^{ij}$ is linear in $u^k$
\begin{equation}
  \mathcal{A}_0^{ij}=  \omega^{ij}=c^{ij}_ku^k+f^{ij},
\end{equation}
such that $c^{ij}_k$ are structure constants of a real Lie algebra $\mathfrak{g}$, $f$ is a 2-cocycle on it and $\eta$ is a scalar product compatible with $\mathfrak{g}$.
    \label{cor:mokhov10}
\end{corollary}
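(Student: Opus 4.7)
The plan is to apply Theorem~\ref{thm1} directly in the Darboux setting, where by hypothesis $g^{ij}=\eta^{ij}$ is constant (so $b^{ij}_{k}\equiv 0$) and the first-order part $\mathcal{A}_1$ is already Hamiltonian. The task therefore reduces to identifying exactly which zero-order operators $\omega$ are simultaneously Poisson tensors and compatible with $\mathcal{A}_1$ in the sense of \eqref{cond1}--\eqref{eq:phi}.

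The decisive observation is that with $b=0$ the $(3,0)$-tensor collapses to $\Phi^{ijk}=\eta^{is}\partial_s\omega^{jk}$, so \eqref{eq:phi} degenerates to $\partial_r\Phi^{ijk}=0$. Thus $\Phi^{ijk}$ is constant in $u$, and by non-degeneracy of $\eta$ the first partials of $\omega$ are constant as well. Hence $\omega^{jk}$ is forced to be affine,
\begin{equation*}
    \omega^{jk}=c^{jk}_{s}\, u^{s}+f^{jk},
\end{equation*}
with $c^{jk}_{s},f^{jk}\in\mathbb{R}$ inheriting antisymmetry in $(j,k)$ from that of $\omega$.

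Next I would substitute this ansatz into the Jacobi identity for $\omega$, namely $\sum_{(i,j,k)}\omega^{is}\partial_{s}\omega^{jk}=0$, and split by homogeneous degree in $u$. The linear part yields the usual Jacobi identity $\sum_{(i,j,k)} c^{is}_{a}c^{jk}_{s}=0$, identifying the $c^{jk}_{s}$ as structure constants of a real Lie algebra $\mathfrak{g}$; the $u$-independent part produces $\sum_{(i,j,k)} f^{is}c^{jk}_{s}=0$, which is exactly the 2-cocycle condition on $f$ relative to $\mathfrak{g}$ (trivial representation). Finally, condition \eqref{cond1} reads $\eta^{is}c^{jk}_{s}=\eta^{ks}c^{ij}_{s}$: combined with the antisymmetry of $c^{jk}_{s}$ in $(j,k)$, this says that $\eta^{is}c^{jk}_{s}$ is totally antisymmetric in $(i,j,k)$, i.e.\ ad-invariance of $\eta$ as a scalar product on $\mathfrak{g}$.

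The reverse implication is then immediate: given any triple $(\mathfrak{g},\eta,f)$ satisfying the three algebraic conditions, the operator $\omega^{jk}=c^{jk}_{s}u^{s}+f^{jk}$ is affine, so \eqref{eq:phi} holds trivially, \eqref{cond1} is equivalent to ad-invariance, and Jacobi for $\omega$ decomposes exactly into the Lie-algebra Jacobi identity plus the 2-cocycle condition. I do not anticipate a genuine obstacle here: once one sees that \eqref{eq:phi} with $b=0$ forces affineness of $\omega$, the remainder is a transparent unpacking of the two compatibility conditions of Theorem~\ref{thm1}. The only mild care required is to match the two-up/one-down index convention for $c^{jk}_{s}$ with the standard one-up/two-down convention for Lie-algebraic structure constants when phrasing the cohomological interpretation of $f$.
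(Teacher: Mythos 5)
Your proposal is correct. Note that the paper itself does not spell out a proof of \Cref{cor:mokhov10}: it is quoted as Mokhov's result and presented as the Darboux-coordinate specialization of \Cref{thm:structgeom}, i.e.\ of the geometric characterization ``$\omega$ is a Poisson tensor which is Killing--Yano for $g$'', with the remark that \eqref{eq:phi} is automatic in that setting. Your route is the direct one from \Cref{thm1}: setting $b^{ij}_k=0$, using non-degeneracy of $\eta$ to turn $\partial_r\Phi^{ijk}=0$ into $\partial_r\partial_s\omega^{jk}=0$ (hence affineness of $\omega$), splitting the ultralocal Jacobi identity by degree in $u$ into the Lie-algebra Jacobi identity and the 2-cocycle condition \eqref{compa1}, and observing that \eqref{cond1}, together with skew-symmetry of $c^{jk}_s$ in its upper indices, is equivalent to the ad-invariance condition \eqref{compa2}. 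The two approaches encode the same computation --- the Killing--Yano condition in flat coordinates is exactly your cyclic/antisymmetry identity for $\eta^{is}\partial_s\omega^{jk}$ --- but yours is self-contained and makes explicit where non-degeneracy is used (it is indeed in force throughout \Cref{sec1}, so the affineness step is legitimate), whereas the paper's presentation buys a coordinate-free geometric interpretation at the price of deferring the verification to the cited reference.
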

This version of the Hamiltonian conditions is the starting point of our investigation, so it is worth to recall the main algebraic structures here used. In particular, given a Lie algebra $\mathfrak{g}$ we say that $f$ is a 2-cocycle if $f\in \Lambda^2\mathfrak{g}^*$  such that its \emph{coboundary} $\delta f$ annihilates, where 
\begin{equation*}
    \delta f: \mathfrak{g}\otimes \mathfrak{g}\otimes \mathfrak{g}\longrightarrow \mathbb{R}, \qquad  \delta f(x,y,z)=f([x,y],z)+<cyc>.
\end{equation*}
 In coordinates, we read the present condition as
\begin{equation}
    \label{compa1}c^{ij}_sf^{sk}+c^{jk}_sf^{si}+c^{ki}_sf^{sj}=0.
\end{equation}
Moreover, we say that a scalar product $\langle\, ,\, \rangle$ is compatible with a Lie algebra $\mathfrak{g}$ if for all $x,y,z\in\mathfrak{g}:$
$   \langle \text{ad}_xy,z\rangle=\langle x,\text{ad}_zy\rangle$,
whose coordinate description in our context is
\begin{equation}\label{compa2}
    \eta^{is}c^{jk}_s+\eta^{js}c^{ik}_s=0.
\end{equation}

Analogously, let us recall that given  
$\Lambda^1 \galg^* = \Span\Set{\theta_1,\dots,\theta_{n}}$, then:
\begin{equation}
    \ud \theta_k = -\frac{1}{2} c^{ij}_{k}\theta_{i}\wedge\theta_{j}.
    \label{eq:extdev}
\end{equation}
So that, we can use \eqref{eq:extdev} to express the 2-cocycle condition in coordinate-free form. Indeed, it is easy to show that given $f\in\Lambda^2\mathfrak{g}^*$, $f$ is a 2-cocycle if and only if $\ud f=0$. To this aim, let us consider a 2-form $f=f^{ij}\theta_i\wedge \theta_j$, so that 
\begin{align}\begin{split}
    \ud f&= f^{ij}\, \ud \theta_i \wedge \theta_j - f^{ij}\, \theta_i\wedge \ud \theta_j\\
    &=-\frac{1}{2}f^{ij}c_{i}^{as}\, \theta_a\wedge \theta_s \wedge \theta_j +\frac{1}{2} f^{ij}c_{j}^{as} \, \theta_i\wedge \theta_a\wedge \theta_s\\
    &=f^{ij}c_{j}^{as}\, \theta_i\wedge \theta_a\wedge \theta_s.\end{split}
\end{align}

Finally, the previous expression reads as \eqref{compa1}.

\begin{remark}
    We remark that in the majority of references 
    for Lie algebras, see e.g.  {\cite{Jacobson1962,Kirillov2008}},
    the structure constants on $\mathfrak{g}$ are taken as tensors of type $(1,2)$, 
    i.e. they are indicated by $c^i_{jk}$. In the previous formulas we 
    use the contravariant description of Poisson bivectors $\mathcal{P}^{ij}$ with 
    upper indices. However, since all results on Lie algebras are 
    invariant with respect to linear transformations, the choice of upper or
    lower indices is inessential.
    \label{rem:contravariant}
\end{remark}

Here we present an example of system coming from the theory of interacting waves which admits a Hamiltonian structure of the described type.  

\begin{example}\label{3waves}A physical example of system admitting a Hamiltonian structure with non-homogeneous hydrodynamic type operators is given by the 3-waves equations \cite{mokhov98:_sympl_poiss}:
\begin{equation}\begin{cases}\label{3wav}
u^1_t=-c_1u^1_x-2(c_2-c_3)u^2u^3\\[1.2ex]
u^2_t=-c_2u^2_x-2(c_1-c_3)u^1u^3\\[1.2ex]
u^3_t=-c_3u^3_x-2(c_2-c_1)u^1u^2\end{cases} \,,
\end{equation}
that is Hamiltonian with the operator
\begin{equation}C^{\,ij}=
\begin{pmatrix}
1&0&0\\0&-1&0\\0&0&-1
\end{pmatrix}\partial_x+\begin{pmatrix}
0&-2u^3&2u^2\\
2u^3&0&2u^1\\-2u^2&-2u^1&0
\end{pmatrix} \,.
\end{equation}
and the Hamiltonian density $h(u^1,u^2,u^3)=(u^1)^2-(u^2)^2-(u^3)^2$.
\end{example}

\vspace{3mm}

Before investigating the algebraic interpretation of non-homogeneous operators in Darboux form, we show that they are invariant under linear transformations of the field variables $u^1,\dots u^n$. Such type of transformations are the natural ones in the theory of Lie algebras, so we are allowed to consider theoretical results coming from Lie algebras and extend them to the context of non-homogeneous Hamiltonian structures and vice versa. We start from the following very well-known Lemma:

\begin{lemma}\label{lem:changebasis}
The structure constants of an $n$-dimensional Lie algebra $\mathfrak{g}$ are tensors invariant under invertible linear transformations. That is given $A=(a^i_j)\in\GL(n)$ and $B=(b^i_j)$ its inverse, the 
 transformation $ \tilde{u}^i=T^i(u)=a^i_lu^l$ acts on the structure constants in the following way
\begin{equation}
    \tilde{c}^{ij}_k=a^i_la^j_mc^{lm}_sb^s_k.
\end{equation}  
In addition, the constants $\tilde{c}^{ij}_k$ define uniquely a Lie algebra structure isomorphic to $\mathfrak{g}$.
\end{lemma}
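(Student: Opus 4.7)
The plan is to derive the transformation rule by a direct tensor computation and then exhibit the isomorphism explicitly. The cleanest way to set things up is to interpret the structure constants as living inside the linear Lie--Poisson bivector $\omega^{ij} = c^{ij}_k u^k$, as in Corollary \ref{cor:mokhov10}, so that the whole statement becomes a statement about how this bivector transforms under a linear coordinate change.

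Under $\tilde{u}^i = a^i_l u^l$ the bivector transforms contravariantly,
\begin{equation*}
    \tilde{\omega}^{ij}(\tilde{u}) = \frac{\partial \tilde{u}^i}{\partial u^p}\frac{\partial \tilde{u}^j}{\partial u^q}\, \omega^{pq}(u) = a^i_p\, a^j_q\, c^{pq}_r\, u^r ,
\end{equation*}
and substituting $u^r = b^r_k \tilde{u}^k$ reads off the claimed formula $\tilde{c}^{ij}_k = a^i_l a^j_m c^{lm}_s b^s_k$. This argument simultaneously shows that $\tilde c^{ij}_k$ again defines a Lie bracket: antisymmetry in $i,j$ is visible in the formula, and Jacobi is automatic because $\tilde{\omega}$ is again a Poisson bivector (the vanishing of the Schouten bracket is coordinate-invariant) and linear in $\tilde{u}$. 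For readers who prefer a direct verification, one can compute the cyclic sum $\sum_{(i,j,k)}\tilde c^{ij}_s\tilde c^{sk}_l$ and collapse the middle pair using $b^s_\alpha a^\alpha_\beta = \delta^s_\beta$ to reduce it to Jacobi for $c^{ij}_k$.

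It remains to exhibit the isomorphism. View $\mathfrak{g}$ and $\tilde{\mathfrak{g}}$ as the abstract Lie algebras on $\mathbb{R}^n$ with bases $\{x_l\}$ and $\{\tilde{x}_i\}$ and brackets $[x_l,x_m]_{\mathfrak{g}} = c^{lm}_r x_r$, $[\tilde{x}_i,\tilde{x}_j]_{\tilde{\mathfrak{g}}} = \tilde c^{ij}_k \tilde{x}_k$. Define $\phi:\tilde{\mathfrak{g}}\to\mathfrak{g}$ by $\phi(\tilde{x}_i)=a^i_l x_l$; since $A\in\GL(n)$, this is a linear isomorphism. Then
\begin{equation*}
    \phi([\tilde{x}_i,\tilde{x}_j]_{\tilde{\mathfrak{g}}}) = \tilde c^{ij}_k a^k_l x_l = a^i_p a^j_q c^{pq}_r (b^r_k a^k_l) x_l = a^i_p a^j_q c^{pq}_r x_r = [\phi(\tilde{x}_i),\phi(\tilde{x}_j)]_{\mathfrak{g}} ,
\end{equation*}
which is the desired intertwining. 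Intuitively, this is just the statement that $\mathfrak{g}$ and $\tilde{\mathfrak{g}}$ are the same Lie algebra expressed in two different bases.

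No serious obstacle is expected; the only thing requiring mild care is keeping covariant/contravariant index positions consistent, and this is precisely what the Poisson bivector viewpoint makes transparent.
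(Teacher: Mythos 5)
Your proposal is correct, and there is nothing in the paper to compare it against: the authors state Lemma \ref{lem:changebasis} as a ``very well-known'' fact and give no proof, using it only as input for Theorem \ref{thm:trans}. Your two-step argument — reading the transformation rule off the contravariant transformation of the Lie--Poisson bivector $\omega^{ij}=c^{ij}_ku^k$, and then exhibiting the intertwiner $\phi(\tilde{x}_i)=a^i_l x_l$ — is sound; the index bookkeeping in both displays checks out, and the sketched direct Jacobi verification (collapsing $b^r_s a^s_t=\delta^r_t$ in the cyclic sum) works. The only remark worth making is that the Poisson/Schouten detour, while consonant with the paper's viewpoint (cf.\ Corollary \ref{cor:mokhov10} and Remark \ref{rem:contravariant}), is not needed: if one simply observes that $\tilde{e}^i=a^i_l e^l$ is a new basis of the \emph{same} algebra $\mathfrak{g}$ and computes $[\tilde{e}^i,\tilde{e}^j]=a^i_la^j_m c^{lm}_s e^s=a^i_la^j_m c^{lm}_s b^s_k\,\tilde{e}^k$, then the transformation formula, antisymmetry, the Jacobi identity and the isomorphism statement are all obtained at once, since $\tilde{c}^{ij}_k$ are by construction structure constants of $\mathfrak{g}$ itself in the new basis. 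Your version proves the same thing with a little redundancy (you establish the Lie-algebra axioms for $\tilde{c}$ separately and then the isomorphism); what it buys is that the tensorial transformation law is exactly the one used later for $\eta$ and $\omega$ in the proof of Theorem \ref{thm:trans}, so the two statements are visibly instances of one computation.
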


So that the invariance can be explicitly proved:

\begin{theorem}\label{thm:trans}
    Invertible linear transformations of the field variables preserve the Darboux form of a non-homogeneous operator.
\end{theorem}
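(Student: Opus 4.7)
The plan is to show by direct computation that, under an invertible linear change of coordinates $\tilde u^i = a^i_l u^l$ with inverse $u^i = b^i_l \tilde u^l$, a non-homogeneous operator of the form
\begin{equation*}
\mathcal{A}^{ij} = \eta^{ij}\partial_x + c^{ij}_k u^k + f^{ij},
\end{equation*}
with $\eta^{ij},c^{ij}_k,f^{ij}$ constant, is transformed into another operator of the same type in the variables $\tilde u^i$.

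The first step is to recall the transformation law for a Hamiltonian operator under a change of the field variables. For a general diffeomorphism $\tilde u = T(u)$, the transformation of a first-order operator produces correction terms involving the second derivatives of $T$, while the zeroth order contribution transforms tensorially. Since here $T$ is linear, all second derivatives vanish, so the transformation reduces to the purely tensorial rule
\begin{equation*}
\tilde{\mathcal{A}}^{ij} = a^i_l\, a^j_m\, \mathcal{A}^{lm},
\end{equation*}
with $\partial_x$ left untouched because $a^i_j$ are constants and commute with $\partial_x$.

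Next, I apply this rule term by term. The leading coefficient becomes $\tilde\eta^{ij} = a^i_l a^j_m \eta^{lm}$, which is again a constant matrix, so the first-order part is still of the form $\tilde\eta^{ij}\partial_x$ without Christoffel-type corrections; equivalently, all Christoffel symbols of $\tilde\eta^{ij}$ vanish, since a constant matrix is a flat metric in linear coordinates. For the zeroth-order part I write
\begin{equation*}
a^i_l a^j_m\bigl(c^{lm}_k u^k + f^{lm}\bigr) = \bigl(a^i_l a^j_m c^{lm}_k\bigr) b^k_p\, \tilde u^p + a^i_l a^j_m f^{lm}.
\end{equation*}
Using \Cref{lem:changebasis} the first coefficient is exactly the transformed structure constant $\tilde c^{ij}_p$ of an isomorphic Lie algebra, while $\tilde f^{ij} := a^i_l a^j_m f^{lm}$ is constant. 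Hence
\begin{equation*}
\tilde{\mathcal{A}}^{ij} = \tilde\eta^{ij}\partial_x + \tilde c^{ij}_p \tilde u^p + \tilde f^{ij},
\end{equation*}
which is the Darboux form in the new coordinates.

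The only thing left to check is that the algebraic conditions of \Cref{cor:mokhov10} are preserved, i.e.\ that $\tilde\eta$ is a compatible scalar product for the isomorphic Lie algebra with constants $\tilde c^{ij}_k$, and that $\tilde f$ is still a 2-cocycle on it. This is automatic: compatibility \eqref{compa2} and the cocycle identity \eqref{compa1} are tensorial relations, preserved by the $\GL(n)$-action via \Cref{lem:changebasis}. The only real subtlety in the whole argument is the absence of derivative-of-$T$ corrections in the transformation of $\mathcal{A}_1$, which is what forces the use of \emph{linear} (rather than arbitrary) changes of coordinates; once this is established, the rest is a direct index manipulation.
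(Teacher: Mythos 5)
Your proof is correct and follows essentially the same route as the paper: apply the $(2,0)$-tensor transformation rule to $\eta$ and $\omega$ under the constant linear map, use \Cref{lem:changebasis} to recognise $\tilde c^{ij}_k$ as the structure constants of an isomorphic Lie algebra, and conclude that the transformed data again satisfy the conditions of \Cref{cor:mokhov10}. The only difference is that the paper checks \eqref{compa2} and \eqref{compa1} for $\tilde\eta$ and $\tilde f$ by explicit index manipulation, whereas you invoke the tensorial nature of those identities under the $\GL(n)$-action — a legitimate shortcut covering the same ground.
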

\begin{proof}
    Let us first recall that $\eta^{ij}$ and $\omega^{ij}$ transform as $(2,0)$ tensors, i.e. they satisfy
    \begin{equation}\label{trasf}
        \tilde{\eta}^{ij}=\frac{\partial \tilde{u}^i}{\partial u^k}\eta^{kl}\frac{\partial \tilde{u}^j}{\partial u^l}, \qquad \tilde{\omega}^{ij}=\frac{\partial \tilde{u}^i}{\partial u^k}\omega^{kl}\frac{\partial \tilde{u}^j}{\partial u^l}.
    \end{equation}
    Then, let us consider an invertible linear transformation of the field variables
    \begin{equation}
        \tilde{T}: u^i\mapsto \tilde{u}^i=a^i_lu^l,
    \end{equation} where $a^i_j$ are constants. Now, let us note that $\eta^{ij}\in\mathbb{R}$ are symmetric and $\omega^{ij}=c^{ij}_ku^k+f^{ij}$ as in Corollary \ref{corollary1}. By applying the transformation rule \eqref{trasf}, we have
\begin{align*}
&\tilde{\eta}^{ij} = a_k^i \eta^{kl}a_l^j,
\\
&\tilde{\omega}^{ij} = a_k^i(c_s^{kl}u^s + f^{kl})a_l^j = a_k^i c_s^{kl} a_l^j u^s + a_k^i f^{kl}a_l^j = a_k^i c_s^{kl} a_l^j b_r^s u^r + a_k^i f^{kl}a_l^j,
\end{align*}
where $b_r^s$ are the components of the inverse transformation matrix. Notice that, due to Lemma \ref{lem:changebasis},  $\tilde{c}_r^{ij} = a_k^i a_l^j b_r^s c_s^{kl}$  represent the structure constants of a Lie algebra isomorphic to the original one. Moreover, we can see that $\tilde{\eta}$ is a scalar product (symmetry follows directly) which is compatible with the Lie algebra defined by $\tilde{c}_k^{ij}$ 
\begin{align*}
\tilde{\eta}^{is} \tilde{c}_s^{jk} + \tilde{\eta}^{js} \tilde{c}_s^{ik} & = (a_\alpha^i \eta^{\alpha\beta}a_\beta^s)(a_{\alpha^\prime}^j a_{\beta^\prime}^k c_\gamma^{\alpha^\prime \beta^\prime} b_s^\gamma) + (a_\varphi^j \eta^{\varphi\theta}a_\theta^s)(a_{\varphi^\prime}^i a_{\theta^\prime}^k c_{\gamma^\prime}^{\varphi^\prime \theta^\prime}b_s^{\gamma^\prime}) \\
& = a_\alpha^i a_{\alpha^\prime}^j a_{\beta^\prime}^k \eta^{\alpha\beta} c_\beta^{\alpha^\prime \beta^\prime} + a_{\varphi^\prime}^i a_\varphi^j a_{\theta^\prime}^k \eta^{\varphi \theta} c_\theta^{\varphi^\prime \theta^\prime}  = 0,
\end{align*}
where in the last step we used condition \eqref{compa2}. Finally, setting $\tilde{f}^{ij} = a_k^if^{kl}a_l^j$, it is easy to verify that $\tilde{f}^{ji} = -\tilde{f}^{ij}$, and that
\begin{align*}
\tilde{c}_s^{ij} \tilde{f}^{sk} + \tilde{c}_s^{jk} \tilde{f}^{si} + \tilde{c}_s^{ki} \tilde{f}^{sj} & = a_\alpha^i a_\beta^j a_\gamma^k c_r^{\alpha\beta} f^{r\gamma} + a_\alpha^j a_\beta^k a_\gamma^i c_r^{\alpha\beta} f^{r\gamma} + a_\alpha^k a_\beta^i a_\gamma^j c_r^{\alpha\beta} f^{r \gamma} \\
& = -a_\alpha^i a_\beta^j a_\gamma^k c_r^{\beta\gamma}f^{r\alpha} - a_\alpha^i a_\beta^j a_\gamma^k c_r^{\gamma \alpha} f^{r\beta}  \\
&\hphantom{ciaociao}+ a_\alpha^j a_\beta^k a_\gamma^ic_r^{\alpha\beta}f^{r\gamma} + a_\alpha^k a_\beta^i a_\gamma^j  c_r^{\alpha\beta} f^{r\gamma}  \\
&= -a_\alpha^i a_\beta^j a_\gamma^k c_r^{\beta\gamma}f^{r\alpha} - a_\alpha^i a_\beta^j a_\gamma^k c_r^{\gamma \alpha} f^{r\beta}  \\
&\hphantom{ciaociao}+ a_\alpha^i a_\beta^j a_\gamma^k c_r^{\beta\gamma}f^{r\alpha} + a_\alpha^i a_\beta^j a_\gamma^k c_r^{\gamma \alpha} f^{r\beta}   = 0,
\end{align*}
where in the last step we used condition \eqref{compa1} and we re-labeled the indices.
\end{proof}

\vspace{3mm}

% allow to investigate operators of type $1+0$ in Darboux form up to linear transformations of the field variables, such change of coordinates reflects directly on the Lie algebra structure as an isomorphism of Lie algebras and viceversa. 

The present results allow to interpret in purely algebraic terms the Hamiltonian operators on which we are focusing.

\subsection{Algebraic interpretation of 1+0 operators in Darboux form}

In accordance with the previous paragraph, we provide a purely algebraic reformulation of the \Cref{cor:mokhov10}.
To do so, we recall some known definitions from the literature. For references, we address to the well-known textbooks 
\cite{Jacobson1962,Kirillov2008,FultonHarris1991} and the comprehensive monograph 
{\cite{snowin}}. 

Let us consider an $n$-dimensional Lie algebra 
$\galg$ over a field $\mathbb{K}$. Associated to $\galg$, 
there is the associative algebra $U\galg$ which can be constructed as follows:
\begin{equation}
    U\galg = \bigoplus_{k=0}^{\infty} \galg^{\otimes k}/J,
    \quad\quad 
    \galg^0 = \mathbb{K},
\end{equation}
where $J$ denotes the ideal:
\begin{equation}
    J= I(x y - y x - [x,y])_{x,y\in\galg}.
\end{equation}
In $U\galg$ the Lie bracket is given, as in any associative algebra, by
the usual commutator $[x,y]_{U\galg} = x y - y x$. This structure is also known as \emph{universal enveloping algebra} (UEA).

\begin{definition}\label{defcasel}
    An element $C\in U\galg$ is said to be a \emph{Casimir element} if is
    invariant with respect to the adjoint action of $\galg$, defined by extending 
    the adjoint action of $\galg$ on $U\mathfrak{g}$.
\end{definition}

Then, Casimir elements have the following characterisation:

\begin{proposition}[\cite{Kirillov2008}]
    Given a Lie algebra $\galg=\Span\Set{ e^1,\ldots,e^n}$, the space of Casimir elements of $U\galg$ coincides with the
    centre $Z(U\galg)$, i.e. the set of elements $C\in U\galg$
    commuting with \emph{all other elements}:
    \begin{equation}
    [C,e^i]_{U\galg} = Ce^i - e^i C =0, \quad
    i = 1,\ldots, n.
    \label{eq:casdefei}
\end{equation}
\end{proposition}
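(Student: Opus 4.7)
The plan is to unwind the definitions and then prove the two inclusions separately. The adjoint action of $x\in\galg$ on $U\galg$ is, by construction, the one obtained by extending $\text{ad}_x y = [x,y]$ using the associative multiplication, so it coincides with the commutator $\text{ad}_x(u) = xu - ux$ for any $u\in U\galg$. Consequently, an element $C\in U\galg$ is Casimir in the sense of \Cref{defcasel} if and only if $e^i C - C e^i = 0$ for every basis vector $e^i$ of $\galg$. This already reduces the statement to showing that commuting with every $e^i$ is equivalent to belonging to the centre $Z(U\galg)$.

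One direction is immediate: if $C\in Z(U\galg)$, then in particular $Ce^i = e^i C$ for each $i$, so $C$ is Casimir. For the reverse inclusion, I would exploit the fact that $\galg$ generates $U\galg$ as an associative algebra, a fact which is built into the construction $U\galg = \bigoplus_k \galg^{\otimes k}/J$ (and is made precise by the Poincar\'e--Birkhoff--Witt theorem). Hence every element $u\in U\galg$ can be written as a finite linear combination of monomials $e^{i_1}e^{i_2}\cdots e^{i_k}$. Assuming $e^i C = C e^i$ for all $i$, I would prove by induction on the length $k$ of such a monomial that
\begin{equation*}
    (e^{i_1}\cdots e^{i_k})\,C = C\,(e^{i_1}\cdots e^{i_k}),
\end{equation*}
the base case being the hypothesis and the inductive step amounting to sliding $C$ past one generator at a time using associativity. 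By linearity this extends to every $u\in U\galg$, so $C\in Z(U\galg)$.

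There is no real obstacle here: the statement is a direct consequence of two structural facts about $U\galg$, namely that $\galg$ generates it as an associative algebra and that the adjoint action of $\galg$ is by derivations (so controlled entirely by its effect on generators). Both properties are tautological from the definition of $U\galg$ recalled just before the proposition, so the argument is essentially bookkeeping once they are invoked. The only point that deserves care is to note that $e^i$ is identified with its image in $U\galg$ under the canonical map $\galg\hookrightarrow U\galg$, which is injective (this is the nontrivial content of PBW), so that the condition \eqref{eq:casdefei} restricted to generators is genuinely equivalent to the defining invariance condition of \Cref{defcasel}.
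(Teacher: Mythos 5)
Your proof is correct. The paper itself offers no proof of this proposition—it is quoted directly from Kirillov's textbook—and your argument is precisely the standard one: the extended adjoint action of $x\in\galg$ on $U\galg$ is the commutator $u\mapsto xu-ux$ (two derivations agreeing on generators coincide), so invariance is equivalent to commuting with each $e^i$, and since $\galg$ generates $U\galg$ as an associative algebra an induction on monomial length upgrades this to commuting with all of $U\galg$. The closing remark on injectivity of $\galg\hookrightarrow U\galg$ is harmless but not actually needed, since condition \eqref{eq:casdefei} is stated for the images of the $e^i$ in $U\galg$ anyway.
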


%Those elements are called \emph{Casimir elements of $\galg$}, and satisfy the 
%conditions: $\galg$:

In what follows a particular r\^ole will be played by the so-called quadratic 
Casimir polynomials whose coordinate expression is the following
\begin{equation}\label{casimir}
    C = \frac{a_{ij}}{2}\left(e^i e^j+e^j e^i\right),
\end{equation}
where $a_{ij}\in\mathbb{K}$. Additionally, a quadratic Casimir
polynomial $C$ is said to be non-degenerate if the associated symmetric 
matrix $(a_{ij})_{i,j=1,\ldots,n}\in M_{n,n}(\mathbb{K})$ is non-degenerate, 
i.e. $\det (a_{ij})\neq 0$. We then recall the following proposition which shows how to explicitly 
compute the quadratic Casimir polynomials of a given Lie algebra $\galg$:

\begin{proposition}[\cite{Kirillov2008}]\label{prop:casimircalc}
    Let $\galg$ be an $n$-dimensional real Lie algebra
    with structure constants $c^{ij}_{k}$. Then, an element $C\in U\galg$ is
    a quadratic Casimir polynomial if and only if it satisfy the following 
    algebraic equations:
    \begin{equation}
        a_{is}c^{sk}_j + a_{js}c^{sk}_{i} = 0.
        \label{eq:quadcascond}
    \end{equation}
\end{proposition}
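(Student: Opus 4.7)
The plan is to translate the condition $C\in Z(U\galg)$ into an algebraic condition on the symmetric matrix $(a_{ij})$ by computing the commutator $[C,e^k]_{U\galg}$ explicitly and demanding that it vanishes for every generator $e^k$. By the preceding proposition it suffices to check commutation with each generator, since the Leibniz rule $[ab,c]=a[b,c]+[a,c]b$ then propagates the commutation to every monomial in $U\galg$.

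First, I would apply the Leibniz rule twice together with the defining relation $[e^i,e^j]_{U\galg}=c^{ij}_{r}\,e^{r}$ to obtain
\[
[e^i e^j + e^j e^i,\,e^k]_{U\galg} \;=\; c^{jk}_s\,(e^i e^s + e^s e^i) + c^{ik}_s\,(e^j e^s + e^s e^j).
\]
Multiplying by $a_{ij}/2$ and summing over $i,j$, the symmetry $a_{ij}=a_{ji}$ identifies the two pieces after the dummy swap $i\leftrightarrow j$, giving the compact expression
\[
[C,e^k]_{U\galg} \;=\; a_{ij}\,c^{jk}_s\,(e^i e^s + e^s e^i).
\]

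Next, to detect vanishing in $U\galg$ I would expand this in the Poincar\'e--Birkhoff--Witt basis. Splitting the sum over $(i,s)$ into the cases $i=s$, $i<s$ and $i>s$ and using the reordering rule $e^i e^s = e^s e^i + c^{is}_r\,e^r$ for $i>s$, the coefficient of every quadratic PBW monomial turns out to be (a scalar multiple of) the symmetric tensor $S^{is}:=a_{ij}c^{jk}_s+a_{sj}c^{jk}_i$, while the reordering also produces a linear tail proportional to $\sum_{i>s}S^{is}c^{is}_r\,e^r$. Vanishing of all quadratic coefficients is therefore equivalent to $S^{is}=0$ for every pair $i,s$, which in turn automatically kills the linear tail. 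Hence $[C,e^k]_{U\galg}=0$ for all $k$ if and only if $S^{is}=0$, and after renaming the dummy index this is exactly condition \eqref{eq:quadcascond}.

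The main obstacle is pure bookkeeping: one has to pass to the PBW basis before reading off conditions, since $e^i e^s + e^s e^i$ is not itself a basis element and hides a linear remainder. The feature that closes the argument cleanly is that the same tensor $S^{is}$ controls both the quadratic and the linear contributions, so no extra equations (in particular no use of the Jacobi identity) are needed beyond \eqref{eq:quadcascond}.
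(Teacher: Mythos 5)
Your proof is correct and follows essentially the same route as the paper, which only sketches the argument in one line (``use the definition of Casimir and the relation $e^ie^j-e^je^i=c^{ij}_k e^k$ in $U\galg$''): you expand exactly that sketch via the Leibniz rule and the symmetry of $a_{ij}$, and your PBW step rightly supplies the linear-independence argument needed for the necessity direction, with the observation that the induced linear tail vanishes automatically. No gaps.
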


The proof of this fact follows from using the definition of Casimir and
the fact that on $U\galg$ we have 
   $ e_ie_j - e_je_i - c_{ij}^{k} e_k \equiv 0.$
   
Using the previous preliminary definitions and results, we obtain the main theorem of this section: this gives a concrete interpretation of $1+0$ HOs in the context of Lie algebras.

\begin{theorem}\label{thm:principale}
    Non-degenerate quadratic Casimir invariants of a Lie algebra are in bijective correspondence with scalar products which are compatible with the same algebra. In particular, the matrix associated to the bilinear form given by the Casimir is the inverse of the one associated to the scalar product.
\end{theorem}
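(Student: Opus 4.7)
The plan is to show that the Casimir condition \eqref{eq:quadcascond} for a symmetric non-degenerate matrix $(a_{ij})$ and the compatibility condition \eqref{compa2} for a scalar product $(\eta^{ij})$ are converted into one another by plain matrix inversion, so that the assignment $A\mapsto A^{-1}$ furnishes the required bijection and in passing pins down the explicit relation stated in the second sentence of the theorem.

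I would begin by fixing a basis of $\galg$ with structure constants $c^{ij}_k$ as in \Cref{rem:contravariant}, then identifying a non-degenerate quadratic Casimir $C$ with its symmetric non-degenerate matrix $A=(a_{ij})$, and defining $\eta^{ij}$ by $\eta^{is}a_{sj}=\delta^i_j$. Since the inverse of a symmetric non-degenerate matrix is again symmetric and non-degenerate, $(\eta^{ij})$ automatically qualifies as a non-degenerate symmetric bilinear form, and this same observation makes the correspondence well-defined in both directions.

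The core of the argument is a single contraction. Starting from \eqref{eq:quadcascond}, namely $a_{is}c^{sk}_j + a_{js}c^{sk}_i = 0$, I would multiply by $\eta^{ai}\eta^{bj}$: the first term collapses to $\delta^a_s\,\eta^{bj}\,c^{sk}_j = \eta^{bj}c^{ak}_j$, and by the same mechanism the second becomes $\eta^{ai}c^{bk}_i$. After relabelling the dummy indices this is exactly \eqref{compa2}, with the roles of $a,b$ playing the free indices $i,j$ there. Conversely, contracting \eqref{compa2} with $a_{ai}a_{bj}$ reverses the operation and returns \eqref{eq:quadcascond}, so the two systems of equations are equivalent via the isomorphism $A\leftrightarrow A^{-1}$, and the bijectivity follows because matrix inversion is an involution on symmetric non-degenerate matrices.

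The only genuinely delicate point is index bookkeeping: one must check that the position of the ``free'' index $k$ on the structure constants matches in \eqref{eq:quadcascond} and \eqref{compa2} (it does), and that the contravariant convention for $c^{ij}_k$ used throughout the paper is the one implicitly adopted in \Cref{prop:casimircalc}. No substantive obstacle is expected; the proof is ultimately the observation that the two conditions express the same tensor equation, written once on $\galg^{\ast}\otimes\galg^{\ast}$ and once on $\galg\otimes\galg$, related by the musical isomorphism induced by $A$.
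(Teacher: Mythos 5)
Your proposal is correct and follows essentially the same route as the paper's own proof: define $\eta^{ij}$ as the inverse of $a_{ij}$, contract the quadratic Casimir condition \eqref{eq:quadcascond} with two copies of $\eta$ to obtain the compatibility condition \eqref{compa2}, and reverse the contraction with $a$ for the converse. The index bookkeeping in your contraction checks out, so nothing further is needed.
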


\begin{proof}
    Let $C$ be a quadratic and non-degenerate Casimir for $\mathfrak{g}$.
    Then, from Proposition \ref{prop:casimircalc} its coefficients satisfy
    the system \eqref{eq:quadcascond}. By the non-degeneracy assumption, 
    let $\eta^{ij}$ be the inverse  matrix of $a_{ij}$, i.e. $\eta^{is}a_{sj}=\delta_j^i$.
    Contracting \eqref{eq:quadcascond} with $\eta^{li}\eta^{mj}$ we 
    obtain:
    \begin{equation}
        \eta^{mj}c_j^{lk}+\eta^{li}c_i^{mk}=0,
        \label{eq:contra}
    \end{equation}
    i.e. 
    it is given by \eqref{compa2}. So $\eta$ defines a scalar product compatible 
    with the Lie algebra $\mathfrak{g}$. Noting that we can reverse all the steps in the previous reasoning, the Theorem is proved.
\end{proof}
As a corollary, we obtain the following:
\begin{corollary}The number of free parameters in the general form of $\eta$ for fixed Lie algebra $\mathfrak{g}$ coincides with the dimension of the space of non-degenerate quadratic Casimirs of $\mathfrak{g}$.
\end{corollary}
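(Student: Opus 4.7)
The plan is to reduce the corollary to a direct consequence of Theorem \ref{thm:principale}. First, I would observe that both conditions \eqref{compa2} and \eqref{eq:quadcascond} are linear in the entries of the symmetric matrix they constrain. Therefore, the set $V_\eta \subset \text{Sym}_n$ of compatible scalar products and the set $V_a \subset \text{Sym}_n$ of quadratic Casimir polynomials are linear subspaces of the symmetric $n \times n$ matrices; the ``number of free parameters in the general form of $\eta$'' is then simply $\dim V_\eta$, while the ``dimension of the space of non-degenerate quadratic Casimirs'' is $\dim V_a$.

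Next, I would invoke Theorem \ref{thm:principale}: matrix inversion $a \mapsto a^{-1}$ realises a bijection between the non-degenerate elements of $V_a$ and the non-degenerate elements of $V_\eta$. Assuming at least one non-degenerate element exists in either space (which is the meaningful case---otherwise the statement is vacuously $0=0$, since Theorem \ref{thm:principale} already shows existence on one side implies existence on the other), the condition $\det \neq 0$ cuts out a Zariski-open dense subset of each linear space, so these subsets are smooth manifolds of dimension equal to that of the ambient linear space. Matrix inversion is a diffeomorphism on $\GL(n)$, hence restricts to a diffeomorphism between the two non-degenerate open subsets. Since diffeomorphic manifolds have equal dimension, $\dim V_\eta = \dim V_a$.

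The main delicate point is precisely that the bijection supplied by Theorem \ref{thm:principale} is \emph{non-linear} (it is matrix inversion), so one must take some care in translating it into equality of dimensions of the ambient linear spaces. The manifold-dimension argument above bypasses this. Alternatively, I would exhibit a genuine linear isomorphism $V_a \to V_\eta$ by fixing a non-degenerate $a^* \in V_a$ with inverse $\eta^* \in V_\eta$ and setting $\Psi(a)^{ij} = \eta^{*ik}\, a_{kl}\, \eta^{*lj}$. Verifying that $\Psi$ maps $V_a$ into $V_\eta$ is a short index computation in which the two $\eta^*$ factors are used once via the compatibility condition \eqref{compa2} for $\eta^*$ to swap the position of the lower index of $c$ (from the ``Casimir position'' of \eqref{eq:quadcascond} to the ``compatibility position'' of \eqref{compa2}), after which the Casimir condition on $a$ closes the identity. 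The inverse of $\Psi$ is the analogous map $\eta \mapsto a^* \eta a^*$. I do not expect any further obstacle, as the remainder is a tautology once the linear structures are in place.
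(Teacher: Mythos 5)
Your proposal is correct and follows the same route as the paper, which states the corollary as an immediate consequence of the bijection in Theorem \ref{thm:principale} (matrix inversion between non-degenerate quadratic Casimirs and compatible scalar products) without writing out a proof. The extra care you take with the non-linearity of inversion --- either via the open-dense (non-degenerate) loci of the two linear solution spaces of \eqref{eq:quadcascond} and \eqref{compa2}, or via the explicit linear isomorphism $a\mapsto \eta^{*}a\,\eta^{*}$ --- correctly fills in the dimension-count detail the paper leaves implicit.
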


\subsection{Final remarks on Casimir functionals of 1 + 0 HOs}

A clarification is finally needed on the distinction between Casimir functionals and the Casimir elements of a Lie algebra, as respectively introduced in definitions \ref{casy} and \ref{defcasel}. In order to do that, we remark that they coincide only for linear Poisson brackets, i.e. Poisson brackets defined by a Lie algebra (see \cite[Chapter 4]{PC1}). In those cases, we have 
\begin{equation}
    \{f,g\}_\omega=\frac{\partial f}{\partial u^i}\omega^{ij}\frac{\partial g}{\partial u^j}=\frac{\partial f}{\partial u^i}c^{ij}_su^s\frac{\partial g}{\partial u^j}
\end{equation}
so that a Casimir function $C(u)$ satisfies 
 \begin{equation}\label{34567}
        c^{ij}_ku^k\frac{\partial C}{\partial u^j}=0.
    \end{equation}
It is remarkable that expression \eqref{34567} is equivalent to  \eqref{eq:casdefei} in the coadjoint representation of a Lie algebra $\galg$ in the space of vector fields of a manifold. This means that for Poisson tensors $\omega^{ij}=c^{ij}_ku^k$ (here $f^{ij}=0$ identically), the Casimirs of the algebra associated to $c^{ij}_k$ coincide with the Casimir functions of the ultralocal operator. 

In spite of this result, a similar statement does not hold for the whole $1+0$ HO. However, we can prove the following theorem: 
{
\begin{theorem}\label{gencasthm}
    In the non-degenerate case, operators of type $
1+0$ in Darboux form admit only linear Casimirs. Moreover, the Casimir functions $C(u)$ are linear combination of elements in the center $Z(\mathfrak{g})$ of the algebra satisfying $F\cdot \nabla C=0$, where $F$ is the matrix associated to the 2-cocycle. 
\end{theorem}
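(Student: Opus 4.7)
My plan is to write out the Casimir condition $\mathcal{A}^{ij}(\delta C/\delta u^j)=0$ in Darboux form and separate it by orders of differentiation in $x$. Writing the density as $c=c(u,u_x,u_{xx},\dots)$ and letting $\psi_j = \delta C/\delta u^j$, the condition becomes
\begin{equation*}
\eta^{ij}\,\partial_x \psi_j \,+\, (c^{ij}_k u^k + f^{ij})\,\psi_j \,=\, 0 .
\end{equation*}
The first step is to argue that $\psi_j$ must be a function of $u$ alone. I would introduce the natural grading $\deg u^i_{(k)}=k$: the operator $\mathcal{A}_1 = \eta^{ij}\partial_x$ strictly raises this degree by one, while $\mathcal{A}_0 = \omega^{ij}$ preserves it. Hence if $\psi_j$ had a nontrivial component of top degree $d\ge 1$, its image under $\mathcal{A}_1$ would give a nonvanishing contribution at degree $d+1$ that cannot be cancelled by $\mathcal{A}_0\psi$. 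Non-degeneracy of $\eta$ then forces that top-degree part to be $\partial_x$-closed and hence trivial, so inductively $\psi_j = \psi_j(u)$.

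With $\psi_j = \psi_j(u)$, I would split the identity above by the degree in $u^k_x$. The degree-one piece is $\eta^{ij}(\partial \psi_j/\partial u^k)\,u^k_x = 0$ for all $i,k$, and non-degeneracy of $\eta$ implies $\partial \psi_j/\partial u^k = 0$, so $\psi_j = \alpha_j$ is constant. Reconstructing $C$ from $\delta C/\delta u^j = \alpha_j$ (up to total $x$-derivatives, which do not affect the functional) gives $C = \int \alpha_j u^j\,dx$, establishing the first claim that only linear Casimirs exist.

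Substituting $\psi_j=\alpha_j$ into the remaining degree-zero piece gives $(c^{ij}_k u^k + f^{ij})\alpha_j = 0$ for every $u$. Separating the $u$-linear part from the constant part yields the two independent requirements
\begin{equation*}
 c^{ij}_k \alpha_j = 0 \quad \text{for all } i,k, \qquad f^{ij}\alpha_j = 0 \quad \text{for all } i .
\end{equation*}
Using the antisymmetry $c^{ij}_k=-c^{ji}_k$, the first is exactly the statement that $\alpha = \alpha_j e^j$ lies in the centre $Z(\mathfrak{g})$ (i.e.\ $[\alpha,\cdot]=0$), while the second is the matrix identity $F\alpha = F\nabla C = 0$. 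This gives the moreover part.

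The only delicate step is the grading argument in the first paragraph; the rest is a direct matching of polynomial coefficients. I would expect the referee to want some care here, because one must check that the elimination of $\partial_x$-exact leading terms is really legitimate inside a variational derivative (i.e.\ that $\psi_j$ is the Euler--Lagrange expression of a density rather than an arbitrary jet function). This is handled by the standard fact that in the hydrodynamic non-degenerate setting the Casimirs of $\mathcal{A}_1$ already have densities linear in $u$, together with the observation that adding the zeroth-order piece $\mathcal{A}_0$ cannot lower the differential order of a solution of $\mathcal{A}\psi=0$.
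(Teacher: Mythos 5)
Your proof is correct and, at its core, follows the same route as the paper: split the Casimir condition by order in $x$-derivatives, use the non-degeneracy of $\eta$ to force $C$ linear in $u$, and then separate the ultralocal condition $(c^{ij}_k u^k+f^{ij})\alpha_j=0$ into its $u$-linear part (the centre condition $c^{ij}_k\alpha_j=0$) and its constant part ($F\nabla C=0$). The one genuine difference is your opening reduction: the paper works from the outset with hydrodynamic densities $C(u)$ (its proof manipulates $\partial C/\partial u^j$ directly, and the statement's notation $C(u)$ reflects this), whereas you first show that any density depending on higher jets is excluded. That step is sound, but as phrased via the grading $\deg u^i_{(k)}=k$ it implicitly assumes $\psi_j$ is a differential polynomial; for a general smooth jet function the cleaner version is to look at the highest derivative order $m\ge 1$ on which $\psi_j$ depends, differentiate the identity with respect to $u^k_{(m+1)}$ to get $\eta^{ij}\,\partial\psi_j/\partial u^k_{(m)}=0$, and invoke invertibility of $\eta$ to descend inductively. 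Also, your closing worry about $\psi_j$ being an Euler--Lagrange expression is unnecessary: the argument applies to any solution of $\mathcal{A}^{ij}\psi_j=0$, and only at the end do you use $\psi_j=\delta C/\delta u^j=\alpha_j$ to reconstruct $C=\int\alpha_j u^j\,dx$ up to total derivatives. So your proposal proves a marginally stronger statement than the paper's argument, at the cost of that extra (easily repaired) technical step.
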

\begin{proof}
    Let $\mathcal{A}_1^{ij}=\eta^{ij}\partial_x$ be a Hamiltonian operator in Darboux form, i.e. $\eta$ is a symmetric matrix. Then, $C(u)$ is a Casimir for $\mathcal{A}_1$ if and only if
    \begin{equation}
        0=\mathcal{A}^{ij}_1\frac{\partial C}{\partial u^j}=\eta^{ij}\partial_x\frac{\partial C}{\partial u^j}=\eta^{ij}\frac{\partial^2 C}{\partial u^j\partial u^s}u^s_x,
    \end{equation}
    that is $C(u)=a_iu^i$, where $a_i$ are constants.

    We now remark that a hydrodynamic density $C(u)$ is a Casimir for the non-homogeneous operator $\mathcal{A}$ if and only if it is a Casimir of $\mathcal{A}_1$ and $\mathcal{A}_0$ simultaneously. Finally, we recall that a linear function is a Casimir of a linear Poisson tensor if 
    \begin{equation}
        0=\omega^{ij}\frac{\partial C}{\partial u^j}=\left(c^{ij}_ku^k+f^{ij}\right)a_j.
    \end{equation}
    Being the previous expression a polynomial in $u^\ell$, we obtain that $C$ must be a Casimir for the Lie algebra associated to $c^{ij}_k$ and must satisfy the additional hypothesis of the present Theorem. 
\end{proof}
\begin{remark}
If $C(u)=a_iu^i+a_0$, with $a_i\neq 0$ for every $i=1,2,\dots n$, then $u^i\in Z(\mathfrak{g})$, that is the Lie algebra is abelian. Then, $c^{ij}_k=0$ with the additional requirement that the following linear system must be satisfied in $a_i:$ $$f^{ij}a_j=0, \qquad i=1,2,\dots n.$$
In particular, the solution depends on $\text{rank}(f^{ij})$: in the most general case the rank is maximal, i.e. $\text{rank}(f^{ij})=n$ if $n$ is even and $\text{rank}(f^{ij})=n-1$ whenever $n$ is odd. Furthermore, in the even case (with general 2-cocycles) the only Casimir functions are constants.
\end{remark}

As a consequence of Theorem \ref{gencasthm} we obtain that 
\begin{corollary}\label{thmcas}
    In the non-degenerate case, operators of type 1+0 have no non-degenerate quadratic Casimir functions. 
\end{corollary}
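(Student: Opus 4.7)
The plan is to deduce the corollary almost immediately from Theorem \ref{gencasthm}, by contradiction. Concretely, suppose one had a non-degenerate quadratic Casimir function $C(u)$ of the $1+0$ operator $\mathcal{A}$. By ``quadratic non-degenerate'' I mean that the density can be written as
\begin{equation*}
    C(u) = \tfrac{1}{2} b_{ij}\, u^i u^j + a_i\, u^i + a_0,
\end{equation*}
with the symmetric matrix $(b_{ij})$ satisfying $\det(b_{ij})\neq 0$. The idea is that such a density cannot possibly be a Casimir because Theorem \ref{gencasthm} forces every Casimir of a non-degenerate $1+0$ operator in Darboux form to be linear in the field variables.

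More precisely, the first step is to recall that by Theorem \ref{gencasthm} the Casimir densities of $\mathcal{A}$ are of the form $C(u)=a_i u^i+a_0$, so the Hessian $\partial^2 C/\partial u^i\partial u^j$ vanishes identically. For the candidate $C$ above this Hessian equals $b_{ij}$, hence $b_{ij}=0$, contradicting $\det(b_{ij})\neq 0$. This finishes the proof. Since no calculation beyond invoking the previous theorem is required, there is essentially no obstacle: the corollary is a direct specialisation of Theorem \ref{gencasthm} to densities of polynomial degree two, highlighting the sharp contrast with the situation on the universal enveloping algebra side, where non-degenerate quadratic Casimir elements do exist and in fact classify the compatible scalar products via Theorem \ref{thm:principale}.
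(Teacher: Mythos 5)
Your argument is correct and coincides with the paper's own reasoning: the corollary is stated there as an immediate consequence of Theorem \ref{gencasthm}, since that theorem forces every Casimir density of a non-degenerate $1+0$ operator in Darboux form to be linear, which is exactly the contradiction you derive from a non-vanishing Hessian $b_{ij}$. Your concluding remark contrasting this with the quadratic Casimir elements of the universal enveloping algebra in Theorem \ref{thm:principale} is consistent with the paper's discussion distinguishing Casimir functionals from Casimir elements.
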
}
%\begin{theorem}\label{thmcas}
%    In the non-degenerate case, operators of type 1+0 have no non-degenerate quadratic Casimir functions. 
%\end{theorem}
%\begin{proof}{\color{red} Si puo generalizzare con l'hessiano al posto di quadratico}
%To prove the statement, we remark that a Casimir of the operator $\mathcal{A}$ must be simultaneously a Casimir for $\mathcal{A}_0$ and $\mathcal{A}_1$.  Let $C$ be a quadratic polynomial of the field variables, such that $C$ is a Casimir for the 0-order part. {This is equivalent to say that $C(u)$ is a quadratic Casimir of the Lie algebra $\mathfrak{g}$ associated to $\mathcal{A}_0$}. So, there exists a symmetric constant matrix $M$ such that 
%    \begin{equation}
%        C(u)=u^{\text{T}}Mu,
 %   \end{equation}where $u=(u^1,\dots, u^n)$.

%Note that by Theorem \ref{thm:principale}, the leading coefficient $g^{ij}$ is given by the inverse of a quadratic Casimir with associated matrix $G$.  Now, we observe that 
%\begin{equation}
 %   \nabla C(u)=\nabla u^{\text{T}}Mu+u^{\text{T}}M\nabla u=2Mu,
%\end{equation} therefore 
%\begin{equation}
%    \mathcal{A}_1\nabla C(u)=G^{-1}\partial_x(2Mu)=2G^{-1}Mu_x.
%\end{equation}
%Now, the resulting system $2G^{-1}Mu_x=0$ can be zero only if $\det(G^{-1}M)=0$, but this fails for Binet's rule. The statement finally follows directly. 

%\end{proof}

{From this point forward, we will not refer anymore to the Casimirs functionals of the operator. Therefore, whenever we will use this term we will refer to the Casimir elements of the Lie algebra associated to the structure constants $c^{ij}_k$.}

\section{Some relevant classes of Lie algebras and associated operators}\label{sec3}

In this Section we describe some classes of Lie algebras for which
we can prove, in full generality, that they admit a non-degenerate
quadratic Casimir polynomial. Based on \Cref{thm:principale} those classes will be associated
to a 1+0 non-homogeneous Hamiltonian operator. 
\subsection{Abelian Lie algebras}
\label{sss:abelian}

Let us begin with the trivial case of abelian Lie algebras. The following result holds true:

\begin{theorem}
    Let $n\mathfrak{n}_{1,1}=\Span\Set{ e^1,\ldots,e^n}$ be the $n$ dimensional 
    abelian Lie algebra.
    Then any quadratic form:
    \begin{equation}
        C = a_{ij}e^ie^j
        \label{eq:na1gen}
    \end{equation}
    is a quadratic Casimir.
    \label{thm:abelian}
\end{theorem}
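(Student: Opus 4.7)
The plan is essentially immediate via \Cref{prop:casimircalc}. Since the Lie algebra $n\mathfrak{n}_{1,1}$ is abelian, all of its structure constants vanish: $c^{ij}_{k}=0$ for every $i,j,k$. Therefore, the algebraic system \eqref{eq:quadcascond} characterising quadratic Casimirs, namely
\begin{equation*}
a_{is}c^{sk}_{j}+a_{js}c^{sk}_{i}=0,
\end{equation*}
reduces to $0=0$ and is satisfied for any choice of symmetric coefficients $a_{ij}$. Hence every quadratic form \eqref{eq:na1gen} is a Casimir element.

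As a sanity check, I would also mention the more conceptual reason: in the UEA, the defining relations $e^{i}e^{j}-e^{j}e^{i}=c^{ij}_{k}e^{k}$ collapse to $e^{i}e^{j}=e^{j}e^{i}$ in the abelian case, so $U(n\mathfrak{n}_{1,1})$ is the (commutative) symmetric algebra on $\mathfrak{g}$ and coincides with its own centre $Z(U\mathfrak{g})$. In particular, every quadratic polynomial in the generators lies in the centre, which is precisely the condition in \Cref{defcasel}.

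There is no real obstacle here: the statement is essentially a sanity check of the formalism before moving on to more interesting classes of Lie algebras. The only point worth emphasising for the reader is that the Casimir is non-degenerate precisely when the symmetric matrix $(a_{ij})$ is invertible, and in view of \Cref{thm:principale} any such choice yields a compatible scalar product $\eta^{ij}=(a_{ij})^{-1}$ on $n\mathfrak{n}_{1,1}$, hence a genuine $1+0$ Hamiltonian operator in Darboux form (whose ultralocal part is then purely a $2$-cocycle contribution, since $c^{ij}_{k}=0$).
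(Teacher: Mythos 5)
Your proof is correct and matches the paper's reasoning in substance: the paper simply observes that in the abelian case every generator $e^i$ is a linear Casimir, so any quadratic combination is a Casimir, which is the same triviality you establish either by noting that the condition of \Cref{prop:casimircalc} reduces to $0=0$ when all $c^{ij}_k$ vanish, or by remarking that $U(n\mathfrak{n}_{1,1})$ is commutative and hence equal to its own centre. Your closing remark on non-degeneracy of $(a_{ij})$ and the resulting constant operator likewise agrees with the discussion following the theorem in the paper.
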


This result follows from noticing that for an abelian Lie algebra
all generators $e^i$ are \emph{linear} Casimir polynomials, so it is possible
to build a quadratic Casimir polynomial by taking a generic quadratic
combination. This leads to the following 1+0 non-homogeneous Hamiltonian operator of hydrodynamic type:
\begin{equation}
    \label{eq:abeliannhho}
    \mathcal{A}=
    \begin{pmatrix}
        a^{11} & a^{12} & \ldots & a^{1n}
        \\
        a^{12} & a^{22} & \ldots & a^{2n}
        \\
        \vdots & & \ddots & \vdots
        \\
        a^{1n} & a^{2n} & \ldots & a^{nn}
    \end{pmatrix}
    \partial_x +
    \begin{pmatrix}
        0 & f^{1 2} & \ldots & f^{1 n}
        \\
        -f^{1 2} & 0 & \ldots & f^{2n}
        \\
        \vdots & & \ddots & \vdots
        \\
        -f^{1n} & f^{2 n} & \ldots & 0
    \end{pmatrix},
\end{equation}
where $(a^{ij})$ is the inverse matrix of $(a_{ij})$ and $f^{ij}$ are arbitrary constants. Indeed, a completely generic skew-symmetric matrix
is a 2-cocycle for $n\mathfrak{n}_{1,1}$. Since $\det (a^{i j})\neq 0$ for
generic values of the coefficients $a^{ij}$ this is clearly a non-degenerate
operator.

%Let us notice that the 2-cocycle matrix can be put into a canonical
%form using the fact that all elements $M\in\GL(n,\R)$ preserve the
%structure of $n\mathfrak{n}_{1,1}$. We have to distinguish the even
%and odd dimensional cases. Following~\cite{Zumino1961}, in the even dimensional case $n=2m$, any skew-symmetric matrix can be written in the form $F=M\Sigma M^{T}$ where $M\in\GL(n,\R)$ and:
%\begin{equation}
%    \Sigma =
%    \begin{pmatrix}
%        M_2(\lambda_1)
%%        \\
 %       & M_2(\lambda_2)
%        \\
%        & & \ddots
%        \\
%        & & & M_2(\lambda_m)
%    \end{pmatrix},
%    \quad
%    M_2(\lambda_i)
%    =
%    \begin{pmatrix}
%        0 & \lambda_i
%        \\
%        -\lambda_i &0
%    \end{pmatrix},
%\end{equation}
%where $\lambda_i\in\Set{0,1}$. In the odd dimensional
%case $\lambda=0$ is always an eigenvalue, so that we have a canonical form
%of a skew-symmetric matrix as $F=M\Sigma' M^{T}$ where $M\in\GL(n,\R)$ is orthogonal and:
%\todo{Non penso che possiamo agire con $\GL$, ma solo con $\R^n$
%(che agisce sulle matrici come moltiplicazione di matrici diagonali direi),
%quindi suggerirei di rimuovere questo paragrafo.}
%\begin{equation}
%    \Sigma' =
%    \begin{pmatrix}
%        \Sigma & \mathbf{0}
%%        \\
 %       \mathbf{0}^T & 0
%    \end{pmatrix}.
%\end{equation}

\begin{remark}
    Let us observe that the operator \eqref{eq:abeliannhho} is
    constant, so it is essentially trivial. The reason is that the
    field variables $u^i$ can appear only on non-zero entries in the
    commutation table of the Lie algebra.
\end{remark}

\subsection{Semi-simple Lie algebras}\label{semisimple-sub}

The second easiest case is the one of semi-simple Lie algebras for which the following result holds true:

\begin{theorem}
    Let $\galg$ be a semi-simple Lie algebra. Then, $\galg$ admits
    a non-degenerate quadratic Casimir whose associated matrix is a scalar
    multiple of the \emph{inverse} of the matrix associated to the Killing form.
\end{theorem}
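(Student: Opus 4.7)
The plan is to build the required Casimir directly from the Killing form of $\galg$ and then invoke Cartan's criterion together with the algebraic correspondence established in \Cref{thm:principale}.

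First, I would recall that the Killing form $K(x,y) = \tr(\text{ad}_x \circ \text{ad}_y)$ of $\galg$ is a symmetric bilinear form which is ad-invariant by a one-line application of the Jacobi identity, namely
\[
K([x,y], z) + K(y, [x,z]) = 0 \quad \text{for all } x,y,z \in \galg.
\]
Expanding this identity in a basis $\{e^1,\ldots,e^n\}$ of $\galg$ with structure constants $c^{ij}_k$ (in the contravariant convention of \Cref{rem:contravariant}) reproduces verbatim the system
\[
K_{is}\, c^{sk}_j + K_{js}\, c^{sk}_i = 0
\]
of \Cref{prop:casimircalc}. Therefore the quadratic polynomial
\[
C \;=\; \tfrac{1}{2}\,K_{ij}\bigl(e^i e^j + e^j e^i\bigr)
\]
belongs to $Z(U\galg)$, i.e., it is a quadratic Casimir element.

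Next, I would invoke Cartan's criterion for semi-simplicity: a finite-dimensional Lie algebra is semi-simple if and only if its Killing form is non-degenerate. Since $\galg$ is semi-simple by hypothesis, $\det(K_{ij}) \neq 0$, and so $C$ is a non-degenerate quadratic Casimir of $\galg$. Finally, applying \Cref{thm:principale} to $C$ produces the compatible scalar product on $\galg$ whose matrix is $\eta^{ij} = (K^{-1})^{ij}$, i.e., the inverse of the Killing form matrix. The scalar-multiple freedom in the statement reflects the obvious fact that any non-zero rescaling of $K$ again yields an ad-invariant non-degenerate symmetric form, hence another non-degenerate quadratic Casimir of the same shape.

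The only genuine obstacle is essentially notational rather than conceptual: one has to carefully match the placement of indices in the contravariant convention used throughout the paper with the conventional (covariant) expression of the ad-invariance of $K$, so that the two identities coincide literally. Once this bookkeeping is done, the proof reduces to Jacobi plus Cartan's criterion plus \Cref{thm:principale}, with no further analytic or algebraic input required.
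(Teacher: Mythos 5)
There is a genuine gap, and it sits exactly at the step you flag as ``essentially notational'': the claim that the ad-invariance identity $K([x,y],z)+K(y,[x,z])=0$ reproduces \emph{verbatim} the system \eqref{eq:quadcascond} is false, and with it the claim that $C=\tfrac12 K_{ij}\bigl(e^ie^j+e^je^i\bigr)$ is a Casimir. Expanding the invariance in the paper's contravariant convention gives $c^{ij}_{s}K^{sk}+c^{ik}_{s}K^{js}=0$, where the contraction is over the \emph{lower} index of the structure constants: this is precisely condition \eqref{compa2}, i.e.\ the Killing matrix is a \emph{compatible scalar product}, not condition \eqref{eq:quadcascond}, which contracts an \emph{upper} index of $c^{sk}_j$ and governs the Casimir coefficients $a_{ij}$. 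The two conditions are interchanged exactly by matrix inversion (\Cref{thm:principale}), so the solution of \eqref{eq:quadcascond} furnished by the Killing form is $a_{ij}=(K^{-1})_{ij}$, not $K_{ij}$. A concrete counterexample to your construction is $\Sl(2,\R)$ in the paper's basis, where $K(J_+,J_-)=-16$, $K(J_3,J_3)=8$: the element $-16(J_+J_-+J_-J_+)+8J_3^2$ is not central, since $[J_+J_-+J_-J_+ + \lambda J_3^2,\,J_+]=(4+2\lambda)(J_+J_3+J_3J_+)$ vanishes only for $\lambda=-2$, while your element corresponds to $\lambda=-\tfrac12$; the genuine Casimir \eqref{eq:carsl2} has matrix $-16\,K^{-1}$. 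Note also that even if your element were central, you would have produced a Casimir with matrix $K$, which is generally \emph{not} proportional to $K^{-1}$, so the statement (``the Casimir matrix is a scalar multiple of the inverse of the Killing matrix'') would still not follow; your closing appeal to \Cref{thm:principale} identifies the compatible scalar product as $K^{-1}$, which is the reverse of what both the statement and that theorem assert.

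The repair is small and lands you essentially on the paper's (standard) argument: your Jacobi computation shows that $\eta^{ij}=K^{ij}$ satisfies \eqref{compa2}, Cartan's criterion gives $\det K\neq 0$, and then \Cref{thm:principale}, applied in the direction ``compatible scalar product $\Rightarrow$ Casimir'', yields the non-degenerate quadratic Casimir with coefficient matrix $(K^{-1})_{ij}$ — equivalently, take $C=\tfrac12 (K^{-1})_{ij}\bigl(e^ie^j+e^je^i\bigr)$ from the start and verify \eqref{eq:quadcascond} by contracting the invariance identity with two copies of $K^{-1}$. The paper itself disposes of the theorem by citing the uniqueness of the adjoint-invariant symmetric bilinear form (which also pins down \emph{all} quadratic Casimirs up to scale, at least in the simple case), whereas the corrected version of your argument is a direct existence proof; both are fine, but as written your construction produces the wrong candidate element.
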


This result follows easily from the fact that the only quadratic symmetric 
invariant with respect to the adjoint action of $\galg$ is the Killing form (see 
\cite[Theorem 5.53]{Kirillov2008}). 

\begin{remark}
    We recall that given any Lie algebra $\galg$ with structure
    constants $c^{ij}_k$ then the explicit expression of the Killing
    form is the following:
    \begin{equation}
        K^{ij} = c^{il}_{m}c_{l}^{jm},
        \label{eq:Kijcoord}
    \end{equation}
    while for matrix Lie groups it is related to the trace.
   \end{remark}

\begin{remark}
    We additionally remark that semi-simple Lie algebras have vanishing
    second cohomology group. This implies that to construct
    the associated 2-cocycle we can simply consider a generic 1-form
    on the algebra, compute its exterior derivative using formula
    \eqref{eq:extdev}, and then take its coefficients.
    \label{rem:ss2cocycles}
\end{remark}

We discuss now some relevant examples of
1+0 HOs obtained from such a class of Lie algebras.

\paragraph{The $\Sl(2,\R)$ algebra.}
    \label{ex:sl2}
    Consider the $\Sl(2,\R)=\Span\Set{ J_+,J_-,J_3}$ Lie algebra 
    with commutation relations:
    \begin{equation}
        [J_-,J_+] = 4J_3,
        \quad
        [J_3,J_+] = 2J_+,
        \quad
        [J_3,J_-] =- 2J_-.
    \end{equation}
    It is well known that the associated Killing form is:
    \begin{equation}
        K(\Sl(2,\R)) =
        \begin{pmatrix}
            0 & -16 & 0
            \\
            -16 & 0 & 0
            \\
            0 & 0 & 8
        \end{pmatrix},
    \end{equation}
    so that we can take as a quadratic Casimir the polynomial:
    \begin{equation}
        C = J_+J_- + J_- J_+   - 2J_3^2.
        \label{eq:carsl2}
    \end{equation}
    By a direct computation we see that also in this case we can take as 
    2-cocycle any skew-symmetric matrix, such that following the procedure 
    highlighted above we can build the 1+0 HO:
    \begin{equation}
        \mathcal{A}=
        a
        \begin{pmatrix}
            0 & 2 & 0 \\
            2 & 0 & 0 \\
            0 & 0 & -1 \\
        \end{pmatrix}
        \partial_x +
        \begin{pmatrix}
            0 & -4u^3 & -2u^+ \\
            4u^3 & 0 & 2u^- \\
            2u^+ & -2u^- & 0
        \end{pmatrix}
        +
        \begin{pmatrix}
            0 & f^{12} & f^{13} \\
            -f^{12} & 0 & f^{23} \\
            -f^{13} & -f^{23} & 0
        \end{pmatrix},
    \end{equation}
    where $\Set{u^{+},u^{-},u^3}$ is a basis for the
    (commuting) variables in the symmetric algebra, and $a\in\R$ 
    is an arbitrary constant.

Using the isomorphism $\mathfrak{sl}(2,\mathbb{R})\cong \mathfrak{so}(1,2)$ we can also recast the above operator into the form 
\begin{equation}
    \mathcal{A}'=a\begin{pmatrix}
1&0&0\\0&-1&0\\0&0&-1
\end{pmatrix}\partial_x+\begin{pmatrix}
0&-2u^3&2u^2\\
2u^3&0&2u^1\\-2u^2&-2u^1&0
\end{pmatrix}+ \begin{pmatrix}
            0 & f^{12} & f^{13} \\
            -f^{12} & 0 & f^{23} \\
            -f^{13} & -f^{23} & 0
        \end{pmatrix}.
\end{equation}

We notice that choosing $a=1$ and $f^{12}=f^{13}=f^{23}=0$, we obtain the Hamiltonian operator for the 3-wave interacting system in Example \ref{3waves}. The Hamiltonian operator here considered is also used to prove the bi-Hamiltonian property of the KdV equation after the inversion procedure in the local quadratic unimodular change of variables (see \cite{mokhov98:_sympl_poiss} and subsection \ref{kdvsection} of the present paper).

\begin{remark}
    This case can be rephrased also with the well-known matrix
    representation of $\Sl(2,\R)$ in terms of traceless matrices in dimension 
    two, see~\cite[Section 10.4]{FultonHarris1991}:
    \begin{equation}
        H = 
        \begin{pmatrix}
            1 & 0
            \\
            0 & -1
        \end{pmatrix},
        \quad
        X = 
        \begin{pmatrix}
            0 & 1
            \\
            0 & 0
        \end{pmatrix},
        \quad
        Y = 
        \begin{pmatrix}
            0 & 0
            \\
            1 & 0
        \end{pmatrix},
    \end{equation}
    with commutation relations:
    \begin{equation}
        [H,X]=2X,
        \quad
        [H,Y]=-2Y,
        \quad
        [X,Y]=H.
    \end{equation}
    The Killing form can be obtained from direct computation 
    from equation~\eqref{eq:Kijcoord}. 
    
    In the next paragraph, we 
    will generalise this approach.
    \label{rem:sl2mat}
\end{remark}

\paragraph{The $\Sl(n+1,\R)$ algebra.}
    \label{ex:sln}
    An immediate generalisation of the previous remark is
    the special linear Lie algebra in $n+1$ dimensions, $\Sl(n+1,\R)$
    for $n\geq 1$.
    We consider this algebra as realised in terms of traceless matrices:
    \begin{equation}
        \Sl(n+1,\R) = 
        \Set{M \in M_{n+1,n+1}(\R) | \tr M = 0},
    \end{equation}
    and it has dimension $n(n+2)$ (see \cite[Section 2.7]{Kirillov2008}).
    A possible parametrisation of the elements
    of $\Sl(n+1,\R)$ is the following:
    \begin{equation}
        M =
        \begin{pmatrix}
            \omega_1 & m_{1\,2} & \ldots & m_{1\,n+1}
            \\
            m_{2\,1} & \omega_2 & \ldots & m_{2\,n+1}
            \\
            \vdots & & \ddots & \vdots
            \\
            m_{n+1\,1} & m_{n+1\,2} & \ldots & \omega_{n+1}
        \end{pmatrix},
        \quad
        \omega_1 + \ldots + \omega_{n+1} = 0,
    \end{equation}
    or introducing the matrices:
    \begin{equation}
        E_{i,j} = (\delta_{ik}\delta_{jl})_{k,l=1}^{n},
        \quad
        H_i = E_{i,i}-E_{n+1,n+1},
    \end{equation}
    we have:
    \begin{equation}
        M = \sum_{i=1}^n \omega_i H_i
        +
        \sum_{i\neq j}^n m_{i\,j}E_{i,j}.
    \end{equation}
    The elements $H_i$ form the \emph{Cartan subalgebra} of 
    $\Sl(n+1,\R)$, i.e. a maximal abelian subalgebra consisting 
    of elements whose adjoint endomorphism is diagonalizable. From the 
    well-known relation:
    \begin{equation}
        E_{i,j} \cdot E_{k,m} = \delta_{jk} E_{i,m},
        \label{eq:EijEkm}
    \end{equation}
    where $\delta_{j,k}$ is the Kroenecker delta, see e.g.~\cite[Section IV.6]{Jacobson1962} we derive the
    general commutation relations of $\Sl(n+1,\R)$:
    \begin{subequations}
        \begin{align}
        [H_i,H_j] &= 0,
        \\
        [H_i,E_{k,m}] &= \delta_{i,k} E_{i,m} - \delta_{i,m} E_{k,i}
        -\delta_{l+1,k} E_{l+1,m} + \delta_{l+1,m} E_{k,l+1},
        \\
        [E_{i,j},E_{k,m}] &= 
        \delta_{j,k} E_{i,m} - \delta_{i,m} E_{k,j}.
        \end{align}
    \end{subequations}

    Moreover, it is know that (see for instance
    \cite[Exercise 14.36]{FultonHarris1991}) the Killing form
    is a multiple of $\tr(M\cdot M^\prime)$, for 
    $M,M^\prime\in\Sl(n+1,\R)$. So, up to a multiple we have:
    \begin{equation}
        K(\Sl(n+1,\R))(M,M^{\prime})
        =\sum_{i=1}^{n}m_{i\,i}m_{i\,i}^{\prime}
        -\sum_{i=1}^{n}m_{i\,i}\sum_{i=1}^{n}m_{i\,i}^{\prime}
        +\sum_{i\neq j}^{n+1} m_{i\,j}m_{j\,i}^{\prime}.
    \end{equation}
    The associated matrix can be computed through as
    the Jacobian matrix of $K(\Sl(n+1,\R))(M,M^{\prime})$
    with respect to the entries of $M$ and $M^{\prime}$.
    {
    Furthermore, using Remark \ref{rem:ss2cocycles}, we can explicitly compute the generic 2-cocycle of $\Sl(n+1,\R)$ from the generic 1-form on $\Sl(n+1,\R)$.}
    %\begin{equation}
    %    \theta = 
    %\end{equation}
    %So, using formula \eqref{eq:extdev} we obtain
    %\begin{equation}
    %    xxx
    %\end{equation}}
    
    For instance, taking $n=3$, we get the following 1+0 Hamiltonian
    operator in dimension 8 with field variables $\Set{u^{i,i}}_{i=1}^{3} 
    \cup \Set{u^{i,j}}_{i\neq j}^{4}$:
    \begin{equation}
        \mathcal{A}_{\Sl(3,\R)}
        = K_3 \partial_x + \Omega_3 + F_3,
    \end{equation}
    where
    \begin{equation}
        K_3 =\left(
\begin{array}{cccccccc}
 2 \alpha & \alpha & 0 & 0 & 0 & 0 & 0 & 0 \\
 \alpha & 2 \alpha & 0 & 0 & 0 & 0 & 0 & 0 \\
 0 & 0 & 0 & 0 & \alpha & 0 & 0 & 0 \\
 0 & 0 & 0 & 0 & 0 & 0 & \alpha & 0 \\
 0 & 0 & \alpha & 0 & 0 & 0 & 0 & 0 \\
 0 & 0 & 0 & 0 & 0 & 0 & 0 & \alpha \\
 0 & 0 & 0 & \alpha & 0 & 0 & 0 & 0 \\
 0 & 0 & 0 & 0 & 0 & \alpha & 0 & 0 \\
\end{array}
\right)      
    \end{equation}
    is the Killing form in matrix form (here $\alpha$ is an arbitrary constant)
    \begin{equation}
        \Omega_3
        =\left(
\begin{array}{cccccccc}
 0 & 0 & u^3 & 2 u^4 & -u^5 & u^6 & -2 u^7 & -u^8 \\
 0 & 0 & -u^3 & u^4 & u^5 & 2 u^6 & -u^7 & -2 u^8 \\
 -u^3 & u^3 & 0 & 0 & u^1-u^2 & u^4 & -u^8 & 0 \\
 -2 u^4 & -u^4 & 0 & 0 & -u^6 & 0 & u^1 & u^3 \\
 u^5 & -u^5 & u^2-u^1 & u^6 & 0 & 0 & 0 & -u^7 \\
 -u^6 & -2 u^6 & -u^4 & 0 & 0 & 0 & u^5 & u^2 \\
 2 u^7 & u^7 & u^8 & -u^1 & 0 & -u^5 & 0 & 0 \\
 u^8 & 2 u^8 & 0 & -u^3 & u^7 & -u^2 & 0 & 0 \\
\end{array}
\right) 
    \end{equation}
    is the matrix of the commutation constants,
    $F_3$ is the 2-cocycle matrix:

%\begin{equation}
%  =
%        \begin{pmatrix}
%            2& 1& 0& 0& 0& 0& 0& 0
%            \\
%            1& 2& 0& 0& 0& 0& 0& 0
%            \\
%            0& 0& 0& 0& 1& 0& 0& 0
%            \\
%            0& 0& 0& 0& 0& 0& 1& 0
%            \\
%            0& 0& 1& 0& 0& 0& 0& 0
 %           \\
  %          0& 0& 0& 0& 0& 0& 0& 1
  %          \\
  %          0& 0& 0& 1& 0& 0& 0& 0
   %         \\
  %          0& 0& 0& 0& 0& 1& 0& 0
   %     \end{pmatrix}
%\end{equation}

\begin{equation}
F_3=\left(
\begin{array}{cccccccc}
 0 & 0 & -f^{23} & 2 f^{24} & -f^{25} & f^{16} & 2 f^{27} & f^{18} \\
 0 & 0 & f^{23} & f^{24} & f^{25} & 2 f^{16} & f^{27} & 2 f^{18} \\
 f^{23} & -f^{23} & 0 & 0 & f^{35} & f^{24} & f^{18} & 0 \\
 -2 f^{24} & -f^{24} & 0 & 0 & -f^{16} & 0 & f^{47} & -f^{23} \\
 f^{25} & -f^{25} & -f^{35} & f^{16} & 0 & 0 & 0 & f^{27} \\
 -f^{16} & -2 f^{16} & -f^{24} & 0 & 0 & 0 & f^{25} & f^{47}-f^{35} \\
 -2 f^{27} & -f^{27} & -f^{18} & -f^{47} & 0 & -f^{25} & 0 & 0 \\
 -f^{18} & -2 f^{18} & 0 & f^{23} & -f^{27} & f^{35}-f^{47} & 0 & 0 \\
\end{array}
\right).
\end{equation}
%\begin{equation}
%    \begin{pmatrix}
% 0 & 0 & \alpha \\
% 0 & \frac{\alpha}{2} & 0 \\
% \alpha & 0 & 0 \\
%\end{pmatrix}
%\partial_x +
%\begin{pmatrix}
% 0 & u^1+f^{12} & -2 u^2+f^{13} \\
% -u^1-f^{12} & 0 & u^3+f^{13} \\
% 2 u^2-f^{13} & -u^3-f^{23} & 0 \\
%\end{pmatrix},
%\end{equation}

\paragraph{The $\mathfrak{so}(3,\R)$ algebra.}
    \label{ex:so3}
    Consider the Lie algebra 
    \begin{equation}
        \mathfrak{so}(3,\R)=\Span\Set{ L_1,L_2,L_3}
    \end{equation}
    with commutation relations:
    \begin{equation}
        [L_1,L_2] = L_3,
        \quad
        [L_2,L_3] = L_1,
        \quad
        [L_3,L_1] = L_2.
    \end{equation}
    It is well known that the associated Killing form is:
    \begin{equation}
        K(\mathfrak{so}(3,\R)) = \diag (-2,-2,-2),
    \end{equation}
    so that we can take as a quadratic Casimir the polynomial:
    \begin{equation}
        C = \frac{1}{2}\left(L_1^2+L_2^2+L_3^2\right).
    \end{equation}
    Following Remark \ref{rem:ss2cocycles} and as described for the previous example, 
    we see that also in this case we can take as 
    2-cocycle any 2-form represented by an arbitrary skew-symmetric matrix. So, we can build the following 1+0 non-homogeneous HO:
    \begin{equation}
        \mathcal{A}=
        a
        \begin{pmatrix}
            1 & 0 & 0 \\
            0 & 1 & 0 \\
            0 & 0 & 1 \\
        \end{pmatrix}
        \partial_x +
        \begin{pmatrix}
            0 & u^3 & -u^2 \\
            -u^3 & 0 & u^1 \\
            u^2 & -u^1 & 0
        \end{pmatrix}
        +
        \begin{pmatrix}
            0 & f^{12} & f^{13} \\
            -f^{12} & 0 & f^{23} \\
            -f^{13} & -f^{23} & 0
        \end{pmatrix},
    \end{equation}
    where $\Set{u^{1},u^{2},u^3}$ is a basis for the
    (commuting) variables in the symmetric algebra, and $a\in\R$ is 
    an arbitrary constant.
    
\paragraph{The $\mathfrak{so}(n,\R)$ algebra.}
    \label{ex:son}
    Let us consider the general case of the $\mathfrak{so}(n,\R)$ Lie algebra.
    This algebra can be represented by the algebra of skew-symmetric
    $n\times n$ matrices:
    \begin{equation}
        \mathfrak{so}(n,\R) = 
        \Set{M \in M_{n,n}(\R) | M+M^T = O_n}.
    \end{equation}
    This algebra is generated by the following $n(n-1)/2$ matrices:
    \begin{equation}
        N_{i,j}=E_{i,j}-E_{j,i}, \quad 1\leq i<j\leq n,
    \end{equation}
    and has commutation relations:
    \begin{equation}
        [N_{i,j},N_{k,l}] =
        \delta_{j,k}N_{i,l}-\delta_{j,l}N_{i,k}
        -\delta_{i,k}N_{j,l}+\delta_{i,l}N_{j,k},
    \end{equation}
    see \eqref{eq:EijEkm}. The associated Killing matrix has the particularly simple form:
    \begin{equation}
        K = -(n+2)I_{n(n-1)/2},
    \end{equation}
    where $I_N$ is the identity $N\times N$ matrix. {
    Finally, as pointed out before, using Remark \ref{rem:ss2cocycles},  the generic 2-cocycle is then explicitly computed for $\mathfrak{so}(n+1,\mathbb{R})$.}

We will present the explicit example of $\mathfrak{so}(4,\mathbb{R})$ in the next subsection.  
\vspace{12pt}

We observe that the same procedure can be applied also to more ``exotic''
simple Lie algebras, such as the exceptional Lie algebras $\galg_2$, $\mathfrak{f}_4$,
$\mathfrak{e}_6$, $\mathfrak{e}_7$, and $\mathfrak{e}_8$. Since these algebras
are of dimension at least 14 the associated computations are rather cumbersome,
even though they follow trivially from known formulas, see {\cite[Chap IV]{Jacobson1962}}.
As a title of example, in \Cref{app:g2} we show the formulas for the exceptional
Lie algebra of the smallest dimension, namely $\galg_2$.

\subsection{Direct sums of Lie algebras}
\label{directsum-sub}

In this Section, we now discuss a rather trivial case, which despite its plainness it will
be the source of many more examples: the case of the direct sums of Lie algebras.

\begin{theorem}
    Let $\galg=\galg_1\oplus\galg_2$, where $\galg_k$ are Lie algebras and let $C_k$ be a Casimir 
    element of $\galg_k$. Then $C_k$ are also Casimir elements for $\galg$. In particular, if the Casimir $C_k$ are quadratic
    and non-degenerate, then the most general non-degenerate quadratic Casimir polynomial is given by:
    \begin{equation} 
        C=\alpha_1 C_1+ \alpha_2 C_2 
        + \sum_{i=1}^{\dim Z(\galg_1)}
        \sum_{j=1}^{\dim Z(\galg_2)}
        a_{ij} z^{i}_{(1)}z^{j}_{(2)},
        \label{eq:casdirsum}
    \end{equation}
    where
    \begin{equation}
        Z(\galg_k) = \Span\Set{z^1_{(k)},\ldots,
        z^{\dim Z(\galg_k)}_{(k)}}.
    \end{equation}
    \label{thm:dirsum}
\end{theorem}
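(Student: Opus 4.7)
The plan is to handle the two assertions separately. The first, that each $C_k$ remains a Casimir of $\galg=\galg_1\oplus\galg_2$, rests on the defining property of a direct sum: $[x,y]=0$ for all $x\in\galg_1$ and $y\in\galg_2$, so in $U\galg$ the generators of the two summands commute as associative-algebra elements. Since $C_k$ is a polynomial in the generators of $\galg_k$ alone, it automatically commutes with every generator of $\galg_{k'}$ for $k'\neq k$; combined with the hypothesis that $C_k$ commutes with all of $\galg_k$, this places $C_k$ in $Z(U\galg)$, so by \Cref{defcasel} it is a Casimir of $\galg$.

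For the characterisation, I would apply \Cref{prop:casimircalc} to a generic symmetric matrix $a_{AB}$ on $\galg$ written in block form
\[
a = \begin{pmatrix} a^{(11)} & a^{(12)} \\ (a^{(12)})^{T} & a^{(22)} \end{pmatrix}
\]
with respect to the splitting of the basis into generators of $\galg_1$ and $\galg_2$. The key observation is that the structure constants $c^{AB}_{C}$ of $\galg_1\oplus\galg_2$ vanish unless the three indices all lie in the same summand. Substituting this into \eqref{eq:quadcascond} and sorting the equations by which summand each free index belongs to, the conditions on the diagonal blocks decouple from the rest and reduce to the quadratic Casimir conditions on $\galg_1$ and $\galg_2$ respectively, producing the terms $\alpha_1 C_1$ and $\alpha_2 C_2$. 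The mixed-index equations, where the two free indices lie in different summands, force each column of $a^{(12)}$ to be a linear Casimir of $\galg_1$ and, symmetrically, each row to be a linear Casimir of $\galg_2$; since linear Casimirs are precisely elements of the centre, the support of $a^{(12)}$ is confined to pairs of indices labelling bases of $Z(\galg_1)$ and $Z(\galg_2)$, yielding the cross sum $\sum_{i,j} a_{ij}z^i_{(1)}z^j_{(2)}$ in \eqref{eq:casdirsum}.

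The main obstacle I foresee is the careful bookkeeping in the mixed-index case of \eqref{eq:quadcascond}: one needs to check that one of the two terms in the sum vanishes identically due to the block structure of the $c^{AB}_C$, so that the centrality requirement splits cleanly into independent row and column conditions on $a^{(12)}$. Non-degeneracy of the resulting form is then automatic for generic parameter values, since the diagonal blocks are non-degenerate by hypothesis and the full block matrix stays non-degenerate under sufficiently small off-diagonal perturbations.
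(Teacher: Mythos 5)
Your proposal is correct and follows essentially the same route as the paper, which disposes of the theorem in one line by invoking $[\galg_1,\galg_2]=0$ and $Z(\galg_1\oplus\galg_2)=Z(\galg_1)\oplus Z(\galg_2)$; your block-form analysis of condition \eqref{eq:quadcascond} (diagonal blocks giving the summands' quadratic Casimir conditions, mixed blocks forced into $Z(\galg_1)\otimes Z(\galg_2)$ because one of the two terms vanishes) is just an explicit coordinate verification of exactly those two facts. The only caveat, inherited from the statement itself rather than introduced by you, is that writing the diagonal contributions as $\alpha_1 C_1+\alpha_2 C_2$ tacitly assumes each summand's quadratic Casimirs are exhausted by the given $C_k$ (up to the central terms), which is how the paper uses it.
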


This follows immediately from the linearity of the Lie bracket and that for
a direct sum we have $[\galg_1,\galg_2] = 0$ and that $Z(\galg_1\oplus\galg_2) = Z(\galg_1)\oplus Z(\galg_2)$. 
Note that if $Z(\galg_1)= \emptyset$ or $Z(\galg_2)= \emptyset$, then 
the off-diagonal elements in \eqref{eq:casdirsum} are not present.

A trivial example of direct sum are again the abelian Lie algebras $k\mathfrak{n}_{1,1}$ considered in \Cref{sss:abelian}. Indeed, it is possible to write $k\mathfrak{n}_{1,1} = \bigoplus_{i=1}^{k}\mathfrak{n}_{1,1}$, and then recover the results of \Cref{thm:abelian} through \Cref{thm:dirsum}.

The most relevant consequence of \Cref{thm:dirsum} is the following:

\begin{corollary}
    If a Lie algebra $\galg$ is such that either:
    \begin{itemize}
        \item $\galg = \mathfrak{s}\oplus (k \mathfrak{n}_{1,1})$ where $\mathfrak{s}$
            is semi-simple and $k\mathfrak{n}_{1,1}$ is abelian;
        \item $\galg = \mathfrak{s}_1\oplus \mathfrak{s}_2$ where $\mathfrak{s}_i$
            are semi-simple;
    \end{itemize}
    Then $\galg$ admits a non-degenerate quadratic Casimir polynomial.
    \label{cor:dirsum}
\end{corollary}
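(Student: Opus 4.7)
The plan is to apply \Cref{thm:dirsum} directly and exploit the fact that any semi-simple Lie algebra $\mathfrak{s}$ has trivial center, i.e. $Z(\mathfrak{s}) = \Set{0}$. Together with the existence of the non-degenerate Killing form for semi-simple algebras (from \Cref{semisimple-sub}) and the fact that any quadratic form is a Casimir for an abelian algebra (\Cref{thm:abelian}), this will produce the desired non-degenerate quadratic Casimir in each of the two cases.

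First, consider the case $\galg = \mathfrak{s}\oplus (k\mathfrak{n}_{1,1})$. Let $C_{\mathfrak{s}}$ denote the non-degenerate quadratic Casimir of $\mathfrak{s}$ built from (a scalar multiple of) the Killing form, and let $C_{\mathrm{ab}} = a_{ij} e^i e^j$ be a generic non-degenerate quadratic form in the generators $\Set{e^1,\dots,e^k}$ of $k\mathfrak{n}_{1,1}$, which is automatically a Casimir by \Cref{thm:abelian}. Since $Z(\mathfrak{s}) = \Set{0}$, the cross term in \eqref{eq:casdirsum} of \Cref{thm:dirsum} is empty, so the most general non-degenerate quadratic Casimir of $\galg$ reduces to
\begin{equation}
C = \alpha\, C_{\mathfrak{s}} + C_{\mathrm{ab}},
\end{equation}
whose associated symmetric matrix is block-diagonal with the two non-degenerate blocks on the diagonal, hence non-degenerate for generic $\alpha\neq 0$ and generic $(a_{ij})$.

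The second case, $\galg = \mathfrak{s}_1\oplus \mathfrak{s}_2$, is even simpler: both $Z(\mathfrak{s}_i)$ are trivial, so \eqref{eq:casdirsum} gives no mixed terms and we simply take
\begin{equation}
C = \alpha_1\, C_{\mathfrak{s}_1} + \alpha_2\, C_{\mathfrak{s}_2},
\end{equation}
which is again block-diagonal and non-degenerate for $\alpha_1,\alpha_2\neq 0$.

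There is essentially no main obstacle here: the result is a direct corollary of \Cref{thm:dirsum} once one observes that semi-simple factors contribute no center and hence no off-diagonal terms, so the non-degeneracy of each block summand propagates to the whole Casimir. The only thing that deserves a brief comment is that the resulting scalar product on $\galg$, obtained by inverting the matrix of $C$ via \Cref{thm:principale}, inherits the same block-diagonal structure, so that the associated $1+0$ Hamiltonian operator splits accordingly into the direct sum of the operators associated with the summands (with an additional freedom given by the $2$-cocycle, which need not respect the block structure).
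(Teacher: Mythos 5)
Your proof is correct and follows essentially the same route as the paper, which presents this corollary as an immediate consequence of \Cref{thm:dirsum}: the semi-simple summands contribute their (inverse) Killing-form Casimirs, the abelian summand contributes a generic non-degenerate quadratic form, and triviality of the semi-simple centers removes the mixed terms in \eqref{eq:casdirsum}, so the resulting block-diagonal quadratic Casimir is non-degenerate. Nothing is missing.
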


We complement this result with the following that we will use to
compute the 2-cocycles in the various examples:

\begin{lemma}
    Consider a Lie algebra $\galg=\galg_1\oplus\galg_2$ where $\mathfrak{g}_1$ and $\galg_2$ have
    structure constants $c^{ij}_k$, and $\gamma^{ij}_k$ respectively. Fix the dual basis of $\galg^*$ as:
    \begin{equation}
        \Lambda^1 \galg = \Span\Set{\theta_1,\ldots,\theta_n,
        \phi_{1},\ldots,\phi_{k}}=\Lambda^1\galg_1
        \oplus\Lambda^1 \galg_2.
    \end{equation}
    Then, a 2-cocycle
    $\omega \in \Lambda^2 \galg$ is of the following form:
    \begin{equation}
        \omega = \alpha_1 + \alpha_2 
        + \beta^{ij'}\theta_{i}\wedge\phi_{j'},
        \label{eq:2formdec}
    \end{equation}
    where $\alpha_i\in\Lambda^2\galg_i$ is a 2-cocycle
    of $\mathfrak{g}_i$, and the coefficients $\beta^{ij'}$ solve the linear system:
    \begin{equation}
        \beta^{ij'}c_{i}^{hl}=
        \beta^{ij'} \gamma_{j'}^{p'q'} =0.
        \label{eq:gammaeq}
    \end{equation}
    \label{lem:2cocyclessum}
\end{lemma}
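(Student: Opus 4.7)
My plan is to exploit the direct sum decomposition at the level of $\Lambda^2\galg^*$ and then impose the coordinate-free 2-cocycle condition $\ud\omega=0$, whose equivalence with the coordinate condition \eqref{compa1} was already established earlier in the excerpt. First, using the dual basis, I would decompose
\begin{equation*}
    \Lambda^2\galg^* = \Lambda^2\galg_1^* \oplus (\Lambda^1\galg_1^*\otimes\Lambda^1\galg_2^*) \oplus \Lambda^2\galg_2^*
\end{equation*}
and write the most general 2-form on $\galg$ as $\omega = \alpha_1 + \beta^{ij'}\theta_i\wedge\phi_{j'} + \alpha_2$, with $\alpha_k\in\Lambda^2\galg_k^*$. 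Such a decomposition is unique and amounts to pure linear algebra.

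Next, I would use the hypothesis $\galg=\galg_1\oplus\galg_2$ at the level of structure constants: the brackets $[\galg_1,\galg_2]=0$ imply that all mixed structure constants of $\galg$ vanish. Thanks to this, the Maurer--Cartan formula \eqref{eq:extdev} specialises to
\begin{equation*}
    \ud\theta_k = -\tfrac{1}{2}c^{ij}_k\,\theta_i\wedge\theta_j,
    \qquad
    \ud\phi_{k'} = -\tfrac{1}{2}\gamma^{i'j'}_{k'}\,\phi_{i'}\wedge\phi_{j'},
\end{equation*}
with no cross terms. Applying $\ud$ to $\omega$ via the graded Leibniz rule, $\ud\omega$ splits into four homogeneous pieces lying in $\Lambda^3\galg_1^*$, $\Lambda^2\galg_1^*\wedge\Lambda^1\galg_2^*$, $\Lambda^1\galg_1^*\wedge\Lambda^2\galg_2^*$, and $\Lambda^3\galg_2^*$ respectively. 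Since these four subspaces are linearly independent, the cocycle condition $\ud\omega=0$ is equivalent to the simultaneous vanishing of each piece.

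Then I would read off the conditions one piece at a time. The purely $\theta$-trivalent component is $\ud\alpha_1$, which vanishes exactly when $\alpha_1$ is a 2-cocycle of $\galg_1$; symmetrically for $\alpha_2$ and $\galg_2$. The mixed components come entirely from $\ud\beta$: writing
\begin{equation*}
    \ud(\beta^{ij'}\theta_i\wedge\phi_{j'})
    = -\tfrac{1}{2}\beta^{ij'}c^{hl}_i\,\theta_h\wedge\theta_l\wedge\phi_{j'}
    + \tfrac{1}{2}\beta^{ij'}\gamma^{p'q'}_{j'}\,\theta_i\wedge\phi_{p'}\wedge\phi_{q'},
\end{equation*}
and equating the coefficients of the trivectors (which are already antisymmetric in the appropriate pair of indices) to zero yields precisely the linear system \eqref{eq:gammaeq}. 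All the steps above are reversible, so the characterisation is complete.

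The only real obstacle is purely bookkeeping: keeping track of signs and antisymmetrisations in the wedge products. Conceptually, there is nothing beyond the Maurer--Cartan calculation already performed for a generic 2-form in the excerpt, once the decoupling of structure constants implied by the direct sum is used to separate the three sectors of $\Lambda^2\galg^*$.
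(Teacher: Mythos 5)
Your proposal is correct and follows essentially the same route as the paper: write the general 2-form in the split $\Lambda^2\galg_1\oplus(\Lambda^1\galg_1\wedge\Lambda^1\galg_2)\oplus\Lambda^2\galg_2$, apply the Maurer--Cartan formula (with no cross terms because $[\galg_1,\galg_2]=0$), and use the linear independence of the resulting trivectors to force $\ud\alpha_i=0$ and the system \eqref{eq:gammaeq}. The only cosmetic difference is that you spell out the four-fold grading of $\ud\omega$ explicitly, while the paper states the same computation in a single displayed formula.
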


\begin{proof}
    The most general 2-form $\omega\in\Lambda^2\mathfrak{g}$ has the form given
    in equation \eqref{eq:2formdec}. Setting $\ud \omega=0$ gives us the 2-cocycle 
    condition.  Taking the exterior derivative we have:
    \begin{equation}
        \ud\omega = 
        \ud\alpha_1 + \ud\alpha_2
        -\frac{\beta^{ij'}}{2} 
        \left(c_{i}^{hl} \theta_{h} \wedge \theta_{l} \wedge \phi_{j'}
        - \gamma_{j'}^{p'q'} \theta_{i} \wedge \phi_{p'} \wedge \phi_{q'}\right).
    \end{equation}
    Now, since $\galg$ is a direct sum we have $\ud\alpha_i \in \Lambda^3\galg_i$,
    so that $\ud \omega= 0$ implies $\ud\alpha_i = 0$. The conditions in \eqref{eq:gammaeq}
    follows from noting that the elements $\theta_{h} \wedge \theta_{l} \wedge \phi_{j'}$
    and $\theta_{i} \wedge \phi_{p'} \wedge \phi_{q'}$ are linearly independent in
    $\Lambda^3\galg$ and taking coefficients with respect to them.
\end{proof}

Let us now present some examples of this occurrences.

\paragraph{The algebras $\Sl(2,\R) \oplus (k\mathfrak{n}_{1,1})$.}
    In this example we consider the first case presented in \Cref{cor:dirsum}.
    With some care the procedure adapts to all simple Lie algebras.
    Let us consider the Lie algebra $\Sl(2,\R) \oplus (k\mathfrak{n}_{1,1})$, i.e. a trivial central extension of $\Sl(2,\R)$, with the same basis as in \Cref{semisimple-sub}.
    From \Cref{thm:dirsum} and \Cref{eq:carsl2}, since $Z(\Sl(2,\R))=\emptyset$ 
    we have the following quadratic Casimir element:
    \begin{equation}
        C = J^+J^- + J^- J^+   - 2(J^3)^2 + \sum_{i,j}^{k} a_{ij}e^{i}e^{j},
    \end{equation}
    where $a_{ij}$ are the elements of a symmetric matrix.

    Now, we show that in this case the 2-cocycle is:
    \begin{equation}
        \omega = \alpha_{\Sl(2,\R)} + \alpha_{k\mathfrak{n}_{1,1}}
        %\begin{pmatrix}
        %    F_{\Sl(2,\R)} & O_{3,n}
        %    \\
        %    O_{n,3} & F_{k\mathfrak{n}_{1,1}}
        %\end{pmatrix},
        \label{eq:sl2ka2cocycle}
    \end{equation}
    where  $\alpha_{\Sl(2,\R)}$ and $\alpha_{k\mathfrak{n}_{1,1}}$ are the 2-cocycles of 
    $\Sl(2,\R)$ and $k\mathfrak{n}_{1,1}$ respectively, i.e.\ 
    an arbitrary $3\times 3$ skew-symmetric matrix and
    an arbitrary $k\times k$ skew-symmetric matrix. Indeed, denoting by $\Set{\theta_{+},\theta_{-},\theta_3,\phi_1,\ldots,\phi_{k}}$ the basis of $\Lambda\galg$, following \Cref{lem:2cocyclessum} we can consider the ``mixed'' 2-form:
    \begin{equation}
        \beta = \theta_{+}\wedge\sum_{j=1}^{k}\beta^{+,j}\varphi_{j}
        +\theta_{-}\wedge\sum_{j=1}^k\beta^{-,j}\varphi_{j}
        +\theta_3\wedge\sum_{j=1}^k\beta^{3\,j}\varphi_{j}.
    \end{equation}
    Noting that:
    \begin{equation}
        \ud (\theta_{\pm}\wedge\varphi_{j})=
        \pm\theta_{\pm}\wedge\theta_{3}\wedge\varphi_{j},
        \quad
        \ud (\theta_{3}\wedge\varphi_{j})=
        4\theta_{+}\wedge\theta_{-}\wedge\varphi_{j},
    \end{equation}
    we have that $\beta^{\pm,j}=\beta^{3,j}=0$ proving
    formula~\eqref{eq:sl2ka2cocycle}.

    These considerations shows that the 1+0 HO associated to 
    $\Sl(2,\R)\oplus(k\mathfrak{n}_{1,1})$ has the following block structure:
    \begin{equation}
        \mathcal{A} = 
        \begin{pmatrix}
            g_{\Sl(2,\R)} & O_{3,k}
            \\
            O_{k,3} & S_{k,k}
        \end{pmatrix}
        \partial_x
        +
        \begin{pmatrix}
            T_{\Sl(2,\R)} & O_{3,k}
            \\
            O_{k,3} & O_{k,k}
        \end{pmatrix}
        +
        \begin{pmatrix}
            F_{\Sl(2,\R)} & O_{3,n}
            \\
            O_{n,3} & F_{k\mathfrak{n}_{1,1}}
        \end{pmatrix}.
        \label{eq:hosl2kn11}
    \end{equation}
    It is possible to prove that the form of the 1+0 HO associated to
    the direct sum of the general special linear algebra with an abelian one, i.e.\ 
    $\Sl(n,\R)\oplus(k\mathfrak{n}_{1,1})$, is analogous to~\Cref{eq:hosl2kn11}.
    
\paragraph{The Lie algebra $\So(4,\mathbb{R})$.}

In this paragraph we consider the second case presented in \Cref{cor:dirsum}.
We do it with a nontrivial example, i.e.\ the Lie algebra $\So(4,\mathbb{R})$.
Indeed, it is known that  $\So(4,\mathbb{R})\cong 
\So(3,\mathbb{R})\oplus\So(3,\mathbb{R})$, see  \cite{snowin}. We observe that this
decomposition is not apparent immediately in the basis of $\So(4,\R)$ 
described in the previous \Cref{semisimple-sub}, but it is obtained through 
an isomorphism. Writing the bases as:
\begin{equation}
    \So(3,\mathbb{R})=\set{L_1^{(1)},L_2^{(1)},L_3^{(1)}},
    \quad
    \So(3,\mathbb{R}) =\set{L_1^{(2)},L_2^{(2)},L_3^{(2)}}.
    \label{eq:so4}
\end{equation}
Then, the Casimir element is the sum of the two Casimir elements:
\begin{equation}
    C_{\So(4,\R)} = 
    \alpha_1\left[(L_1^{(1)})^2+(L_2^{(1)})^2+(L_3^{(1)})^2\right]+
    \alpha_2\left[(L_1^{(2)})^2+(L_2^{(2)})^2+(L_3^{(2)})^2\right].
    \label{eq:so4cas}
\end{equation}
Moreover, through a direct computation, we solve the 
system~\eqref{eq:gammaeq} and get that the 2-cocycle is such that
the mixed part annihilates:
\begin{equation}
    \begin{aligned}
        \omega &=
        \alpha^{1,2}_1 \theta_1\wedge\theta_2 +
        \alpha^{1,3}_1 \theta_1\wedge\theta_3 +
        \alpha^{2,3}_1 \theta_2\wedge\theta_3
        \\
        &+\alpha^{1,2}_2 \phi_1\wedge\phi_2 +
        \alpha^{1,3}_2 \phi_1\wedge\phi_3 +
        \alpha^{2,3}_2 \phi_2\wedge\phi_3.
    \end{aligned}
    \label{eq:so42coc}
\end{equation}

So, we conclude this example by showing the explicit form
of the operator:
\begin{equation}
    \begin{aligned}
        \mathcal{A} &=
        \begin{pmatrix}
            a_1 &  \\
             & a_1 & \\
             &  & a_1 \\
             &&& a_2\\
             &&&& a_2\\
             &&&&& a_2
        \end{pmatrix}
        \partial_x +
        \begin{pmatrix}
            0 & u^3_{(1)} & -u^2_{(1)} \\
            -u^3_{(1)} & 0 & u^1_{(1)} \\
            u^2_{(1)} & -u^1_{(1)} & 0\\
            &&& 0 & u^3_{(2)} & -u^2_{(2)} \\
            &&&-u^3_{(2)} & 0 & u^1_{(2)} \\
            &&&u^2_{(2)} & -u^1 & 0\\
        \end{pmatrix}
        \\
        &+
        \begin{pmatrix}
            0 & f^{12}_{(1)} & f^{13}_{(1)} \\
            -f^{12}_{(1)} & 0 & f^{23}_{(1)} \\
            -f^{13}_{(1)} & -f^{23}_{(1)} & 0\\
            &&&0 & f^{12}_{(2)} & f^{13}_{(2)} \\
            &&&-f^{12}_{(2)} & 0 & f^{23}_{(2)} \\
            &&&-f^{13}_{(2)} & -f^{23}_{(2)} & 0\\
        \end{pmatrix}.
    \end{aligned}
    \label{eq:so4operatorfin}
\end{equation}
Moreover we observe that similar computations hold for other
decomposable low-dimensional semi-simple Lie algebras like 
$\So(2,2,\R)\cong \Sl(2,\R)\oplus \Sl(2,\R)$.

\paragraph{An example with non-trivial mixed elements.}

We conclude this subsection with an example where mixed terms are
present both in the Casimir polynomial and in the cocycles. Let us
consider the solvable Lie algebra $\mathfrak{s}_{4,6}=\Span_\R \set{e^1,e^2,e^3,e^4}$,
whose non-zero commutation relations are 
\begin{equation}
    \begin{array}{ccc}
         [e^2,e^3]=e^1, & [e^4,e^2]=e^2, & [e^4,e^3]=-e^3,
    \end{array}
    \label{eq:s46comm}
\end{equation}
see~\cite[\S 17.3]{snowin}. We have $Z(\mathfrak{s}_{4,6})=\Span_\R\set{e^1}$.
Moreover, this algebra has the non-degenerate quadratic Casimir $C^{(2)}_{\mathfrak{s}_{4,6}}=e^1e^4+e^2e^3$, so that its full non-degenerate quadratic Casimir element
is given by:
\begin{equation}
    C_{\mathfrak{s}_{4,6}} = a_1 {(e^1)}^2 + a_2 C^{(2)}_{\mathfrak{s}_{4,6}}
    =a_1 {(e^1)}^2 + a_2(e^1e^4+e^2e^3).
\end{equation}
Moreover, from a direct computation we get that the space of the 2-cocycles
is 3-dimensional, and the most generic 2-cocycle has the following shape:
\begin{equation}
    \omega_{\mathfrak{s}_{4,6}} = 
    \Omega^{23} \theta_2\wedge\theta_3 + \Omega^{24}\theta_2\wedge\theta_4
    + \Omega^{34}\theta_3\wedge\theta_4.
\end{equation}
The explicit form of the operator associated to this Lie algebra will be
given in \Cref{sec4}.

Consider now the direct sum $\mathfrak{g}_{4,6}^{(k)} = 
\mathfrak{s}_{4,6}\oplus (k\mathfrak{n}_{1,1})$,
where the abelian summand is generated by the set $\set{f^1,\ldots,f^k}$. Then,
clearly:
\begin{equation}
    Z(\mathfrak{g}_{4,6}^{(k)}) 
    = 
    \Span_\R\set{e^1} \oplus k \mathfrak{n}_{1,1}.
\end{equation}
Based on \Cref{thm:dirsum} we have that $\mathfrak{g}_{4,6}^{(k)}$ admits
the following non-degenerate quadratic Casimir:
\begin{equation}
    C_{\mathfrak{g}_{4,6}^{(k)}} = 
    a_1 {(e^1)}^2 + a_2(e^1e^4+e^2e^3) 
    + a_{ij} f^{i}f^{j}
    + b_i e^1f^i,
    \label{eq:Cg46k}
\end{equation}
where $a_{ij}=a_{ji}$ and $b_i$ are arbitrary constants. It is clear that 
\Cref{eq:Cg46k} has non-trivial mixed elements. In the same way we see that 
the basis of the 2-cocycles of $\mathfrak{g}_{4,6}^{(k)}$ has mixed elements. 
Indeed, let us take the generic mixed element of 
$\mathfrak{g}_{4,6}^{(k)}$, i.e.\ 
$\omega_\text{mix} = \beta^{ij'}\theta_i\wedge\phi_{j'}$. Then we have:
\begin{equation}
    \begin{array}{ll}
         \ud (\theta_1 \wedge \phi_i) = - \theta_2\wedge\theta_3\wedge\phi_i, 
         & 
         \ud (\theta_2 \wedge \phi_i) =  \theta_2\wedge\theta_4\wedge\phi_i, 
         \\[3pt]
         \ud (\theta_3 \wedge \phi_i) = - \theta_3\wedge\theta_4\wedge\phi_i, 
         &
         \ud (\theta_4 \wedge \phi_i) = 0,
    \end{array}
\end{equation}
and since the last exterior derivative is identically zero we have that the 
terms $\beta^{4 j'}$ \emph{are free}. That is, the mixed terms are present,
and the the most general 2-cocycle of $\mathfrak{g}_{4,6}^{(k)}$ is:
\begin{equation}
    \begin{aligned}
        \omega_{\mathfrak{g}_{4,6}^{(k)}}
        &=
        \Omega^{23} \theta_2\wedge\theta_3 
        + \Omega^{24}\theta_2\wedge\theta_4
        + \Omega^{34}\theta_3\wedge\theta_4
        \\
        &+ \beta^{4j'}\theta_4\wedge\phi_{j'}
        +\Upsilon^{i'j'}\phi_{i'}\wedge\phi_{j'}.
    \end{aligned}
\end{equation}

So, in the end these considerations shows that the 1+0 HO associated to 
$\mathfrak{g}_{4,6}^{(k)}$ has the following block structure:
\begin{equation}
    \mathcal{A} = 
    \begin{pmatrix}
        g_{\mathfrak{s}_{4,6}} & M_{4,k}
        \\
        M_{4,k}^{T} & S_{k,k}
    \end{pmatrix}
    \partial_x
    +
    \begin{pmatrix}
        T_{\mathfrak{s}_{4,6}} & O_{2,k}
        \\
        O_{k,2} & O_{k,k}
    \end{pmatrix}
    +
    \begin{pmatrix}
        F_{\mathfrak{s}_{4,6}} & \digamma_{4,n}
        \\
        -\digamma_{4,n}^T & F_{k\mathfrak{n}_{1,1}}
    \end{pmatrix},
        \label{eq:hos46kn11}
\end{equation}
where
\begin{equation}
    \begin{array}{ll}
         g_{\mathfrak{s}_{4,6}} 
         =
         \begin{pmatrix}
             & & & g_2
             \\
             && g_2
             \\
             & g_2
             \\
             g_2 & & & g_1
         \end{pmatrix},
         & M_{2,k} = 
         \begin{pmatrix}
             \\
             \\
             \\
             h_1 & h_2 & \ldots & h_k
         \end{pmatrix},
         \\[24pt]
         T_{\mathfrak{s}_{4,6}} 
         =
         \begin{pmatrix}
             0
             \\
             && u^{1} & u^{2}
             \\
             &-u^{1} & 0 & -u^{3}
             \\
             & -u^{2} & u^{3} &0
         \end{pmatrix},
         & 
         F_{\mathfrak{s}_{4,6}}
         =
         \begin{pmatrix}
             0
             \\
             &  0 & \Omega^{23} & \Omega^{24}
             \\
             & -\Omega^{23} & 0 & \Omega^{34}
             \\
             & -\Omega^{24} & -\Omega^{34} & 0
         \end{pmatrix},
         \\[24pt]
         \digamma_{4,n}
         =
         \begin{pmatrix}
             \\
             \\
             \\
             \beta^{41} & \beta^{42} & \ldots & \beta^{4k}
         \end{pmatrix}.
    \end{array}
\end{equation}
where $g_i$, $h_i$ are arbitrary constants, and
$S_{k,k}$ is an arbitrary symmetric matrix, while 
$F_{k\mathfrak{n}_{1,1}}$ is an arbitrary skew-symmetric matrix.
It is possible to prove that the form of the 1+0 HO associated to
the direct sum of two copies of $\mathfrak{s}_{4,6}$, i.e.\ 
$\mathfrak{s}_{4,6}\oplus\mathfrak{s}_{4,6}$, is similar 
to~\eqref{eq:hos46kn11}.

\subsection{Two-step nilpotent Lie algebras}
\label{sss:nilpot}

In this subsection, we prove some general results on nilpotent
Lie algebras admitting a non-degenerate quadratic Casimir element.
Let us note that this is far from being general, since many nilpotent
Lie algebras does not have non-degenerate Casimir. As a very simple 
example of this, we consider the Heisenberg algebra 
$\mathfrak{n}_{3,1}=\mathfrak{h}(1)=\Span\Set{e^1,e^2,e^3}$ (see~\cite[Section 16.3]{snowin}), whose
only non-zero commutation relation is $[e^2,e^3]=e^1$. This algebra is nilpotent and it
admits a single linear Casimir invariant $C_{\mathfrak{n}_{3,1}}=e^1$, whose square is degenerate.

%\paragraph{2-step nilpotent Lie algebras.} 
We focus our attention on two-step nilpotent Lie algebras, i.e. those
nilpotent Lie algebras $\mathfrak{n}$ such that $\mathfrak{n}^2=[\mathfrak{n},\mathfrak{n}]=0$. For such Lie algebras we have the following characterisation:

\begin{lemma}[\cite{Scheuneman1966PhD}]
    Let $\mathfrak{n}$ be a two-step nilpotent Lie algebra.
    Then there exists a basis $\Set{e^1,\dots e^n,f^1,\dots , f^m}$
    such that the commutation relation have the following form:
    \begin{align}
        \label{condsc}[e^i,e^j]=c_s^{ij}f^s, \quad [e^i,f^j]=[f^i,f^j]=0.
    \end{align}
    \label{lem:2stepstruct}
\end{lemma}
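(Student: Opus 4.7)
The strategy rests on the elementary but crucial observation that two-step nilpotency, $[[\mathfrak{n},\mathfrak{n}],\mathfrak{n}]=0$, is equivalent to the inclusion $[\mathfrak{n},\mathfrak{n}]\subseteq Z(\mathfrak{n})$: unwinding the definition of the center, the bracket of any commutator $[x,y]$ with any $z\in\mathfrak{n}$ vanishes, which is precisely the two-step condition. So every product of two generic elements already commutes with everything. This is the structural fact that the lemma is really capturing.

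Given this, the plan is to pick a vector space complement $W$ of $Z(\mathfrak{n})$ in $\mathfrak{n}$, write $\mathfrak{n}=W\oplus Z(\mathfrak{n})$, and choose an arbitrary basis $\{e^1,\ldots,e^n\}$ of $W$ together with an arbitrary basis $\{f^1,\ldots,f^m\}$ of $Z(\mathfrak{n})$. Then I would check the three commutation relations displayed in \eqref{condsc}: the brackets $[f^i,f^j]$ and $[e^i,f^j]$ vanish immediately because all the $f^s$ lie in the center; and since $[e^i,e^j]\in[\mathfrak{n},\mathfrak{n}]\subseteq Z(\mathfrak{n})=\operatorname{span}\{f^1,\ldots,f^m\}$, one may expand $[e^i,e^j]=c^{ij}_s f^s$ for some constants $c^{ij}_s$, which are then (by antisymmetry of the bracket) the structure constants of $\mathfrak{n}$ in this basis.

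There is essentially no hard step here; the content is entirely absorbed by the observation that $[\mathfrak{n},\mathfrak{n}]$ is central. The only mild subtlety is the choice of the second block of the basis: one could alternatively take the $f^s$ to span $[\mathfrak{n},\mathfrak{n}]$ exactly (which may be strictly smaller than $Z(\mathfrak{n})$) and extend by a complement in $\mathfrak{n}$; but then one has to argue separately that the extending vectors still commute with the $f^s$, which requires a little care. Taking the $f^s$ to span the whole center bypasses this issue and yields the cleanest argument, while still realizing the desired form \eqref{condsc}.
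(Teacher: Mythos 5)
Your proof is correct. Note that the paper itself offers no argument for this lemma --- it is quoted directly from Scheuneman's thesis --- so there is no internal proof to compare against; your observation that two-step nilpotency is equivalent to $[\mathfrak{n},\mathfrak{n}]\subseteq Z(\mathfrak{n})$, followed by splitting $\mathfrak{n}=W\oplus Z(\mathfrak{n})$ and reading off the three bracket relations, is exactly the standard self-contained argument one would supply, and it fully establishes the statement as written. One minor quibble with your closing remark: the alternative of letting the $f^s$ span only the derived subalgebra $[\mathfrak{n},\mathfrak{n}]$ requires no extra care either, since $[\mathfrak{n},\mathfrak{n}]\subseteq Z(\mathfrak{n})$ already forces $[e^i,f^j]=0$ for any complementary vectors $e^i$; both choices work verbatim (and the centered-complement choice you made is the one matching how the lemma is used later, e.g.\ in \Cref{thm:twostepnilpotent}).
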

Then, we have the following result to build Casimir elements for 2-step nilpotent Lie algebras:

%{\color{magenta} Let $\mathfrak{n}$ be a two-step nilpotent Lie algebra. If $\mathfrak{n} = \{e^1, \dots, e^n, f^1, \dots, f^n\}$ such that the structure constants $c_s^{ij}$ that define the commutation relations for the $e^i$'s, as given in formula \eqref{condsc}, form a Lie algebra admitting a nontrivial quadratic Casimir, then $\mathfrak{n}$ also admits a nontrivial quadratic Casimir.}
\begin{theorem}\label{thm:twostepnilpotent}
    Let $\mathfrak{n}$ be a two-step nilpotent Lie algebra of dimension $2n$. If $\mathfrak{n}=\{e^1,\dots e^n,f^1,\dots , f^n\}$ and { the coefficients $c_s^{ij}$ in  formula \eqref{condsc} define the structural constants of a Lie algebra that admits a nontrivial quadratic Casimir, then $\mathfrak{n}$ admits a nontrivial quadratic Casimir.}  
\end{theorem}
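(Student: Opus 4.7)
The plan is to lift the given quadratic Casimir of the auxiliary $n$-dimensional Lie algebra with structure constants $c^{ij}_{s}$ to a quadratic Casimir on $\mathfrak{n}$ via a mixed pairing between the $e$--block and the $f$--block of the basis. Let $a_{ij}$ be the symmetric matrix realising the hypothetical quadratic Casimir, so that by \Cref{prop:casimircalc} it satisfies $a_{is} c^{sk}_{j} + a_{js} c^{sk}_{i} = 0$. In the ordered basis $\Set{e^{1},\dots,e^{n},f^{1},\dots,f^{n}}$ of $\mathfrak{n}$, I would consider the ansatz
\begin{equation}
    A =
    \begin{pmatrix}
        0 & a \\ a & 0
    \end{pmatrix}
    \in M_{2n,2n}(\mathbb{R}),
\end{equation}
which is symmetric because $a$ is; equivalently, up to an overall factor, the associated quadratic element in $U\mathfrak{n}$ is
\begin{equation}
    C_{\mathfrak{n}} = a_{ij}\, e^{i}\, f^{j}.
\end{equation}

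The verification I would carry out has two short steps. First, by \Cref{lem:2stepstruct} the only non-vanishing brackets of $\mathfrak{n}$ are $[e^{i},e^{j}]=c^{ij}_{s}f^{s}$, so every $f^{k}$ is central in $\mathfrak{n}$ and $[f^{k},C_{\mathfrak{n}}]_{U\mathfrak{n}}=0$ trivially. The Leibniz expansion on an $e$-generator, together with $[e^{k},f^{j}]=0$, gives
\begin{equation}
    [e^{k},C_{\mathfrak{n}}]_{U\mathfrak{n}} = a_{ij}\, c^{ki}_{s}\, f^{s}f^{j}.
\end{equation}
Second, since $f^{s}$ and $f^{j}$ commute in $U\mathfrak{n}$, the vanishing of this expression is equivalent to the symmetric part of $a_{ij}c^{ki}_{s}$ in $(s,j)$ being zero, i.e.\ $a_{ij}c^{ki}_{s}+a_{is}c^{ki}_{j}=0$. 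Using the skew-symmetry of $c^{ij}_{s}$ in its upper indices and the symmetry of $a_{ij}$, this is precisely the condition \eqref{eq:quadcascond} on $a$, which holds by assumption. Hence $A$ defines a quadratic Casimir of $\mathfrak{n}$ by \Cref{prop:casimircalc}.

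Non-triviality is immediate: if $a\neq 0$ then $C_{\mathfrak{n}}\neq 0$. I would also point out that, as a bonus, when the assumed Casimir $a$ is \emph{non-degenerate} one has $\det A=\pm(\det a)^{2}\neq 0$, so the same construction produces a non-degenerate quadratic Casimir of $\mathfrak{n}$ and hence, through \Cref{thm:principale}, a 1+0 non-homogeneous Hamiltonian operator of hydrodynamic type associated to $\mathfrak{n}$. The main conceptual obstacle, in my view, is choosing the correct ansatz: the naive lift $a_{ij}e^{i}e^{j}$ fails because $[e^{k},e^{i}e^{j}]$ produces uncancellable cross-terms of type $c^{ki}_{s}f^{s}e^{j}$ living in the $f\!\cdot\!e$ sector of $U\mathfrak{n}$, whereas placing one leg of the pairing in the centre of $\mathfrak{n}$ (the $f$-part) collapses those obstructions and reduces the Casimir condition on $\mathfrak{n}$ exactly to the one on $c^{ij}_{s}$.
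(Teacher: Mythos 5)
Your proposal is correct and follows essentially the same route as the paper: the authors also lift the Casimir via the mixed pairing $\tfrac{a_{ij}}{2}\left(e^i f^j + f^i e^j\right)$ (optionally adding an arbitrary central term $b_{ij}f^i f^j$, which changes nothing), use centrality of the $f$-block to dispose of $[C_{\mathfrak{n}},f^k]$, and reduce $[C_{\mathfrak{n}},e^k]=0$ to the condition $a_{is}c^{sk}_j+a_{js}c^{sk}_i=0$ of \Cref{prop:casimircalc}, exactly as you do. Your closing remarks on non-triviality and on the non-degenerate case are consistent with the paper's use of the construction.
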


\begin{proof}
%{\color{red} Credo che possiamo chiamare $\mathfrak{g}^\prime = \mathfrak{g}$, e $\mathfrak{g} = \mathfrak{n}$, avendola chiamata $\mathfrak{n}$ nell'enunciato}

    Let $\mathfrak{g}=\{g^1,\dots g^n\}$ a Lie algebra structure whose structural constants are $c_i^{jk}$ and let us assume $C_{\mathfrak{g}}$ is a Casimir element of $\mathfrak{g}$:
    \begin{equation}
        C_{\mathfrak{g}}=\frac{a_{ij}}{2}\left(g^ig^j+g^jg^i\right).
    \end{equation}
    We define 
    \begin{equation}\label{344}
        C_{\mathfrak{n}}=\frac{a_{ij}}{2}\left(e^if^j+f^ie^j\right)+b_{ij}f^if^j,
    \end{equation}
    where $b_{ij}\in\mathbb{R}$ are arbitrary and we use the simplification $f^if^j=\frac{1}{2}\left(f^if^j+f^jf^i\right)$ due to its commutativity property.

    It now remains to prove that $C_{\mathfrak{n}}$ is a Casimir element of $\mathfrak{n}$. First, we observe that $[C_\mathfrak{n},f^k]$ vanishes for every $k=1,\dots , n$ due to \eqref{condsc}. Now, 
    \begin{align}
       \begin{split} [C_\mathfrak{g},e^k]&=C_\mathfrak{g}e^k-e^kC_\mathfrak{g}\\
        &=\left(\frac{a_{ij}}{2}\left(e^if^j+f^ie^j\right)+b_{ij}f^if^j\right)e^k+\\
        &\hphantom{ciao}-e^k\left(\frac{a_{ij}}{2}\left(e^if^j+f^ie^j\right)+b_{ij}f^if^j\right)\\
        &=\frac{a_{ij}}{2}\left(e^if^j+f^ie^j\right)e^k+\\
        &\hphantom{ciao}-e^k\frac{a_{ij}}{2}\left(e^if^j+f^ie^j\right)\\
        &=\frac{a_{ij}}{2}[\left(e^je^k-e^ke^j\right)f^i+\left(e^ie^k-e^ke^i\right)f^j]\\
        &=\frac{a_{ij}}{2}\left(c^{jk}_lf^lf^i+c^{ik}_lf^lf^j\right)\\
        &=\frac{1}{2}\left(a_{ij}c^{jk}_l+a_{lj}c^{jk}_i\right)f^if^l\\
        &=0.\end{split}
    \end{align}
\end{proof}

\paragraph{A two-step nilpotent algebra with $\Sl(2,\mathbb{R})$ commutation relations.}
Let us consider the algebra $\tilde{\mathfrak{n}}$, explicitly, 
    \begin{equation}
       \tilde{ \mathfrak{n}}=\{e^1,e^2,e^3,f^1,f^2,f^3\},
    \end{equation}
    whose non-zero commutation relations are
    \begin{equation}
        [e^1,e^2]=f^2, \quad [e^1,e^3]=-f^3, \quad [e^2,e^3]=-f^1.
    \end{equation}
    These constants $c^{ij}_k$ define the structure constants of $\mathfrak{su}(1,1)$, so from Theorem \ref{thm:twostepnilpotent} $\tilde{\mathfrak{n}}$ admits a non-degenerate Casimir element of type \eqref{344}, i.e. 
    \begin{equation}
        C=\frac{1}{\alpha} (e^1f^1-e^2f^3-e^3f^2),
    \end{equation} where $\alpha$ is an arbitrary constant. The resulting associated operator is
    \begin{align}
        \begin{split}\tilde{\mathcal{A}}&= \begin{pmatrix}
                \beta^{11}&\beta^{12}&\beta^{13}&\alpha&0&0\\\beta^{12}&\beta^{22}&\beta^{23}&0&0&-\alpha\\\beta^{13}&\beta^{23}&\beta^{33}&0&-\alpha&0\\\alpha&0&0&0&0&0\\0&0&-\alpha&0&0&0\\0&-\alpha&0&0&0&0
            \end{pmatrix}\partial_x+\\&\hphantom{cioaicoaio}+\begin{pmatrix}
            0 & f^{12}+u^4 & f^{13}-u^6 & f^{14} & f^{15} & f^{16} \\
 -f^{12}-u^4 & 0 & f^{23}-u^3 & f^{24} & f^{25} & f^{26} \\
 u^6-f^{13} & u^4-f^{23} & 0 & f^{34} & f^{14}-f^{26} & f^{36} \\
 -f^{14} & -f^{24} & -f^{34} & 0 & 0 & 0 \\
 -f^{15} & -f^{25} & f^{26}-f^{14} & 0 & 0 & 0 \\
 -f^{16} & -f^{26} & -f^{36} & 0 & 0 & 0
        \end{pmatrix}\end{split}
    \end{align}

\begin{remark}
    We remark that Theorem \ref{thm:twostepnilpotent} gives sufficient but not necessary conditions. Indeed, a counterexample is the algebra $\mathfrak{n}_{6,1}$ \cite{snowin} 
    \begin{equation}
        \text{span}\{n^1,\dots, n^6 \}
    \end{equation}
    with the non-zero commutation relations given by 
    \begin{equation}\label{relas}
        [n^4,n^5]=n^2, \quad [n^4,n^6]=n^3, \quad [n^5,n^6]=n^1.
    \end{equation}
    This Lie algebra is 2-step nilpotent and admits the following Casimir invariants
    \begin{equation}
        C_{1}=n^1, \quad C_2=n^2, \quad C_3=n^3, \quad C^*=n^1n^4+n^2n^6-n^3n^5,
    \end{equation}
    so that we can build a non-degenerate quadratic Casimir element. 

    However, we observe that the previous Theorem is not applicable here: the relation \eqref{relas} are not the structure constant of a Lie algebra. This apparent inconsistency in our result is solved considering that the algebra $\mathfrak{n}_{6,1}$ is isomorphic to the algebra $\tilde{\mathfrak{n}}$ considered in the previous example.
 
\end{remark}
\paragraph{Further considerations on $k$-steps nilpotent Lie algebras.}
   % \begin{remark}\label{3-step}
        We stress that there exist also Lie algebra structures with non-degenerate quadratic Casimir elements which are nilpotent with more than two steps. As an example, let us consider the 3-step nilpotent Lie algebra   $\mathfrak{n}_{5,2}=\Span\Set{e^1,\ldots,e^5}$ with commutation relations 
        \begin{equation}
            [e^3,e^4]=e^2, \qquad [e^3,e^5]=e^1, \qquad [e^4,e^5]= e^3.
        \end{equation} It admits the non-degenerate quadratic Casimir element $(e^3)^2 +2e^2e^5 -2e^1e^4$. We will show the operator associated to this Lie algebra in the next Section.
        %\end{remark}

\section{Isomorphism classes of operators associated to low-dimensional Lie algebras}\label{sec4}

In this Section, we present a complete description in low dimensions of the non-homogeneous hydrodynamic type HOs by making use of the isomorphism classes of the associated Lie algebra structures. We stress that our results give all the possible cases of such HOs with non-degenerate leading coefficient. The term ``low'' here is referring to the theory of real Lie algebras, so that is considered up to $n=6$, i.e.\ up to the dimension where Lie algebras are \emph{completely classified} (see below for more details on this
classification).

Strictly speaking, our construction cannot be considered a proper \emph{classification} of operators. Indeed, we do not apply further transformations of variables to the 2-cocycles arising from the previous section. To do this, one should consider such linear transformations that preserve the structure constants of the Lie algebras (i.e. Lie algebra automorphisms) and apply them to the 2-cocycle in order to further reduce their degrees of freedom. In general, this represents a difficult task, indeed one need to build the full group of automorphisms of a given Lie algebra. Computationally, this amounts to characterise the intersection of a huge number of quadrics which is a very complicated problem in algebraic geometry. For our purposes, such a procedure is inessential because as it will be discussed in the example of the Korteweg-de Vries equation it can be useful to keep the free parameters on the components of the 2-cocycle.    

In what follows, we make use of the classification of \emph{real} Lie algebras in 
dimension $n\leq 6$  as reported in \cite[Chapters 15--19]{snowin}. The reader is 
advised that the order of such a  list is different from other presented in the literature, 
but has the advantage to be more  refined in the specific properties of the algebras. 
The topic of classification of Lie algebras is a very old topic, dating back to the 
introduction of Lie algebras themselves. A complete history of the this topic is given 
in~\cite{Popovych2003}. Here we will just resume the foundations that we need to present
the classification of Lie algebras up to dimension $6$. By Levi--Malchev theorem the 
classification of Lie algebras boils down to classify three different
types: the semi-simple algebras, the solvable algebras, and the semidirect sums of 
solvable and semisimple algebras. As recalled in~\Cref{sec3} semi-simple 
Lie algebras of finite dimension are classified by Cartan's criterion by their Killing 
form. Following  Cartan~\cite{Cartan1909}, semi-simple algebras are divided in four infinite 
classed and a finite number of exceptional cases. Without additional assumptions on the 
Lie algebra, Sophus Lie himself classified all complex Lie algebras of dimension less than 
$4$~\cite{LieTransformationGroups}. Then, Bianchi gave the first classification of  3-dimensional real Lie algebras~\cite{Bianchi1918}. Later, 
Mubarakzyanov~\cite{Mubarakzyanov1963a,Mubarakzyanov1963b,Mubarakzyanov1963c} gave a 
complete classification of 4- and 5-dimensional real Lie algebras, and  
in~\cite{Mubarakzyanov1966} classified all 6-dimensional real Lie algebras with  one 
linearly independent non-nilpotent element. Finally, more recently Turkowski 
in~\cite{Turkowski1990} completed Mubarakzyanov’s classification of 
6-dimensional solvable Lie algebras, by classifying real Lie algebras of dimension 6 that 
contain four-dimensional nilradical.

Before presenting the obtained results, we want to remark that \emph{almost all} the arising HOs come from classes of Lie algebras we discussed in the previous section. This underlines how the approach we built in Section \ref{sec3} is quite general, even though some seemingly isolated \emph{unprecedented} examples can arise. A resume of the results
can be found in \Cref{tab:general}.

\begin{table}\centering
\begin{tabular}{cccc}
    %\hline
    Dimension & Operator &  Algebra & Structure \\
    \hline
    2& $\mathcal{A}_{2,1}$ & $2\mathfrak{n}_{1,1}$ & Abelian \\
    \hline
    3& $\mathcal{A}_{3,1}$ & $3\mathfrak{n}_{1,1}$ & Abelian \\
    \hline
    3& $\mathcal{A}_{3,2}$ & $\mathfrak{sl}(2,\mathbb{R})$ & Simple \\
    \hline
     3& $\mathcal{A}_{3,3}$ & $\mathfrak{so}(3,\mathbb{R})$ & Simple \\
    \hline
    4& $\mathcal{A}_{4,1}$ & $4\mathfrak{n}_{1,1}$ & Abelian \\
    \hline
    4& $\mathcal{A}_{4,2}$ & $\mathfrak{s}_{4,6}$ & Solvable \\
    \hline
    4& $\mathcal{A}_{4,3}$& $\mathfrak{s}_{4,7}$ & Solvable \\
    \hline
    4& $\mathcal{A}_{4,4}$ & $\mathfrak{sl}(2,\mathbb{R})\oplus \mathfrak{n}_{1,1}$ & Direct sum \\
    \hline
    4& $\mathcal{A}_{4,5}$ & $\mathfrak{so}(3,\mathbb{R})\oplus \mathfrak{n}_{1,1}$ & Direct sum \\
    \hline
    5& $\mathcal{A}_{5,1}$ & $5\mathfrak{n}_{1,1}$ & Abelian \\
    \hline
     5&$\mathcal{A}_{5,2}$ & $\mathfrak{sl}(2,\mathbb{R})\oplus2\mathfrak{n}_{1,1}$ & Direct sum \\
     \hline
     5& $\mathcal{A}_{5,3}$& $\mathfrak{so}(3,\mathbb{R})\oplus2\mathfrak{n}_{1,1}$ & Direct sum \\
    \hline
    5& $\mathcal{A}_{5,4}$& $\mathfrak{s}_{4,6}\oplus\mathfrak{n}_{1,1}$ & Direct sum \\
    \hline
     5& $\mathcal{A}_{5,5}$& $\mathfrak{s}_{4,7}\oplus\mathfrak{n}_{1,1}$ & Direct sum \\
    \hline
    5&$\mathcal{A}_{5,6}$ & $\mathfrak{n}_{5,2}$ & 3-Step Nilpotent \\
    \hline
     6&$\mathcal{A}_{6,1}$ & $6\mathfrak{n}_{1,1}$ & Abelian \\
     \hline
    6& $\mathcal{A}_{6,2}$ & $\mathfrak{sl}(2,\mathbb{R})\oplus3\mathfrak{n}_{1,1}$ & Direct sum \\
    \hline
    6& $\mathcal{A}_{6,3}$ & $\mathfrak{so}(3,\mathbb{R})\oplus3\mathfrak{n}_{1,1}$ & Direct sum \\
    \hline
    6& $\mathcal{A}_{6,4}$ & $\mathfrak{sl}(2,\mathbb{R})\oplus\mathfrak{sl}(2,\mathbb{R})\cong\mathfrak{so}(2,2,\mathbb{R})$ & Direct sum \\
    \hline
    6& $\mathcal{A}_{6,5}$ & $\mathfrak{so}(3,\mathbb{R})\oplus\mathfrak{so}(3,\mathbb{R})\cong\mathfrak{so}(4,\mathbb{R})$ & Direct sum \\
    \hline
    6& $\mathcal{A}_{6,6}$ & $\mathfrak{sl}(2,\mathbb{R})\oplus\mathfrak{so}(3,\mathbb{R})$ & Direct sum \\
    \hline
    6& $\mathcal{A}_{6,7}$ & $\mathfrak{s}_{4,6}\oplus 2\mathfrak{n}_{1,1}$ & Direct sum \\
    \hline
    6& $\mathcal{A}_{6,8}$ & $\mathfrak{s}_{4,7}\oplus 2\mathfrak{n}_{1,1}$ & Direct sum \\
    \hline
    6& $\mathcal{A}_{6,9}$ & $\mathfrak{n}_{5,2}\oplus\mathfrak{n}_{1,1}$ & Direct sum \\
    \hline
    6& $\mathcal{A}_{6,10}$ & $\mathfrak{n}_{6,1}$ & 2-Step Nilpotent\\
    \hline
    6&$\mathcal{A}_{6,11}$ & $\mathfrak{s}_{6,162}$ & Solvable\\
    \hline
    6& $\mathcal{A}_{6,12}$ & $\mathfrak{s}_{6,163}$ & Solvable\\
    \hline
    6& $\mathcal{A}_{6,13}$ & $\mathfrak{s}_{6,164}$ & Solvable\\
    \hline
    6& $\mathcal{A}_{6,14}$ & $\mathfrak{s}_{6,165}$ & Solvable\\
    \hline
    6& $\mathcal{A}_{6,15}$ & $\mathfrak{s}_{6,166}$ & Solvable\\
    \hline
    6& $\mathcal{A}_{6,16}$ & $\mathfrak{s}_{6,167}$ & Solvable\\
    \hline
    6& $\mathcal{A}_{6,17}$ & $\mathfrak{so}(1,3,\mathbb{R})$ & Simple\\
    \hline
    6& $\mathcal{A}_{6,18}$ & $\mathfrak{sl}(3,\mathbb{R})\ltimes 3\mathfrak{n}_{1,1}$ & Levi decomposable\\
    \hline
    \end{tabular}
    \caption{Complete list of non-homogeneous hydrodynamic type $1+0$ HOs
    associated to the isomorphism classes of low dimensional Lie algebras,
    together with the relevant properties of the Lie algebra.} 
    \label{tab:general}
\end{table}

\subsection{Lie algebras of dimensions $n = 2, 3$}  

In dimension $2$, there are no non-abelian Lie algebra structures compatible with non-degenerate scalar products. This can be shown by direct computations or simply observing that  the system $\eta^{is}c^{jk}_s+\eta^{js}c^{ik}_s=0$ is a homogeneous linear system in the unknown constants $c^{ij}_k$ whose matrix of coefficient has maximal rank, i.e. it only admits the trivial solution $c^{ij}_k=0$.  This means that the associated operator is reduced to a constant form whose ultralocal term is only given by a 2-cocycle $f=f^{12}\frac{\partial }{\partial u^1}\wedge \frac{\partial}{\partial u^2}$ (the related Lie algebra is $2\mathfrak{n}_{1,1}$, where all the structure constants vanish). Note that for $n=2$, every skew-symmetric constant 2-form is a 2-cocycle. 

The general form of the operator is then the following one
\begin{equation}
    \mathcal{A}_{2,1}=\begin{pmatrix}a^{11}&a^{12}\\a^{12}&a^{22}\end{pmatrix}\partial_x+\begin{pmatrix}
        0&f^{12}\\-f^{12}&0
    \end{pmatrix},
\end{equation}where $a^{ij},f^{12}$ are arbitrary constants.

In the $3$-dimensional case, only 3 non-degenerate operators arise. The first one is of course given by the abelian algebra $3\mathfrak{n}_{1,1}$. The general structure of the operator is
\begin{equation}\label{31}
  \mathcal{A}_{3,1}=  \begin{pmatrix}
 a^{11} & a^{12} & a^{13} \\
 a^{12} &a^{22} & a^{23} \\
 a^{13} & a^{23} & a^{33} \\
\end{pmatrix}
\partial_x +
\begin{pmatrix}
 0 & f^{12} & f^{13} \\
 -f^{12} & 0 & f^{23} \\
 -f^{13} & -f^{23} & 0 \\
\end{pmatrix} , 
\end{equation}
where all the coefficients $a^{ij}$ and $f^{ij}$ are real and arbitrary constants.

The second case is $\mathfrak{su}(1,1)$\footnote{This Lie algebra is also indicated by $\mathfrak{sl}(2,\mathbb{R})$ in the Montreal notation.}, whose compatible scalar product depends on one free parameter (as the dimension on the space of quadratic Casimirs) and admits the following form
\begin{equation}\label{32}
  \mathcal{A}_{3,2}=  \begin{pmatrix}
 0 & 0 & \alpha \\
 0 & \frac{\alpha}{2} & 0 \\
 \alpha & 0 & 0 \\
\end{pmatrix}
\partial_x +
\begin{pmatrix}
 0 & u^1+f^{12} & -2 u^2+f^{13} \\
 -u^1-f^{12} & 0 & u^3+f^{13} \\
 2 u^2-f^{13} & -u^3-f^{23} & 0 \\
\end{pmatrix},
\end{equation}
with $\alpha,f^{ij}$ real constants.

We recall that the algebra $\mathfrak{su}(1,1)$ is semi-simple. Moreover, it is in the same isomorphism class as $\mathfrak{sl}(2,\mathbb{R})$, being the split form of the Dynkin diagram associated to $A_2$ (see \cite{Kirillov2008}).  We refer to subsection \ref{semisimple-sub} for further details on $\mathfrak{sl}(2,\mathbb{R})$ and its extensions. 

Finally, the Lie algebra $\mathfrak{so}(3,\mathbb{R})$ gives arise to 
\begin{equation}\label{33}
\mathcal{A}_{3,3}=\begin{pmatrix}
 \alpha & 0 & 0 \\
 0 & \alpha & 0 \\
 0 & 0 & \alpha \\
\end{pmatrix}
\partial_x +
\begin{pmatrix}
 0 & u^3+f^{12} & -u^2+f^{13} \\
 -u^3-f^{12} & 0 & u^1+f^{23} \\
 u^2-f^{13} & -u^1-f^{23} & 0 \\
\end{pmatrix},
\end{equation}
where $\alpha,f^{ij}$ are arbitrary real constants.  

We recall that the algebra $\mathfrak{so}(3,\mathbb{R})$ is semi-simple, and it is the compact form of the Lie algebra associated to the Dynkin diagram $A_2$ (see \cite{Kirillov2008}).  We refer to subsection \ref{semisimple-sub} for further details. 

A key example in the theory of integrable systems and nonlinear wave equations is given by the KdV equation. This turns out to be  bi-Hamiltonian with non-homogeneous structures when considered as a quasilinear system. We show further details in the following paragraph.

\paragraph{The Korteweg-de Vries equation.}\label{kdvsection}

We now consider the well-known KdV equation 
\begin{equation}
    u_t=6uu_x+u_{xxx}.
\end{equation}

First, we introduce new field variables $u^1=u,u^2=u_x$ and $u^3=u_{xx}$, writing accordingly the equation as a non-homogeneous quasilinear system. As described in \cite{mokhov98:_sympl_poiss},  Mokhov found a second transformation of coordinates on the 3-dimensional manifold of the field variables 
\begin{gather}\label{change}
\begin{split}
u^1&=\frac{w^1-w^3}{\sqrt{2}},\hspace{5ex} u^2=w^2,\hspace{5ex} u^3=\frac{w^1+w^3}{\sqrt{2}}+\left(w^1-w^3\right)^2 \,,
\end{split} 
\end{gather}
 which is also known as \emph{local quadratic unimodular change}. Using \eqref{change} into the obtained system,
 the original form of the KdV equation is mapped into 
\begin{equation}\label{kdvsys2}
    \begin{cases}
    w^1_t=-\dfrac{1}{2}\left(w^1-w^3\right)_x+w^2\left(w^1-w^3\right)+\dfrac{1}{\sqrt{2}}w^2\\
    w^2_t=\left(w^1-w^3\right)^2+\dfrac{1}{\sqrt{2}}\left(w^1+w^3\right)\\
    w^3_t=-\dfrac{1}{2}\left(w^1-w^3\right)_x+w^2\left(w^1-w^3\right)-\dfrac{1}{\sqrt{2}}w^2    \end{cases} \,.
\end{equation}
The latter turns out to be  a bi-Hamiltonian system with respect to two non-homogeneous operators of hydrodynamic type
\begin{subequations}\label{kdvvv}
\begin{align}
&\label{opkdv1}\mathcal{A}=\begin{pmatrix}
1&0&0\\0&-1&0\\0&0&-1
\end{pmatrix}\partial_x+\begin{pmatrix}
0&-2w^3&2w^2\\2w^3&0&2w^1\\-2w^2&-2w^1&0
\end{pmatrix}\, ,\\[2ex]
&\mathcal{B}=\frac{1}{2}\begin{pmatrix}
1&0&1\\0&0&0\\1&0&1
\end{pmatrix}\partial_x+\begin{pmatrix}
0&w^1-w^3+\frac{1}{\sqrt{2}}&0\\
w^3-w^1-\frac{1}{\sqrt{2}}&0&w^3-w^1+\frac{1}{\sqrt{2}}\\0&w^1-w^3-\frac{1}{\sqrt{2}}&0
\end{pmatrix}\, .
\end{align}
\end{subequations}
and the Hamiltonian functionals 
\begin{align}
    &H_{\mathcal{A}}=-\frac{1}{2}\int{\left((w^1-w^3)^2-\sqrt{2}(w^1+w^3)\right)\, dx}, \\ &H_{\mathcal{B}}=\int{\left((w^1)^2-(w^2)^2-(w^3)^2\right)\, dx}
\end{align}
such that \eqref{kdvsys2} reads as
\begin{equation}
    \label{biham}w^i_t=\mathcal{A}^{ij}\frac{\delta H_{\mathcal{A}}}{\delta w^j}=\mathcal{B}^{ij}\frac{\delta H_{\mathcal{B}}}{\delta w^j}.
\end{equation}
As Magri showed in \cite{Magri}, finding a pair of HOs $\mathcal{A}$ and $\mathcal{B}$ such that 
a system is written as in \eqref{biham} is strictly connected to the integrability (and the infinite number of symmetries and conserved quantities in involution) of the evolutionary system.  In spite this property is not a novelty for the KdV equation, such an example let us stress that in these new coordinates the bi-Hamiltonian property is given in terms of operators whose structure is the same, i.e. of non-homogeneous hydrodynamic type. We finally point out that the second operator $\mathcal{B}$ does not satisfy the non-degeneracy condition, that is its leading coefficient is a degenerate matrix.

\medskip

Before concluding, let us now observe that the first operator is non-degenerate and is the same shown for the 3-waves systems \eqref{3wav}. We then have that 

\begin{proposition}Through a linear change of variables, the operator $\mathcal{A}$ is mapped into $\mathcal{A}_{3,2}$ for $\alpha=-\frac{1}{2}$ and the 2-cocycle $f$ is identically zero.\end{proposition}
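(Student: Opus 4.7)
The strategy is to invoke \Cref{thm:trans}: since the Darboux form is preserved under invertible linear transformations of the fields, it suffices to exhibit such a transformation sending the triple (metric, structure constants, 2-cocycle) of $\mathcal{A}$ to that of $\mathcal{A}_{3,2}$ at $\alpha=-1/2$ and $f^{ij}=0$. By \Cref{thm:principale} this reduces to matching the underlying Lie algebras and their compatible scalar products.

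First I would read off the Lie brackets from the ultralocal part of both operators. The coefficients of $\mathcal{A}$ give the commutation relations $[w^1,w^2]=-2w^3$, $[w^1,w^3]=2w^2$, $[w^2,w^3]=2w^1$, i.e.\ exactly those of $\So(1,2)$; the operator $\mathcal{A}_{3,2}$ with vanishing 2-cocycle carries instead $[u^1,u^2]=u^1$, $[u^1,u^3]=-2u^2$, $[u^2,u^3]=u^3$, a Chevalley-type presentation of $\Sl(2,\R)$. The isomorphism $\So(1,2)\cong\Sl(2,\R)$ then guarantees the existence of a linear change of variables relating the two.

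Next I would construct this change explicitly. The adjoint action $\mathrm{ad}_{w^2}$ has real spectrum $\{0,\pm 2\}$ on $\Span\Set{w^1,w^3}$ (whereas $\mathrm{ad}_{w^1}$ has purely imaginary eigenvalues), so $w^2$ is the natural real Cartan generator, with eigenvectors $w^1\pm w^3$. A short computation shows that the invertible linear transformation
\begin{equation*}
u^1=\tfrac{1}{2}(w^1+w^3),\qquad u^2=-\tfrac{1}{2}\,w^2,\qquad u^3=\tfrac{1}{2}(w^3-w^1)
\end{equation*}
sends the brackets above precisely to those of $\mathcal{A}_{3,2}$.

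Finally, I would verify the leading coefficient via the tensorial rule $\tilde\eta^{ij}=a^i_k\,\eta^{kl}\,a^j_l$ from \eqref{trasf} with $\eta=\diag(1,-1,-1)$: a direct $3\times 3$ computation yields the antidiagonal-plus-diagonal matrix coinciding with the leading coefficient of $\mathcal{A}_{3,2}$ at $\alpha=-1/2$. Since the ultralocal part of $\mathcal{A}$ is purely linear in the fields, the transformed operator inherits a vanishing 2-cocycle, as required. The only subtle step is the choice of the correct real Cartan element and the normalisation of its root vectors so that $[X,Y]$ reproduces the Cartan generator with the right sign; once this is fixed, all remaining verifications are routine linear algebra.
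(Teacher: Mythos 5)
Your proposal is correct and follows essentially the same route as the paper: both proofs simply exhibit an explicit invertible linear change of variables realising the isomorphism $\mathfrak{g}_1\cong\mathfrak{su}(1,1)\cong\Sl(2,\R)$ and verify, via the tensorial rule \eqref{trasf}, that the leading coefficient $\diag(1,-1,-1)$ is mapped to that of $\mathcal{A}_{3,2}$ with $\alpha=-\tfrac12$ while the cocycle remains zero. Your specific matrix (obtained by diagonalising $\mathrm{ad}_{w^2}$) differs from the paper's choice, which has $\sqrt{3}/2$ entries, but both are valid — your map does send the brackets $[w^1,w^2]=-2w^3$, $[w^1,w^3]=2w^2$, $[w^2,w^3]=2w^1$ and the metric to the target data, the transformation being unique only up to automorphisms preserving the scalar product.
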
 
\begin{proof}Let $\{w^1,w^2,w^3\}$ be a basis of $\mathfrak{g}_1$ such that
\[
[w^1,w^2] = -2w^3,\quad [w^1,w^3]=2w^2,\quad [w^2,w^3]=2w^1,
\]
associated to the Poisson tensor in \eqref{opkdv1}.
Considering the following change of coordinates
\[
\left\{
\begin{array}{lll}
u^1 &=& -w^1 + \frac{\sqrt{3}}{2}w^2-\frac{1}{2}w^3\\
u^2 &=& \frac{\sqrt{3}}{2}w^1 -w^2\\
u^3 &=& w^1 -\frac{\sqrt{3}}{2}w^2-\frac{1}{2}w^3
\end{array}
\right.
\]
we obtain the isomorphism $\Psi: \mathfrak{g}_1 \to \mathfrak{su}(1,1)$, mapping the operator $\mathcal{A}$ into
$$\tilde{\mathcal{A}}=\left(
\begin{array}{ccc}
 0 & 0 & -\frac{1 }{2} \\
 0 & -\frac{1 }{4} & 0 \\
 -\frac{1 }{2} & 0 & 0 \\
\end{array}
\right)\partial_x +
\left(
\begin{array}{ccc}
 0 & u^1 & -2 u^2 \\
 -u^1 & 0 & u^3 \\
 2u^2 & -u^3 & 0 \\
\end{array}
\right).$$
\end{proof}

\subsection{Lie algebras of dimension $n = 4$}

In the $4$-dimensional case, five non-degenerate operators arise. As usual,
the first one we consider is the one associated to the abelian Lie algebra 
of dimension $4$ (namely $4\mathfrak{n}_{1,1}$). It gives rise to the constant operator:
\begin{equation}
    \mathcal{A}_{4,1}=
   \left(
\begin{array}{cccc}
 a^{11} & a^{12} & a^{13} & a^{14} \\
 a^{12} & a^{22} & a^{23} & a^{24} \\
 a^{13} & a^{23} & a^{33} & a^{34} \\
 a^{14} & a^{24} & a^{34} & a^{44} \\
\end{array}
\right)\partial_x +\left(
\begin{array}{cccc}
 0 & f^{12} & f^{13} & f^{14} \\
 -f^{12} & 0 & f^{23} & f^{24} \\
 -f^{13} & -f^{23} & 0 & f^{34} \\
 -f^{14} & -f^{24} & -f^{34} & 0 \\
\end{array}
\right) ,
\end{equation}
where $a^{ij},f^{ij}$ are real.

Moreover, using the cited classification we obtain respectively by solvable 
Lie algebras $\mathfrak{s}_{4,6}$ and $\mathfrak{s}_{4,7}$:
\begin{equation}
    \mathcal{A}_{4,2}=   \left(
\begin{array}{cccc}
 0 & 0 & 0 & \alpha \\
 0 & 0 & -\alpha & 0 \\
 0 & -\alpha & 0 & 0 \\
 \alpha & 0 & 0 & \beta \\
\end{array}
\right)\partial_x +\left(
\begin{array}{cccc}
 0 & 0 & 0 & 0 \\
 0 & 0 & u^1+f^{23} & u^2+f^{24} \\
 0 & -u^1-f^{23} & 0 & -u^3+f^{34} \\
 0 & -u^2-f^{24} & u^3-f^{34} & 0 \\
\end{array}
\right),
\end{equation}
\begin{equation}
    \mathcal{A}_{4,3}=   \left(
\begin{array}{cccc}
 0 & 0 & 0 & \alpha \\
 0 & \alpha & 0 & 0 \\
 0 & 0 & \alpha & 0 \\
 \alpha & 0 & 0 & \beta \\
\end{array}
\right)\partial_x +\left(
\begin{array}{cccc}
 0 & 0 & 0 & 0 \\
 0 & 0 & u^1+f^{23} & -u^3+f^{24} \\
 0 & -u^1-f^{23} & 0 & u^2+f^{34} \\
 0 & u^3-f^{24} & -u^2-f^{34} & 0 \\
\end{array}
\right).
\end{equation}

We observe that these cases are in some sense \emph{unprecedented}, 
because they are the only ones not belonging to any of the classes presented
in Section \ref{sec3}.

We finish $n=4$ with two Lie algebras which are direct sum of $\Sl(2,\R)$ 
with $\mathfrak{n}_{1,1}$, and $\So({3},\mathbb{R})$ with $\mathfrak{n}_{1,1}$:
\begin{equation}
 \mathcal{A}_{4,4}=  \left(
\begin{array}{cccc}
 0 & 0 & \alpha & 0 \\
 0 & \frac{\alpha}{2} & 0 & 0 \\
 \alpha & 0 & 0 & 0 \\
 0 & 0 & 0 & \beta \\
\end{array}
\right)\partial_x +\left(
\begin{array}{cccc}
 0 & u^1+f^{12} & -2 u^2+f^{13} & 0 \\
 -u^1-f^{12} & 0 & u^3+f^{23} & 0 \\
 2 u^2-f^{13} & -u^3-f^{23} & 0 & 0 \\
 0 & 0 & 0 & 0 \\
\end{array}
\right),
\end{equation}
\begin{equation}
 \mathcal{A}_{4,5}=   \left(
\begin{array}{cccc}
 \alpha & 0 & 0 & 0 \\
 0 & \alpha & 0 & 0 \\
 0 & 0 & \alpha & 0 \\
 0 & 0 & 0 & \beta \\
\end{array}
\right)\partial_x +\left(
\begin{array}{cccc}
 0 & u^3+f^{12} & -u^2+f^{13} & 0 \\
 -u^3-f^{12} & 0 & u^1+f^{23} & 0 \\
 u^2-f^{13} & -u^1-f^{23} & 0 & 0 \\
 0 & 0 & 0 & 0 \\
\end{array}
\right),
\end{equation}
where $\alpha,\beta$ and $f^{ij}$ are arbitrary real constants. 
We refer to Section \ref{directsum-sub} for a general discussion of
operators arising from this class of Lie algebras.
%These conclude the 5 cases of $4$-dimensional $1+0$ operators.

 \subsection{Lie algebras of dimension $n=5$}  

As usual, the simplest case is given by the abelian $5$-dimensional Lie algebra $5\mathfrak{n}_{1,1}$ so that the operator is
\begin{equation}
    \mathcal{A}_{5,1}=\eta\partial_x+f
\end{equation}
where $\eta=(a^{ij}),f=(f^{ij})$ are respectively symmetric and skew-symmetric real matrices.

Other four cases are given by direct sums which are $\mathfrak{sl}(2,\mathbb{R})\oplus 2\mathfrak{n}_{1,1}$, $\mathfrak{so}(3,\mathbb{R})\oplus 2\mathfrak{n}_{1,1}$, $\mathfrak{s}_{4,6}\oplus \mathfrak{n}_{1,1}$ and $\mathfrak{s}_{4,7}\oplus \mathfrak{n}_{1,1}$:
\begin{align}
\scriptstyle\mathcal{A}_{5,2}&=
\left(
\begin{array}{c@{\hspace{3pt}}c@{\hspace{3pt}}c@{\hspace{3pt}}c@{\hspace{3pt}}c@{\hspace{3pt}}}
 0 & 0 & \alpha & 0 & 0 \\
 0 & \frac{\alpha}{2} & 0 & 0 & 0 \\
 \alpha & 0 & 0 & 0 & 0 \\
 0 & 0 & 0 & \beta & \delta \\
 0 & 0 & 0 & \delta & \gamma \\
\end{array}
\right)\partial_x +\left(
\begin{array}{c@{\hspace{3pt}}c@{\hspace{3pt}}c@{\hspace{3pt}}c@{\hspace{3pt}}c@{\hspace{3pt}}}
 0 & u^1+f^{12} & -2 u^2+f^{13} & 0 & 0 \\
 -u^1-f^{12} & 0 & u^3+f^{23} & 0 & 0 \\
 2 u^2 -f^{13} & -u^3-f^{23} & 0 & 0 & 0 \\
 0 & 0 & 0 & 0 & f^{45} \\
 0 & 0 & 0 & -f^{45} & 0 \\
\end{array}
\right),
\\
\scriptstyle \mathcal{A}_{5,3}&=
\left(
\begin{array}{c@{\hspace{3pt}}c@{\hspace{3pt}}c@{\hspace{3pt}}c@{\hspace{3pt}}c@{\hspace{3pt}}}
 \alpha & 0 & 0 & 0 & 0 \\
 0 & \alpha & 0 & 0 & 0 \\
 0 & 0 & \alpha & 0 & 0 \\
 0 & 0 & 0 & \beta & \gamma \\
 0 & 0 & 0 & \gamma & \delta \\
\end{array}
\right)\partial_x + \left(
\begin{array}{c@{\hspace{3pt}}c@{\hspace{3pt}}c@{\hspace{3pt}}c@{\hspace{3pt}}c@{\hspace{3pt}}}
 0 & u^3+f^{12} & -u^2+ f^{13} & 0 & 0 \\
 -u^3-f^{12} & 0 & u^1+f^{23} & 0 & 0 \\
 u^2 -f^{13} & -u^1-f^{23} & 0 & 0 & 0 \\
 0 & 0 & 0 & 0 &  f^{45} \\
 0 & 0 & 0 & -f^{45} & 0 \\
\end{array}
\right),
\\
\scriptstyle \mathcal{A}_{5,4}&=
\left(
\begin{array}{c@{\hspace{3pt}}c@{\hspace{3pt}}c@{\hspace{3pt}}c@{\hspace{3pt}}c@{\hspace{3pt}}}
 0 & 0 & 0 & \alpha & 0 \\
 0 & 0 & -\alpha & 0 & 0 \\
 0 & -\alpha & 0 & 0 & 0 \\
 \alpha & 0 & 0 & \beta & \gamma \\
 0 & 0 & 0 & \gamma & \delta \\
\end{array}
\right)\partial_x +\left(
\begin{array}{c@{\hspace{3pt}}c@{\hspace{3pt}}c@{\hspace{3pt}}c@{\hspace{3pt}}c@{\hspace{3pt}}}
 0 & 0 & 0 & 0 & 0 \\
 0 & 0 & u^1+f^{23} & u^2+f^{24} & 0 \\
 0 & -u^1-f^{23} & 0 & -u^3+f^{34} & 0 \\
 0 & -u^2-f^{24} & u^3-f^{34} & 0 & f^{45} \\
 0 & 0 & 0 & -f^{45} & 0 \\
\end{array}
\right),
\\
\scriptstyle \mathcal{A}_{5,5}&=
\left(
\begin{array}{c@{\hspace{3pt}}c@{\hspace{3pt}}c@{\hspace{3pt}}c@{\hspace{3pt}}c@{\hspace{3pt}}}
 0 & 0 & 0 & \alpha & 0 \\
 0 & \alpha & 0 & 0 & 0 \\
 0 & 0 & \alpha & 0 & 0 \\
\alpha & 0 & 0 & \beta & \gamma \\
 0 & 0 & 0 & \gamma & \delta \\
\end{array}
\right)\partial_x +\left(
\begin{array}{c@{\hspace{3pt}}c@{\hspace{3pt}}c@{\hspace{3pt}}c@{\hspace{3pt}}c@{\hspace{3pt}}}
 0 & 0 & 0 & 0 & 0 \\
 0 & 0 & u^1+f^{23} & -u^3+f^{24} & 0 \\
 0 & -u^1-f^{23} & 0 & u^2+f^{34} & 0 \\
 0 & u^3-f^{24} & -u^2-f^{34} & 0 & f^{45} \\
 0 & 0 & 0 & -f^{45} & 0 \\
\end{array}
\right),
\end{align}
where $\alpha,\beta,\gamma,\delta$ and  $f^{ij}$ are arbitrary real constants. 

Finally, the last case is given by $\mathfrak{n}_{5,2}$ which is a 3-step nilpotent Lie algebra {(see subsection \ref{sss:nilpot} in the last paragraph)}:
\begin{equation}
\scriptstyle \mathcal{A}_{5,6}=
    \left(
\begin{array}{c@{\hspace{3pt}}c@{\hspace{3pt}}c@{\hspace{3pt}}c@{\hspace{3pt}}c@{\hspace{3pt}}}
 0 & 0 & 0 & \alpha & 0 \\
 0 & 0 & 0 & 0 & -\alpha \\
 0 & 0 & -\alpha & 0 & 0 \\
 \alpha & 0 & 0 & \beta & \gamma \\
 0 & -\alpha & 0 & \gamma & \delta \\
\end{array}
\right)\partial_x +\left(
\begin{array}{c@{\hspace{3pt}}c@{\hspace{3pt}}c@{\hspace{3pt}}c@{\hspace{3pt}}c@{\hspace{3pt}}}
 0 & 0 & 0 & f^{14} & f^{15} \\
 0 & 0 & 0 & f^{24} & f^{14} \\
 0 & 0 & 0 & u^2+f^{34} & u^1+f^{35} \\
 -f^{14} & -f^{24} & -u^2-f^{34} & 0 & u^3+f^{45} \\
  -f^{15} & -f^{14} & -u^1-f^{35} & -u^3-f^{45} & 0 \\
\end{array}
\right),
\end{equation}
where $\alpha,\beta,\gamma,\delta$ and  $f^{ij}$ are arbitrary real constants.

\subsection{Lie algebras of dimension $n=6$.}
In this case, formulas become very cumbersome, so we choose  not to  present them in the text but rather we will give a description of the Lie algebra associated to non-homogeneous HOs. For an explicit list of the resulting operators we refer to the \Cref{appe2} of the present work. 

As for the other dimensions, the abelian Lie algebra $6\mathfrak{n}_{1,1}$ gives us the constant operator. Next, we consider the cases given by direct sums of lower dimensional Lie algebras:
\begin{itemize}
    \item  five cases given by the direct sum of 3-dimensional Lie algebras whose operator is non-degenerate, namely $\mathfrak{sl}(2,\mathbb{R})\oplus3\mathfrak{n}_{1,1}$, $\mathfrak{so}(3,\mathbb{R})\oplus3\mathfrak{n}_{1,1}$, $\mathfrak{so}(2,2,\mathbb{R})\cong\mathfrak{sl}(2,\mathbb{R})\oplus\mathfrak{sl}(2,\mathbb{R})$, $\mathfrak{so}(4,\mathbb{R})\cong\mathfrak{so}(3,\mathbb{R})\oplus\mathfrak{so}(3,\mathbb{R})$ and $\mathfrak{sl}(2,\mathbb{R})\oplus\mathfrak{so}(3,\mathbb{R})$;
    \item three non-degenerate operators from the direct sum between the solvable Lie algebras of dimension 4 and the 2-dimensional abelian Lie algebra. Respectively, we have $\mathfrak{s}_{4,6}\oplus 2\mathfrak{n}_{1,1}$, $\mathfrak{s}_{4,7}\oplus 2\mathfrak{n}_{1,1}$;
    \item the unique nilpotent Lie algebra of dimension 5 and the 1-dimensional abelian Lie algebra: $\mathfrak{n}_{5,2}\oplus\mathfrak{n}_{1,1}$;
\end{itemize}
Furthermore, we can obtain  the first case of a 2-step nilpotent Lie algebra $\mathfrak{n}_{6,1}$ with a non-degenerate scalar product. Looking at the classification of the Winternitz-\v{S}nobl book \cite{snowin}, we have 6 cases of solvable Lie algebras $\mathfrak{s}_{6,k}$, with $k=162,\dots,167$, whose operator is non-degenerate. Finally, the last two cases are represented by the simple Lie algebra $\mathfrak{so}(1,3,\mathbb{R})$ and the Lie algebra $\mathfrak{sl}(3,\mathbb{R})\ltimes 3\mathfrak{n}_{1,1}$, which is Levi decomposable.

\section{Conclusions}
In this paper, we investigated non-homogeneous hydrodynamic type operators which are composed by a first-order homogeneous one (also named after Dubrovin and Novikov) and a compatible Poisson tensor. Here, we studied operators in Darboux form (i.e. with constant  leading coefficient entries) for which the Poisson tensor is linear in the field variables. We studied their geometric interpretation in details, showed that the leading coefficient is related to the Casimir of Lie algebra structure associated to the Poisson tensor and listed ways to construct non-homogeneous Hamiltonian operators for classes of Lie algebras. As a result, we are able to characterise up to dimension $n=6$ the operators whose leading coefficient is non-degenerate. 

As a key example, we presented the Korteweg-de Vries equation in an equivalent form, i.e. written as a non-homogeneous quasilinear system. In this case, the bi-Hamiltonian property was proved by Mokhov by showing the existence of two compatible operators of type $1+0$. We stress that the first one is  geometrically described in the framework we introduced in this paper.  However, the latter example reveals that further investigations are needed to study from  the point of view of the Lie algebras the bi-Hamiltonian property. 

As a preliminary result, let us consider two non-homogeneous operators in Darboux form
\begin{equation}
    \mathcal{A}^{ij}=g_1^{ij}\partial_x+\left(c^{ij}_{1,k}u^k+f^{ij}_1\right),\qquad \mathcal{B}^{ij}=g^{ij}_2\partial_x+\left(c^{ij}_{2,k}u^k+f^{ij}_2\right).
\end{equation}

Albeit this choice can seem restrictive, the example of the KdV equation presented in Section \ref{sec4} (see operators in \eqref{kdvvv}) is exactly of this type, allowing the possibility that the second structure is degenerate.  

 We now recall that $\mathcal{A}$ and $\mathcal{B}$ are compatible by definition if the pencil $\mathcal{A}+\lambda\mathcal{B}$ preserves the Hamiltonian property. In our case, we require that 
$$\left(g^{ij}_1+\lambda g^{ij}_2\right)\partial_x+\left(c^{ij}_{1,k}u^k+f^{ij}_1+\lambda(c^{ij}_{2,k}u^k+f^{ij}_2)\right)$$
is Hamiltonian. Note that $g^{ij}_1+\lambda g^{ij}_2$ is a constant scalar product, hence the first-order part is always Hamiltonian. Then, we only need to study when the tensor
\begin{equation}
   \omega^{ij}_\lambda= c^{ij}_{1,k}u^k+f^{ij}_1+\lambda(c^{ij}_{2,k}u^k+f^{ij}_2)
\end{equation} is a Poisson tensor, i.e. we require $\tilde{c}^{ij}_k:=c^{ij}_{1,k}+\lambda c^{ij}_{2,k}$ to satisfy the Jabobi identity and $f^{ij}_1+\lambda f^{ij}_2$ to be a  2-cocycle for the Lie algebra whose structure constants are given by $\tilde{c}^{ij}_k$. 
As a consequence, the following result holds
\begin{theorem}
    Two non-homogeneous $1+0$ Hamiltonian operators in Darboux form are compatible if and only if 
   \begin{subequations}\label{eq:DRV_condition} \begin{align}
        &c_{2,p}^{ij}c_{1,s}^{pk}+c_{2,p}^{jk}c_{1,s}^{pi}+c_{2,p}^{ki}c_{1,s}^{pj} + c_{1,p}^{ij}c_{2,s}^{pk}+c_{1,p}^{jk}c_{2,s}^{pi}+c_{1,p}^{ki}c_{2,s}^{pj} = 0,\label{drv1}\\
 &c_{2,p}^{ij}f_1^{pk}+c_{2,p}^{jk}f_1^{pi}+c_{2,p}^{ki}f_1^{pj} +c_{1,p}^{ij}f_2^{pk}+c_{1,p}^{jk}f_2^{pi}+c_{1,p}^{ki}f_2^{pj} = 0,\label{drv2}
    \\
    \label{condgc2}
&g_1^{is}c_{2,s}^{jk}+g_1^{js}c_{2,s}^{ik}+{g}_2^{is}c_{1,s}^{jk}+{g}_2^{js}c_{1,s}^{ik} = 0. 
    \end{align}
    \end{subequations}
\end{theorem}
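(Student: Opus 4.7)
My plan is to reduce the statement to a bilinearisation of \Cref{cor:mokhov10} applied to the pencil $\mathcal{A}+\lambda\mathcal{B}$. Since both $\mathcal{A}$ and $\mathcal{B}$ are in Darboux form, the pencil has constant leading coefficient $g_1^{ij}+\lambda g_2^{ij}$ and ultralocal part $\omega_\lambda^{ij}=(c^{ij}_{1,k}+\lambda c^{ij}_{2,k})u^k+(f_1^{ij}+\lambda f_2^{ij})$, as already observed in the paragraph preceding the theorem. By \Cref{cor:mokhov10}, the pencil is Hamiltonian for every $\lambda\in\mathbb{R}$ if and only if the constants $\tilde c^{ij}_k=c^{ij}_{1,k}+\lambda c^{ij}_{2,k}$ satisfy the Jacobi identity, $\tilde f^{ij}=f_1^{ij}+\lambda f_2^{ij}$ is a 2-cocycle on the Lie algebra so defined, and $\tilde g^{ij}=g_1^{ij}+\lambda g_2^{ij}$ is a compatible scalar product, all three conditions being imposed for every value of $\lambda$.

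The core of the argument then consists in expanding each of these three conditions as a polynomial identity in $\lambda$ and collecting coefficients of $\lambda^0$, $\lambda^1$ and $\lambda^2$. Since the identity must hold for every $\lambda$, each coefficient must vanish separately. For the Jacobi identity $\tilde c^{ij}_p\tilde c^{pk}_s+\mathrm{cyc}(i,j,k)=0$, the coefficients of $\lambda^0$ and $\lambda^2$ reproduce the Jacobi identities for $c_1$ and $c_2$, which hold since $\mathcal{A}$ and $\mathcal{B}$ are individually Hamiltonian; the $\lambda^1$ cross term is precisely \eqref{drv1}. For the 2-cocycle condition \eqref{compa1} applied to $\tilde c$ and $\tilde f$, the orders $\lambda^0$ and $\lambda^2$ give the individual cocycle conditions for $(c_1,f_1)$ and $(c_2,f_2)$, while the $\lambda^1$ cross term yields \eqref{drv2}. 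Finally, expanding the compatibility $\tilde g^{is}\tilde c^{jk}_s+\tilde g^{js}\tilde c^{ik}_s=0$, cf.\ \eqref{compa2}, recovers the individual compatibility conditions at orders $\lambda^0$ and $\lambda^2$, and produces exactly \eqref{condgc2} at order $\lambda^1$.

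Putting these three expansions together, the pencil condition holds for every $\lambda\in\mathbb{R}$ if and only if the three cross-term identities \eqref{drv1}, \eqref{drv2}, \eqref{condgc2} all hold, giving both implications of the theorem at once. The only delicate point, and hence the main obstacle I foresee, is purely of bookkeeping nature: one has to verify that the cyclic symmetrisation over $(i,j,k)$ arising in the expansion of the Jacobi and cocycle cross terms matches the six-term forms appearing in \eqref{drv1} and \eqref{drv2}, and that the symmetrisation over $(i,j)$ in the metric compatibility matches the four-term form in \eqref{condgc2}. No genuinely new algebraic input is required beyond \Cref{cor:mokhov10}; the theorem is essentially its ``polarised'' version with respect to the pencil parameter.
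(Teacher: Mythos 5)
Your proposal is correct and coincides with the paper's own argument: the paper likewise applies the Darboux-form Hamiltonianity conditions to the pencil $\mathcal{A}+\lambda\mathcal{B}$ and collects the coefficients of $\lambda^{k}$, $k=0,1,2$, with the orders $\lambda^{0},\lambda^{2}$ absorbed by the individual Hamiltonianity of $\mathcal{A}$ and $\mathcal{B}$ and the $\lambda^{1}$ cross terms giving \eqref{drv1}, \eqref{drv2}, \eqref{condgc2}. Nothing essential differs, so no further comparison is needed.
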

\begin{proof}
The proof easily follows collecting for $\lambda^k$ ($k=0,1,2$) the conditions required for the Hamiltonianity of $\mathcal{A}+\lambda\mathcal{B}$. From the Jacobi identity for $\tilde{c}^{ij}_k$ we obtain \eqref{drv1}, from the 2-cocicle requirement condition \eqref{drv2} is needed and finally from the compatibility between the scalar product $g^{ij}_1+\lambda g^{ij}_2$ and the Poisson tensor,  \eqref{condgc2} is obtained.
\end{proof}

Note that condition \eqref{drv1} is equivalent to the requirment that the two Lie algebra structures are compatible, i.e. that $[\;,\;]_1+\lambda [\;,\;]_2$ is again a Lie bracket (see \cite{liecom1,liecom2,liecom3}). 

An application of the previous result is shown in the following Example:

\begin{example}
    Let $\{w^1,w^2,w^3\}$ be a basis of $\mathfrak{g}$ such that
\[
[w^1,w^2] = -2w^3,\quad [w^1,w^3]=2w^2,\quad [w^2,w^3]=2w^1,
\]
and let $\{\tilde{w}^1,\tilde{w}^2,\tilde{w}^3\}$ be a basis of $\tilde{\mathfrak{g}}$ such that
\[
[\tilde{w}^1,\tilde{w}^2] = \tilde{w}^1 -\tilde{w}^3,\quad [\tilde{w}^2,\tilde{w}^3] = \tilde{w}^3 -\tilde{w}^1.
\]
Using the condition of compatibility of the scalar product, 
we obtain 
\[
\eta = \begin{pmatrix}
-\alpha & 0 & 0 \\
0 & \alpha & 0 \\
0 & 0 & \alpha
\end{pmatrix},\qquad 
\tilde{\eta} = \begin{pmatrix}
\tilde{\alpha} & \tilde{\beta} & \tilde{\alpha} \\
\tilde{\beta} & \tilde{\gamma} & \tilde{\beta} \\
\tilde{\alpha} & \tilde{\beta} & \tilde{\alpha}
\end{pmatrix}.
\]
After setting the compatibility \eqref{condgc2}
where $c_{1,s}^{ij}$ and $c_{2,s}^{ij}$ are the structure constants of $\mathfrak{g}$ and $\tilde{\mathfrak{g}}$ respectively, one has
\[
\eta = \begin{pmatrix}
-\alpha & 0 & 0 \\
0 & \alpha & 0 \\
0 & 0 & \alpha
\end{pmatrix},
\qquad
\tilde{\eta} = \begin{pmatrix}
-\frac{\alpha}{2} & 0 & -\frac{\alpha}{2} \\
0 & 0 & 0 \\
-\frac{\alpha}{2} & 0 & -\frac{\alpha}{2}
\end{pmatrix}.
\]

Note that the condition (\ref{drv1}) is identically zero, so it is satisfied. The same happens for condition (\ref{drv2}), so we have no constraints on cocycles. Then, the operators for the algebras $\mathfrak{g}$ and $\tilde{\mathfrak{g}}$ are
\begin{equation}
\begin{pmatrix}
-\alpha & 0 & 0 \\
0 & \alpha & 0 \\
0 & 0 & \alpha
\end{pmatrix}\partial_x + 
\begin{pmatrix}
0 & -2w^3 & 2w^2\\
2w^3 & 0 & 2w^1\\
-2w^2 & -2w^1 & 0
\end{pmatrix} +
\begin{pmatrix}
0 & f_1^{12} & f_1^{13}\\
-f_1^{12} & 0 & f_1^{23}\\
-f_1^{13} & - f_1^{23} & 0
\end{pmatrix}
\end{equation}
and
\begin{equation}
\begin{pmatrix}
-\frac{\alpha}{2} & 0 & -\frac{\alpha}{2} \\
0 & 0 & 0 \\
-\frac{\alpha}{2} & 0 & -\frac{\alpha}{2}
\end{pmatrix}\partial_x + 
(w^1-w^3)\begin{pmatrix}
0 & 1 & 0\\
-1 & 0 & -1\\
0 & 1 & 0
\end{pmatrix} +
\begin{pmatrix}
0 & f_2^{12} & f_2^{13}\\
-f_2^{12} & 0 & f_2^{23}\\
-f_2^{13} & - f_2^{23} & 0
\end{pmatrix},
\end{equation}
respectively.

The Lie algebra $\mathfrak{g}$ is isomorphic to Lie algebra $\mathfrak{sl}(2,\mathbb{R})$, whose non--zero Lie bracket are
\[
[w^1,w^3] = -2w^2,\quad [w^1,w^2]=w^1,\quad [w^2,w^3]=w^3,
\]
and $\tilde{\mathfrak{g}}$ is isomorphic to  $\mathfrak{n}_{3,1}$.

%By imposing the condition (\ref{eqn:compatibility1}) we get
%\[
%g_{\mathfrak{sl}(2,\mathbb{R})} = \begin{pmatrix}
%0 & 0 & \alpha \\
%0 & \frac{\alpha}{2} & 0 \\
%\alpha & 0 & 0
%\end{pmatrix},\qquad 
%\tilde{g}_{\mathfrak{n}_{3,1}} = \begin{pmatrix}
%0 & 0 & 0 \\
%0 & \tilde{\alpha} & \tilde{\beta} \\
%0 & \tilde{\beta} & \tilde{\gamma}
%\end{pmatrix}.
%\]
%Also, by setting the condition (\ref{eqn:compatibility2}), we obtain

%\begin{equation}
%g_{\mathfrak{sl}(2,\mathbb{R})} = \begin{pmatrix}
%0 & 0 & \alpha \\
%0 & \frac{\alpha}{2} & 0 \\
%\alpha & 0 & 0
%\end{pmatrix}, \qquad
%\tilde{g}_{\mathfrak{n}_{3,1}} = %\begin{pmatrix}
%0 & 0 & 0 \\
%0 & 0 & 0 \\
%0 & 0 & \alpha
%\end{pmatrix}
%\end{equation}

%The isomorphism between $\mathfrak{g}_2$ and $\mathfrak{g}_3$ is given by
%\[
%F = \begin{pmatrix}
%	1 & 0 & 0\\
%	0 & 1 & 0\\
%	1 & 0 & 1
%	\end{pmatrix},
%\]
%and we have that
%\[
%(F^{-1})^T \tilde{g} F^{-1} = 
%\begin{pmatrix}
%	0 & 0 & 0 \\
%	0 & 0 & 0 \\
%	0 & 0 & -\frac{\alpha}{2}
%	\end{pmatrix},
%\]
%and this is consistent.
\end{example}

As it turned out in the Example, a crucial role is here played by the second Hamiltonian structure, which often happens to be degenerate.  This is a common fact when dealing with non-homogeneous HOs of hydrodynamic type as deeply investigated in \cite{DellAVer1}. 

We plan to investigate the  bi-Hamiltonian structures of this case into details in another future work on degenerate operators and compatible pairs in low dimensions. We observe despite being Lie algebras classified, the previous example shows that we can assume only one of the two structures to be in canonical form in the sense of Lie algebras \cite[Part 3]{snowin}. Indeed, we can choose the second structure to be compatible with the first one up to Lie algebra isomorphisms. For instance, in the KdV example we cannot choose as second structure exactly in the canonical form of $\mathfrak{n}_{3,1}$ but we have to properly select the representative.

\subsection*{Acknowledgments}
%********************************************************************
The authors thank the interesting comments and remarks of Michele Graffeo. 

All the authors are member of Gruppo Nazionale per la Fisica Matematica (GNFM) of the Istituto Nazionale di Alta Matematica (INdAM). GG's and PV’s research was partially supported by the research project Mathematical Methods in Non-Linear Physics (MMNLP) and by the Commissione Scientifica Nazionale – Gruppo 4 – Fisica Teorica of the Istituto Nazionale di Fisica Nucleare (INFN). PV is also supported by the project “An artificial intelligence approach for risk assessment and prevention of low back pain: towards precision spine care”, PNRR-MAD-2022-12376692, CUP: J43C22001510001 funded by the European Union - Next Generation EU - NRRP M6C2 - Investment 2.1 Enhancement and strengthening of biomedical research in the NHS.

\paragraph{Data Availability} Not applicable. \paragraph{Declarations}

 The authors have no competing interests to declare that are relevant to the content of this article.

\clearpage 

\appendix

\section{Hamiltonian operator associated to the split form of the $\galg_2$ Lie algebra.}\label{app:g2}

In this Appendix we show the 1+0 Hamiltonian operator associated to the exceptional
Lie algebra $\galg_2$ in its split form. The Lie algebra $\galg_2$ is the exceptional Lie algebra
of smallest dimension being $\dim\galg_2$. For more information and a proof of the construction
of this exceptional Lie algebra we refer to \cite{FultonHarris1991}, while a survey of
its history and importance in geometric problem is given~\cite{Agricola2008}.

The split form of the Lie algebra $\galg_2=\Span\{e^1,\ldots,e^{14}\}$ has the following commutation relations:
\begin{align*}
    \begin{split}%{ccccccccc}
      &  [e^1, e^3] = 2e^3, [e^1, e^4] = -3e^4, [e^1, e^5] = -e^5,  [e^1, e^6] = e^6, [e^1, e^7] = 3e^7, \\
      &
        [e^1, e^9] = -2e^9, [e^1, e^{10}] = 3e^{10}, [e^1, e^{11}] = e^{11}, [e^1, e^{12}] = -e^{12}, [e^1, e^{13}] = -3e^{13}, 
        \\
       & [e^2, e^3] = -e^3, [e^2, e^4] = 2e^4, [e^2, e^5] = e^5, [e^2, e^7] = -e^7, [e^2, e^8] = e^8, [e^2, e^9] = e^9, \\ 
       &[e^2, e^{10}] = -2e^{10}, [e^2, e^{11}] = -e^{11},  [e^2, e^{13}] = e^{13}, [e^2, e^{14}] = -e^{14}, [e^3, e^4] = e^5,\\
       & [e^3, e^5] = 2e^6, [e^3, e^6] = -3e^7, [e^3, e^9] = -e^1, 
        [e^3, e^{11}] = -3e^{10}, [e^3, e^{12}] = -2e^{11},\\
       & [e^3, e^{13}] = e^{12}, [e^4, e^7] = -e^8, [e^4, e^{10}] = -e^2, 
          [e^4, e^{11}] = e^9, [e^4, e^{14}] = e^{13},\\
       & [e^5, e^6] = -3e^8, [e^5, e^9] = 3e^4, [e^5, e^{10}] = -e^3, 
         [e^5, e^{11}] = -3e^2-e^1, [e^5, e^{12}] = 2e^{9}, \\
       & [e^5, e^{14}] = e^{12}, [e^6, e^9] = 2e^5, [e^6, e^{11}] = -2e^3, 
         [e^6, e^{12}] = -3e^2-2e^1, [e^6, ^{13}] = -e^9,\\
       & [e^6, e^{14}] = -e^{11}, [e^7, e^9] = -e^6, [e^7, e^{12}] = e^3, 
         [e^7, e^{13}] = -e^2-e^1, [e^7, e^{14}] = -e^{10},\\
       & [e^8, e^{10}] = -e^7, [e^8, e^{11}] = -e^6, [e^8, e^{12}] = e^5, 
         [e^8, e^{13}] = e^4, [e^8, e^{14}] = -2e^2-e^1,\\
       & [e^9, e^{10}] = e^{11}, [e^9, e^{11}] = 2e^{12}, [e^9, e^{12}] = -3e^{13},[e^{10}, e^{13}] = -e^{14}, [e^{11}, e^{12}] = -3e^{14}
    \end{split}
\end{align*}
Within this choice, the Cartan subalgebra of $\galg_2$ is generated by $e^1$ and $e^2$ ($\galg_2$ has rank 2).

\begin{equation*}
\eta = 
\left(
\begin{array}{cccccccccccccc}
 -6 \alpha & 3 \alpha & 0 & 0 & 0 & 0 & 0 & 0 & 0 & 0 & 0 & 0 & 0 & 0 \\
 3 \alpha & -2 \alpha & 0 & 0 & 0 & 0 & 0 & 0 & 0 & 0 & 0 & 0 & 0 & 0 \\
 0 & 0 & 0 & 0 & 0 & 0 & 0 & 0 & 3 \alpha & 0 & 0 & 0 & 0 & 0 \\
 0 & 0 & 0 & 0 & 0 & 0 & 0 & 0 & 0 & \alpha & 0 & 0 & 0 & 0 \\
 0 & 0 & 0 & 0 & 0 & 0 & 0 & 0 & 0 & 0 & 3 \alpha & 0 & 0 & 0 \\
 0 & 0 & 0 & 0 & 0 & 0 & 0 & 0 & 0 & 0 & 0 & 3 \alpha & 0 & 0 \\
 0 & 0 & 0 & 0 & 0 & 0 & 0 & 0 & 0 & 0 & 0 & 0 & \alpha & 0 \\
 0 & 0 & 0 & 0 & 0 & 0 & 0 & 0 & 0 & 0 & 0 & 0 & 0 & \alpha \\
 0 & 0 & 3 \alpha & 0 & 0 & 0 & 0 & 0 & 0 & 0 & 0 & 0 & 0 & 0 \\
 0 & 0 & 0 & \alpha & 0 & 0 & 0 & 0 & 0 & 0 & 0 & 0 & 0 & 0 \\
 0 & 0 & 0 & 0 & 3 \alpha & 0 & 0 & 0 & 0 & 0 & 0 & 0 & 0 & 0 \\
 0 & 0 & 0 & 0 & 0 & 3 \alpha & 0 & 0 & 0 & 0 & 0 & 0 & 0 & 0 \\
 0 & 0 & 0 & 0 & 0 & 0 & \alpha & 0 & 0 & 0 & 0 & 0 & 0 & 0 \\
 0 & 0 & 0 & 0 & 0 & 0 & 0 & \alpha & 0 & 0 & 0 & 0 & 0 & 0 \\
\end{array}
\right)
\end{equation*}
\begin{landscape}
{\begin{equation*}
\omega = 
\left(
\begin{array}{cccccccccccccc}
 0 & 0 & 2 u^3 & -3 u^4 & -u^5 & u^6 & 3 u^7 & 0 & -2 u^9 & 3 u^{10} & u^{11} & -u^{12} & -3 u^{13} & 0 \\
 0 & 0 & -u^3 & 2 u^4 & u^5 & 0 & -u^7 & u^8 & u^9 & -2 u^{10} & -u^{11} & 0 & u^{13} & -u^{14} \\
 -2 u^3 & u^3 & 0 & u^5 & 2 u^6 & -3 u^7 & 0 & 0 & -u^1 & 0 & -3 u^{10} & -2 u^{11} & u^{12} & 0 \\
 3 u^4 & -2 u^4 & -u^5 & 0 & 0 & 0 & -u^8 & 0 & 0 & -u^2 & u^9 & 0 & 0 & u^{13} \\
 u^5 & -u^5 & -2 u^6 & 0 & 0 & -3 u^8 & 0 & 0 & 3 u^4 & -u^3 & -u^1-3 u^2 & 2 u^9 & 0 & u^{12} \\
 -u^6 & 0 & 3 u^7 & 0 & 3 u^8 & 0 & 0 & 0 & 2 u^5 & 0 & -2 u^3 & -2 u^1-3 u^2 & -u^9 & -u^{11} \\
 -3 u^7 & u^7 & 0 & u^8 & 0 & 0 & 0 & 0 & -u^6 & 0 & 0 & u^3 & -u^1-u^2 & -u^{10} \\
 0 & -u^8 & 0 & 0 & 0 & 0 & 0 & 0 & 0 & -u^7 & -u^6 & u^5 & u^4 & -u^1-2 u^2 \\
 2 u^9 & -u^9 & u^1 & 0 & -3 u^4 & -2 u^5 & u^6 & 0 & 0 & u^{11} & 2 u^{12} & -3 u^{13} & 0 & 0 \\
 -3 u^{10} & 2 u^{10} & 0 & u^2 & u^3 & 0 & 0 & u^7 & -u^{11} & 0 & 0 & 0 & -u^{14} & 0 \\
 -u^{11} & u^{11} & 3 u^{10} & -u^9 & u^1+3 u^2 & 2 u^3 & 0 & u^6 & -2 u^{12} & 0 & 0 & -3 u^{14} & 0 & 0 \\
 u^{12} & 0 & 2 u^{11} & 0 & -2 u^9 & 2 u^1+3 u^2 & -u^3 & -u^5 & 3 u^{13} & 0 & 3 u^{14} & 0 & 0 & 0 \\
 3 u^{13} & -u^{13} & -u^{12} & 0 & 0 & u^9 & u^1+u^2 & -u^4 & 0 & u^{14} & 0 & 0 & 0 & 0 \\
 0 & u^{14} & 0 & -u^{13} & -u^{12} & u^{11} & u^{10} & u^1+2 u^2 & 0 & 0 & 0 & 0 & 0 & 0 \\
\end{array}
\right)
\end{equation*}
}
{%\footnotesize 
\begin{equation*}
f = 
\left(
\begin{array}{cccccccccccccc}
 0 & 0 & a^1 & a^2 & a^3 & a^4 & a^5 & 0 & 0 & -3 a^7 & 0 & 0 & -3 a^7 & 0 \\
 0 & 0 & -\frac{a^1}{2} & -\frac{2a^2}{3} & -a^3 & 0 & -\frac{a^5}{3} & a^6 & 0 & 2 a^7 & 0 & 0 & a^7 & a^{10} \\
 -a^1 & \frac{a^1}{2} & 0 & -a^3 & 2 a^4 & -a^5 & 0 & 0 & a^9-a^8 & 0 & 3 a^7 & 0 & 0 & 0 \\
 -a^2 & \frac{2 a^2}{3} & a^3 & 0 & 0 & 0 & -a^6 & 0 & 0 & a^8 & 0 & 0 & 0 & a^7 \\
 -a^3 & a^3 & -2 a^4 & 0 & 0 & -3 a^6 & 0 & 0 & -a^2 & -\frac{a^1}{2} & 2 a^8+a^9 & 0 & 0 & 0 \\
 -a^4 & 0 & a^5 & 0 & 3 a^6 & 0 & 0 & 0 & -2 a^3 & 0 & -a^1 & a^8+2 a^9 & 0 & 0 \\
 -a^5 & \frac{a^5}{3} & 0 & a^6 & 0 & 0 & 0 & 0 & -a^4 & 0 & 0 & \frac{a^1}{2} & a^9 & a^7 \\
 0 & -a^6 & 0 & 0 & 0 & 0 & 0 & 0 & 0 & -\frac{a^5}{3} & -a^4 & -a^3 & -\frac{a^2}{3} & a^8+a^9 \\
 0 & 0 & a^8-a^9 & 0 & a^2 & 2 a^3 & a^4 & 0 & 0 & 0 & 0 & -3 a^7 & 0 & 0 \\
 3 a^7 & -2 a^7 & 0 & -a^8 & \frac{a^1}{2} & 0 & 0 & \frac{a^5}{3} & 0 & 0 & 0 & 0 & a^{10} & 0 \\
 0 & 0 & -3 a^7 & 0 & -2 a^8-a^9 & a^1 & 0 & a^4 & 0 & 0 & 0 & 3 a^{10} & 0 & 0 \\
 0 & 0 & 0 & 0 & 0 & -a^8-2 a^9 & -\frac{a^1}{2} & a^3 & 3 a^7 & 0 & -3 a^{10} & 0 & 0 & 0 \\
 3 a^7 & -a^7 & 0 & 0 & 0 & 0 & -a^9 & \frac{a^2}{3} & 0 & -a^{10} & 0 & 0 & 0 & 0 \\
 0 & -a^{10} & 0 & -a^7 & 0 & 0 & -a^7 & -a^8-a^9 & 0 & 0 & 0 & 0 & 0 & 0 \\
\end{array}
\right)
\end{equation*}}
\end{landscape}

\newpage

\section{Explicit form of operators in 6 components}\label{appe2}

Also in this case, the abelian Lie algebra $6\mathfrak{n}_{1,1}$ gives us the constant operator
\begin{equation*}
\scriptstyle
  \mathcal{A}_{6,1}=  \eta\partial_x +f,
\end{equation*}
where $\eta=(a^{ij}),f=(f^{ij})$ are respectively symmetric and skew-symmetric real constants.

Additionally, we have 5 cases given by the direct sum of 3-dimensional Lie algebras whose operator is non-degenerate:

\begin{itemize}
    \item $\mathfrak{sl}(2,\mathbb{R})\oplus3\mathfrak{n}_{1,1}$ with associated
        operator:
        \begin{equation*}
            \begin{aligned}
            \mathcal{A}_{6,2} &=
            \begin{pmatrix}
            0 & 0 & g_{13} & 0 & 0 & 0 \\
            0 & 2 g_{13} & 0 & 0 & 0 & 0 \\
            g_{13} & 0 & 0 & 0 & 0 & 0 \\
            0 & 0 & 0 & g_{44} & g_{45} & g_{46} \\
            0 & 0 & 0 & g_{45} & g_{55} & g_{56} \\
            0 & 0 & 0 & g_{46} & g_{56} & g_{66}
            \end{pmatrix}
            \partial_x
            \\
            &+\begin{pmatrix}
                0 & -2 u^1+f^{12} & u^2+f^{13} & 0 & 0 & 0 \\
            2 u^1-f^{12} & 0 & -2 u^3+f^{23} & 0 & 0 & 0 \\
            -u^2-f^{13} & 2 u^3-f^{23} & 0 & 0 & 0 & 0 \\
            0 & 0 & 0 & 0 & f^{45} & f^{46} \\
            0 & 0 & 0 & -f^{45} & 0 & f^{56} \\
            0 & 0 & 0 & -f^{46} & -f^{56} & 0
            \end{pmatrix},    
            \end{aligned}
        \end{equation*}        
    \item $\mathfrak{so}(3,\mathbb{R})\oplus3\mathfrak{n}_{1,1}$ with associated
        operator:
        \begin{equation*}
            \begin{aligned}
            \mathcal{A}_{6,3} &=
            \begin{pmatrix}
                g_{22} & 0 & 0 & 0 & 0 & 0 \\
                0 & 2 g_{22} & 0 & 0 & 0 & 0 \\
                0 & 0 & g_{22} & 0 & 0 & 0 \\
                0 & 0 & 0 & g_{44} & g_{45} & g_{46} \\
                0 & 0 & 0 & g_{45} & g_{55} & g_{56} \\
                0 & 0 & 0 & g_{46} & g_{56} & g_{66} \\
            \end{pmatrix}
            \partial_x 
            \\
            &+ 
            \begin{pmatrix}
                0 & -u^3+f^{12} & u^2+f^{13} & 0 & 0 & 0 \\
                u^3-f^{12} & 0 & -u^1+f^{23} & 0 & 0 & 0 \\
                -u^2-f^{13} & u^1-f^{23} & 0 & 0 & 0 & 0 \\
                0 & 0 & 0 & 0 & f^{45} & f^{46} \\
                0 & 0 & 0 & -f^{45} & 0 & f^{56} \\
                0 & 0 & 0 & -f^{46} & -f^{56} & 0
            \end{pmatrix},    
            \end{aligned}
        \end{equation*}
    \item $\mathfrak{so}(2,2,\mathbb{R})\cong
        \mathfrak{sl}(2,\mathbb{R})\oplus\mathfrak{sl}(2,\mathbb{R})$ with 
        associated operator:
        \begin{align*}
            \mathcal{A}_{6,4} &=
                \begin{pmatrix}
                    0 & 0 & g_{13} & 0 & 0 & 0 \\
                    0 & 2 g_{13} & 0 & 0 & 0 & 0 \\
                    g_{13} & 0 & 0 & 0 & 0 & 0 \\
                    0 & 0 & 0 & 0 & 0 & g_{46} \\
                    0 & 0 & 0 & 0 & 2 g_{46} & 0 \\
                    0 & 0 & 0 & g_{46} & 0 & 0
                \end{pmatrix}
                \partial_x\notag
                \\
                &+ 
                \begin{pmatrix}
                    0 & -2 u^1 & u^2 & 0 & 0 & 0 \\
                    2 u^1 & 0 & -2 u^3 & 0 & 0 & 0 \\
                    -u^2 & 2 u^3 & 0 & 0 & 0 & 0 \\
                    0 & 0 & 0 & 0 & -2 u^4 & u^5 \\
                    0 & 0 & 0 & 2 u^4 & 0 & -2 u^6 \\
                    0 & 0 & 0 & -u^5 & 2 u^6 & 0
            \end{pmatrix}
            \\
                &+ 
                \begin{pmatrix}
                    0 & f^{12} & f^{13} & 0 & 0 & 0 \\
                    -f^{12} & 0 & f^{23} & 0 & 0 & 0 \\
                    -f^{13} & 2 -f^{23} & 0 & 0 & 0 & 0 \\
                    0 & 0 & 0 & 0 & f^{45} & f^{46} \\
                    0 & 0 & 0 & 2 -f^{45} & 0 & f^{56} \\
                    0 & 0 & 0 & -f^{46} & -f^{56} & 0
            \end{pmatrix},\notag
        \end{align*}
    \item $\mathfrak{so}(4,\mathbb{R}) \cong 
        \mathfrak{so}(3,\mathbb{R})\oplus\mathfrak{so}(3,\mathbb{R})$ with 
        associated operator:
        \begin{align*}
                \mathcal{A}_{6,5} &=
                \begin{pmatrix}
                    g_{22} & 0 & 0 & 0 & 0 & 0 \\
                    0 & g_{22} & 0 & 0 & 0 & 0 \\
                    0 & 0 & g_{22} & 0 & 0 & 0 \\
                    0 & 0 & 0 & g_{55} & 0 & 0 \\
                    0 & 0 & 0 & 0 & g_{55} & 0 \\
                    0 & 0 & 0 & 0 & 0 & g_{55}
                \end{pmatrix}
                \partial_x\notag
                \\
                &+
                \begin{pmatrix}
                    0 & -u^3 & u^2 & 0 & 0 & 0 \\
                    u^3 & 0 & -u^1 & 0 & 0 & 0 \\
                    -u^2 & u^1 & 0 & 0 & 0 & 0 \\
                    0 & 0 & 0 & 0 & -u^6 & u^5 \\
                    0 & 0 & 0 & u^6 & 0 & -u^4 \\
                    0 & 0 & 0 & -u^5 & u^4 & 0
                \end{pmatrix}
                \\
                &+
                \begin{pmatrix}
                    0 & f^{12} & f^{13} & 0 & 0 & 0 \\
                    -f^{12} & 0 & f^{23} & 0 & 0 & 0 \\
                    -f^{13} & -f^{23} & 0 & 0 & 0 & 0 \\
                    0 & 0 & 0 & 0 & f^{45} & f^{46} \\
                    0 & 0 & 0 & -f^{45} & 0 & f^{56} \\
                    0 & 0 & 0 & -f^{46} & -f^{56} & 0
                \end{pmatrix},\notag
            \end{align*}
    \item $\mathfrak{sl}(2,\mathbb{R})\oplus\mathfrak{so}(3,\mathbb{R})$ with 
        associated operator:
        \begin{align*}
                \mathcal{A}_{6,6} &=
                \begin{pmatrix}
 0 & 0 & g_{13} & 0 & 0 & 0 \\
 0 & 2 g_{13} & 0 & 0 & 0 & 0 \\
 g_{13} & 0 & 0 & 0 & 0 & 0 \\
 0 & 0 & 0 & g_{55} & 0 & 0 \\
 0 & 0 & 0 & 0 & g_{55} & 0 \\
 0 & 0 & 0 & 0 & 0 & g_{55} \\
\end{pmatrix}
\partial_x\notag
            \\
            &+ 
            \begin{pmatrix}
                0 & -2 u^1 & u^2 & 0 & 0 & 0 \\
                2 u^1 & 0 & -2 u^3 & 0 & 0 & 0 \\
                -u^2 & 2 u^3 & 0 & 0 & 0 & 0 \\
                0 & 0 & 0 & 0 & -u^6 & u^5 \\
                0 & 0 & 0 & u^6 & 0 & -u^4 \\
                0 & 0 & 0 & -u^5 & u^4 & 0
            \end{pmatrix}
            \\
            &+ 
            \begin{pmatrix}
                0 & f^{12} & f^{13} & 0 & 0 & 0 \\
                -f^{12} & 0 & f^{23} & 0 & 0 & 0 \\
                -f^{13} & -f^{23} & 0 & 0 & 0 & 0 \\
                0 & 0 & 0 & 0 & f^{45} & f^{46} \\
                0 & 0 & 0 & -f^{45} & 0 & f^{56} \\
                0 & 0 & 0 & -f^{46} & -f^{56} & 0
            \end{pmatrix}.\notag
            \end{align*}   
\end{itemize}

Furthermore, we can obtain three non-degenerate operators from 
the direct sum between the solvable Lie algebras of dimension 4 
and the two-dimensional abelian Lie algebra and the unique nilpotent 
Lie algebra of dimension 5 and the one-dimensional abelian Lie algebra:

\begin{itemize}
    \item $\mathfrak{s}_{4,6}\oplus 2\mathfrak{n}_{1,1}$ with 
        associated operator:
    \begin{equation*}
        \begin{aligned}
            \mathcal{A}_{6,7}&=
            \begin{pmatrix}
                0 & 0 & 0 & g_{14} & 0 & 0 \\
                0 & 0 & g_{14} & 0 & 0 & 0 \\
                0 & g_{14} & 0 & 0 & 0 & 0 \\
                g_{14} & 0 & 0 & g_{44} & g_{45} & g_{46} \\
                0 & 0 & 0 & g_{45} & g_{55} & g_{56} \\
                0 & 0 & 0 & g_{46} & g_{56} & g_{66}
            \end{pmatrix}
            \partial_x 
            \\
            &+
            \begin{pmatrix}
                0 & 0 & 0 & 0 & 0 & 0 \\
                0 & 0 & -u^1+f^{23} & u^2+f^{24} & 0 & 0 \\
                0 & u^1-f^{23} & 0 & -u^3+f^{34} & 0 & 0 \\
                0 & -u^2-f^{24} & u^3-f^{34} & 0 & f^{45} & f^{46} \\
                0 & 0 & 0 & -f^{45} & 0 & f^{56} \\
                0 & 0 & 0 & -f^{46} & -f^{56} & 0 
            \end{pmatrix}
        \end{aligned}
        \end{equation*}
    \item $\mathfrak{s}_{4,7}\oplus 2\mathfrak{n}_{1,1}$ with 
        associated operator:
        \begin{equation*}
            \begin{aligned}
            \mathcal{A}_{6,8} &=
            \begin{pmatrix}
                0 & 0 & 0 & g_{14} & 0 & 0 \\
                0 & -g_{14} & 0 & 0 & 0 & 0 \\
                0 & 0 & -g_{14} & 0 & 0 & 0 \\
                g_{14} & 0 & 0 & g_{44} & g_{45} & g_{46} \\
                0 & 0 & 0 & g_{45} & g_{55} & g_{56} \\
                0 & 0 & 0 & g_{46} & g_{56} & g_{66}
            \end{pmatrix}
            \partial_x 
            \\
            &+
            \begin{pmatrix}
                0 & 0 & 0 & 0 & 0 & 0 \\
                0 & 0 & -u^1+f^{23} & -u^3+f^{24} & 0 & 0 \\
                0 & u^1-f^{23} & 0 & u^2+f^{34} & 0 & 0 \\
                0 & u^3-f^{24} & -u^2-f^{34} & 0 & f^{45} & f^{46} \\
                0 & 0 & 0 & -f^{45} & 0 & f^{56} \\
                0 & 0 & 0 & -f^{46} & -f^{56} & 0 \\
            \end{pmatrix},    
            \end{aligned}    
\end{equation*}
    \item $\mathfrak{n}_{5,2}\oplus\mathfrak{n}_{1,1}$ with 
        associated operator:
        \begin{equation*}
            \begin{aligned}
                \mathcal{A}_{6,9} &=
                \begin{pmatrix}
                    0 & 0 & 0 & g_{14} & 0 & 0 \\
                    0 & 0 & 0 & 0 & -g_{14} & 0 \\
                    0 & 0 & -g_{14} & 0 & 0 & 0 \\
                    g_{14} & 0 & 0 & g_{44} & g_{45} & g_{46} \\
                    0 & -g_{14} & 0 & g_{45} & g_{55} & g_{56} \\
                    0 & 0 & 0 & g_{46} & g_{56} & g_{66}
                \end{pmatrix}\partial_x 
                \\
                &+
                \begin{pmatrix}
                    0 & 0 & 0 & f^{14} & f^{15} & 0 \\
                    0 & 0 & 0 & f^{24} & f^{14} & 0 \\
                    0 & 0 & 0 & -u^2+f^{34} & -u^1+f^{35} & 0 \\
                    -f^{14} & -f^{24} & u^2-f^{34} & 0 & -u^3+f^{45} & f^{46} \\
                    -f^{15} & -f^{14} & u^1-f^{35} & u^3-f^{45} & 0 & f^{56} \\
                    0 & 0 & 0 & -f^{46} & -f^{56} & 0 \\
            \end{pmatrix}.
            \end{aligned}
        \end{equation*}
\end{itemize}

In dimension 6 we have the first case of a 2-step nilpotent Lie algebra 
$\mathfrak{n}_{6,1}$ with a non-degenerate scalar product, as proven in
\ref{thm:twostepnilpotent}. The structure of the operator is the following:
\begin{equation*}
    \begin{aligned}
        \mathcal{A}_{6,10} &=
        \begin{pmatrix}
            0 & 0 & 0 & g_{14} & 0 & 0 \\
            0 & 0 & 0 & 0 & -g_{14} & 0 \\
            0 & 0 & -g_{14} & 0 & 0 & 0 \\
            g_{14} & 0 & 0 & g_{44} & g_{45} & g_{46} \\
            0 & -g_{14} & 0 & g_{45} & g_{55} & g_{56} \\
            0 & 0 & 0 & g_{46} & g_{56} & g_{66}
        \end{pmatrix}
        \partial_x 
        \\
        &+
        \begin{pmatrix}
            0 & 0 & 0 & f^{14} & f^{15} & 0 \\
            0 & 0 & 0 & f^{24} & f^{14} & 0 \\
            0 & 0 & 0 & -u^2+f^{34} & -u^1+f^{35} & 0 \\
            -f^{14} & -f^{24} & u^2-f^{34} & 0 & -u^3+f^{45} & f^{46} \\
            -f^{15} & -f^{14} & u^1-f^{35} & u^3-f^{45} & 0 & f^{56} \\
            0 & 0 & 0 & -f^{46} & -f^{56} & 0 \\
        \end{pmatrix}.        
    \end{aligned}
\end{equation*}

Finally, from at the classification reported in~\cite{snowin}, we have 6 cases of solvable Lie algebras whose operator is non-degenerate:
\begin{itemize}
    \item $\mathfrak{s}_{6,k}$, with $k=162,\dots,167$, whose operator is non-degenerate:
\begin{equation*}
\begin{aligned}
 \mathcal{A}_{6,11}&=
\left(
\begin{array}{c@{\hspace{3pt}}c@{\hspace{3pt}}c@{\hspace{3pt}}c@{\hspace{3pt}}c@{\hspace{3pt}}c@{\hspace{3pt}}}
 0 & 0 & 0 & 0 & 0 & g_{16} \\
 0 & 0 & 0 & g_{16} & 0 & 0 \\
 0 & 0 & 0 & 0 & \frac{g_{16}}{a} & 0 \\
 0 & g_{16} & 0 & 0 & 0 & 0 \\
 0 & 0 & \frac{g_{16}}{a} & 0 & 0 & 0 \\
 g_{16} & 0 & 0 & 0 & 0 & g_{66} \\
\end{array}
\right)\partial_x \\&+ \left(
\begin{array}{c@{\hspace{3pt}}c@{\hspace{3pt}}c@{\hspace{3pt}}c@{\hspace{3pt}}c@{\hspace{3pt}}c@{\hspace{3pt}}}
 0 & 0 & 0 & 0 & 0 & 0 \\
 0 & 0 & 0 & -u^1+f^{24} & 0 & u^2+f^{26} \\
 0 & 0 & 0 & 0 & -u^1+f^{35} & a u^3+f^{36} \\
 0 & u^1-f^{24} & 0 & 0 & 0 & -u^4+f^{46} \\
 0 & 0 & u^1-f^{35} & 0 & 0 & -a u^5+f^{56} \\
 0 & -u^2-f^{26} & -a u^3-f^{36} & u^4-f^{46} & a u^5-f^{56} & 0 \\
\end{array}
\right),
\end{aligned}
\end{equation*}
\begin{equation*}
\begin{aligned}
 \mathcal{A}_{6,12}&=
\left(
\begin{array}{c@{\hspace{3pt}}c@{\hspace{3pt}}c@{\hspace{3pt}}c@{\hspace{3pt}}c@{\hspace{3pt}}c@{\hspace{3pt}}}
 0 & 0 & 0 & 0 & 0 & g_{16} \\
 0 & 0 & 0 & g_{16} & -g_{16} & 0 \\
 0 & 0 & 0 & 0 & g_{16} & 0 \\
 0 & g_{16} & 0 & 0 & 0 & 0 \\
 0 & -g_{16} & g_{16} & 0 & 0 & 0 \\
 g_{16} & 0 & 0 & 0 & 0 & g_{66} \\
\end{array}
\right)\partial_x \\&+ \left(
\begin{array}{c@{\hspace{3pt}}c@{\hspace{3pt}}c@{\hspace{3pt}}c@{\hspace{3pt}}c@{\hspace{3pt}}c@{\hspace{3pt}}}
 0 & 0 & 0 & 0 & 0 & 0 \\
 0 & 0 & 0 & -u^1+f^{24} & f^{25} & +u^2+u^3+f^{26} \\
 0 & 0 & 0 & 0 & -u^1+f^{24} & u^3+f^{36} \\
 0 & u^1-f^{24} & 0 & 0 & 0 & -u^4+f^{46} \\
 0 & -f^{25} & u^1-f^{24} & 0 & 0 & -u^4-u^5+f^{56} \\
 0 & -u^2-u^3-f^{26} & -u^3-f^{36} & u^4-f^{46} & +u^4+u^5-f^{56} & 0 \\
\end{array}
\right),
\end{aligned}
\end{equation*}
\begin{equation*}
\begin{aligned}
 \mathcal{A}_{6,13}&=
\left(
\begin{array}{c@{\hspace{3pt}}c@{\hspace{3pt}}c@{\hspace{3pt}}c@{\hspace{3pt}}c@{\hspace{3pt}}c@{\hspace{3pt}}}
 0 & 0 & 0 & 0 & 0 & g_{16} \\
 0 & 0 & 0 & \frac{g_{16}}{\alpha } & 0 & 0 \\
 0 & 0 & g_{16} & 0 & 0 & 0 \\
 0 & \frac{g_{16}}{\alpha } & 0 & 0 & 0 & 0 \\
 0 & 0 & 0 & 0 & g_{16} & 0 \\
 g_{16} & 0 & 0 & 0 & 0 & g_{66} \\
\end{array}
\right)\partial_x \\&+ \left(
\begin{array}{c@{\hspace{3pt}}c@{\hspace{3pt}}c@{\hspace{3pt}}c@{\hspace{3pt}}c@{\hspace{3pt}}c@{\hspace{3pt}}}
 0 & 0 & 0 & 0 & 0 & 0 \\
 0 & 0 & 0 & -u^1+f^{24} & 0 & \alpha  u^2+f^{26} \\
 0 & 0 & 0 & 0 & -u^1+f^{35} & u^5+f^{36} \\
 0 & u^1-f^{24} & 0 & 0 & 0 & -\alpha  u^4+f^{46} \\
 0 & 0 & u^1-f^{35} & 0 & 0 & -u^3+f^{56} \\
 0 & -\alpha  u^2-f^{26} & -u^5-f^{36} & \alpha  u^4-f^{46} & u^3-f^{56} & 0 \\
\end{array}
\right),
\end{aligned}
\end{equation*}
\begin{equation*}
\begin{aligned}
\footnotesize \mathcal{A}_{6,14}&=
\left(
\begin{array}{c@{\hspace{3pt}}c@{\hspace{3pt}}c@{\hspace{3pt}}c@{\hspace{3pt}}c@{\hspace{3pt}}c@{\hspace{3pt}}}
 0 & 0 & 0 & 0 & 0 & \left(\alpha ^2+1\right) (-g_{25}) \\
 0 & 0 & 0 & \alpha  (-g_{25}) & g_{25} & 0 \\
 0 & 0 & 0 & -g_{25} & \alpha  (-g_{25}) & 0 \\
 0 & \alpha  (-g_{25}) & -g_{25} & 0 & 0 & 0 \\
 0 & g_{25} & \alpha  (-g_{25}) & 0 & 0 & 0 \\
 \left(\alpha ^2+1\right) (-g_{25}) & 0 & 0 & 0 & 0 & g_{66} \\
\end{array}
\right)\partial_x \\&+ \left(
\begin{array}{c@{\hspace{3pt}}c@{\hspace{3pt}}c@{\hspace{3pt}}c@{\hspace{3pt}}c@{\hspace{3pt}}c@{\hspace{3pt}}}
 0 & 0 & 0 & 0 & 0 & 0 \\
 0 & 0 & 0 & -u^1 & 0 & \alpha  u^2+u^3 \\
 0 & 0 & 0 & 0 & -u^1 & \alpha u^3-u^2 \\
 0 & u^1 & 0 & 0 & 0 & -\alpha  u^4+u^5 \\
 0 & 0 & u^1 & 0 & 0 & -\alpha  u^5-u^4 \\
 0 & -\alpha  u^2-u^3& -\alpha  u^3+u^2 & \alpha  u^4-u^5 & \alpha  u^5+u^4 & 0 \\
\end{array}
\right)\\&+ \left(
\begin{array}{c@{\hspace{3pt}}c@{\hspace{3pt}}c@{\hspace{3pt}}c@{\hspace{3pt}}c@{\hspace{3pt}}c@{\hspace{3pt}}}
 0 & 0 & 0 & 0 & 0 & 0 \\
 0 & 0 & 0 & f^{24} & f^{25} & f^{26} \\
 0 & 0 & 0 & -f^{25} & f^{24} & f^{36} \\
 0 & -f^{24} & f^{25} & 0 & 0 & f^{46} \\
 0 & -f^{25} & -f^{24} & 0 & 0 & f^{56} \\
 0 & -f^{26} & -f^{36} & -f^{46} & -f^{56} & 0 \\
\end{array}
\right)
\end{aligned}
\end{equation*}
\begin{equation*}
\begin{aligned}
\scriptstyle \mathcal{A}_{6,15}&=
\left(
\begin{array}{c@{\hspace{3pt}}c@{\hspace{3pt}}c@{\hspace{3pt}}c@{\hspace{3pt}}c@{\hspace{3pt}}c@{\hspace{3pt}}}
 0 & 0 & 0 & 0 & 0 & g_{16} \\
 0 & g_{16} & 0 & 0 & 0 & 0 \\
 0 & 0 & \frac{g_{16}}{a} & 0 & 0 & 0 \\
 0 & 0 & 0 & g_{16} & 0 & 0 \\
 0 & 0 & 0 & 0 & \frac{g_{16}}{a} & 0 \\
 g_{16} & 0 & 0 & 0 & 0 & g_{66} \\
\end{array}
\right)\partial_x \\&+ \left(
\begin{array}{c@{\hspace{3pt}}c@{\hspace{3pt}}c@{\hspace{3pt}}c@{\hspace{3pt}}c@{\hspace{3pt}}c@{\hspace{3pt}}}
 0 & 0 & 0 & 0 & 0 & 0 \\
 0 & 0 & 0 & -u^1+f^{24} & 0 & u^4+f^{26} \\
 0 & 0 & 0 & 0 & -u^1+f^{35} & a u^5+f^{36} \\
 0 & u^1-f^{24} & 0 & 0 & 0 & -u^2+f^{46} \\
 0 & 0 & u^1-f^{35} & 0 & 0 & -a  u^3+f^{56} \\
 0 & -u^4-f^{26} & -a u^5-f^{36} & u^2-f^{46} & a  u^3-f^{56} & 0 \\
\end{array}
\right),
\end{aligned}
\end{equation*}
\begin{equation*}
\begin{aligned}
 \mathcal{A}_{6,16}&=
\left(
\begin{array}{c@{\hspace{3pt}}c@{\hspace{3pt}}c@{\hspace{3pt}}c@{\hspace{3pt}}c@{\hspace{3pt}}c@{\hspace{3pt}}}
 0 & 0 & 0 & 0 & 0 & g_{16} \\
 0 & -g_{16} & 0 & 0 & -g_{16} & 0 \\
 0 & 0 & -g_{16} & g_{16} & 0 & 0 \\
 0 & 0 & g_{16} & 0 & 0 & 0 \\
 0 & -g_{16} & 0 & 0 & 0 & 0 \\
 g_{16} & 0 & 0 & 0 & 0 & g_{66} \\
\end{array}
\right)\partial_x \\&+ \left(
\begin{array}{c@{\hspace{3pt}}c@{\hspace{3pt}}c@{\hspace{3pt}}c@{\hspace{3pt}}c@{\hspace{3pt}}c@{\hspace{3pt}}}
 0 & 0 & 0 & 0 & 0 & 0 \\
 0 & 0 & f^{23} & -u^1f^{24} & 0 & u^3+u^4+f^{26} \\
 0 & -f^{23} & 0 & 0 & -u^1+f^{24} & -u^2+u^5+f^{36} \\
 0 & u^1-f^{24} & 0 & 0 & 0 & u^5+f^{46} \\
 0 & 0 & u^1-f^{24} & 0 & 0 & -u^4+f^{56} \\
 0 & -u^3-u^4-f^{26} & u^2-u^5-f^{36} & -u^5-f^{46} & u^4-f^{56} & 0 \\
\end{array}
\right),
\end{aligned}
\end{equation*}
with $0<|a|\le 1$, $\alpha>0$.
\end{itemize}

Finally, the last two cases are represented by the simple Lie algebra 
$\mathfrak{so}(1,3,\mathbb{R})$ whose associated operator is: 
\begin{align*}
    \mathcal{A}_{6,17} &=
    \begin{pmatrix}
        -g_{55} & 0 & 0 & g_{25} & 0 & 0 \\
        0 & -g_{55} & 0 & 0 & g_{25} & 0 \\
        0 & 0 & -g_{55} & 0 & 0 & g_{25} \\
        g_{25} & 0 & 0 & g_{55} & 0 & 0 \\
        0 & g_{25} & 0 & 0 & g_{55} & 0 \\
        0 & 0 & g_{25} & 0 & 0 & g_{55}
    \end{pmatrix}
    \partial_x \notag
    \\
    &+ 
    \begin{pmatrix}
        0 & -u^3 & u^2 & 0 & -u^6 & u^5 \\
        u^3 & 0 & -u^1 & u^6 & 0 & -u^4 \\
        -u^2 & u^1 & 0 & -u^5 & u^4 & 0 \\
        0 & -u^6 & u^5 & 0 & u^3 & -u^2 \\
        u^6 & 0 & -u^4 & -u^3 & 0 & u^1 \\
        -u^5 & u^4 & 0 & u^2 & -u^1 & 0
    \end{pmatrix}
    \\
    &+ 
    \begin{pmatrix}
        0 & f^{12} & f^{13} & 0 & f^{15} & f^{16} \\
        -f^{12} & 0 & f^{23} & -f^{15} & 0 & f^{26} \\
        -f^{13} & -f^{23} & 0 & -f^{16} & -f^{26} & 0 \\
        0 & +f^{15} & f^{16} & 0 & -f^{12} & -f^{13} \\
        -f^{15} & 0 & f^{26} & f^{12} & 0 & -f^{23} \\
        -f^{16} & -f^{26} & 0 & f^{13} & f^{23} & 0 \\
    \end{pmatrix},\notag
    \end{align*}
and the Levi decomposable Lie algebra 
$\mathfrak{sl}(2,\mathbb{R})\ltimes 3\mathfrak{n}_{1,1}$ with associated
operator:
\begin{align*}
\mathcal{A}_{6,18} &=
\begin{pmatrix}
 g_{22} & 0 & 0 & g_{25} & 0 & 0 \\
 0 & g_{22} & 0 & 0 & g_{25} & 0 \\
 0 & 0 & g_{22} & 0 & 0 & g_{25} \\
 g_{25} & 0 & 0 & 0 & 0 & 0 \\
 0 & g_{25} & 0 & 0 & 0 & 0 \\
 0 & 0 & g_{25} & 0 & 0 & 0 \\
\end{pmatrix}
\partial_x\notag 
\\
&+
\begin{pmatrix}
 0 & -u^3 & u^2 & 0 & -u^6 & u^5 \\
 u^3 & 0 & -u^1 & u^6 & 0 & -u^4 \\
 -u^2 & u^1 & 0 & -u^5 & u^4 & 0 \\
 0 & -u^6 & u^5 & 0 & 0 & 0 \\
 u^6 & 0 & -u^4 & 0 & 0 & 0 \\
 -u^5 & u^4 & 0 & 0 & 0 & 0 \\
\end{pmatrix}
\\
&+
\begin{pmatrix}
 0 & f^{12} & f^{13} & 0 & f^{15} & f^{16} \\
 -f^{12} & 0 & f^{23} &-f^{15} & 0 & f^{26} \\
 -f^{13} & -f^{23} & 0 & -f^{16} & -f^{26} & 0 \\
 0 & f^{15} & f^{16} & 0 & 0 & 0 \\
 -f^{15} & 0 & f^{26} & 0 & 0 & 0 \\
 -f^{16} & -f^{26} & 0 & 0 & 0 & 0 \\
\end{pmatrix}.\notag
\end{align*}

\newpage 

 \providecommand{\cprime}{\/{\mathsurround=0pt$'$}}
  \providecommand*{\SortNoop}[1]{}


\begin{thebibliography}{10}
\addcontentsline{toc}{section}{References}
\bibitem{AgaFer}
Agafonov, S. I . and Ferapontov, E. V.:
\newblock \emph{Systems of conservation laws in the context of the projective theory of congruences}.
\newblock Izv. RAN. Ser. Mat., 1996, Volume 60, Issue 6, Pages 3–30.

\bibitem{Agricola2008}
I. Agricola, (2008). 
\newblock
\emph{Old and new on the exceptional group $G_2$''} 
\newblock Not. Am. Math. Soc., 2008 55 (8): 922--929.

\bibitem{arnold2007mathematical}
Arnold, V.I., Khukhro, E., Kozlov, V.V., and Neishtadt, A.I.:
\emph{Mathematical Aspects of Classical and Celestial Mechanics}.
Encyclopaedia of Mathematical Sciences, Springer Berlin Heidelberg, 2007.

%\bibitem{dorfman91:_local}
%I.Ya. Dorfman and O.I. Mokhov.
%\newblock Local symplectic operators and structures related to them.
%\newblock {\em J. Math. Phys.}, 32:3288, 1991.

%\bibitem{doyle93:_differ_poiss}
%P.W. Doyle.
%\newblock Differential geometric {P}oisson bivectors in one space variable.
%\newblock {\em J. Mathematical Phys.}, 34(4):1314--1338, 1993.

\bibitem{Bianchi1918}
Bianchi, L.:
\emph{Lezioni sulla teoria dei gruppi continui finiti di transformazioni}. 
Pisa, Spoerri, 1918.

\bibitem{liecom1}
Bolsinov, A. V.:
\emph{Compatible Poisson brackets on Lie algebras and completeness of families of functions in involution}.
Mathematics of the USSR-Izvestiya, 1992, Volume 38, Issue 1, 69–90 DOI: 10.1070/IM1992v038n01ABEH002187

\bibitem{PC2}
Carlet, G., Casati, M., and Shadrin, S.:
\emph{Poisson cohomology of scalar multidimensional Dubrovin–Novikov brackets}.
Journal of Geometry and Physics, 114, 404-419, 2017. DOI: \url{https://doi.org/10.1016/j.geomphys.2016.12.008}.

\bibitem{Cartan1909}
Cartan, E.:
\emph{Les groupes de transformations continus, infinis, simples}. 
Ann. sci. de l’É.N.S. 3e série, tome 26 (1909), p. 93--161.

\bibitem{DellAVer1}
Dell'Atti, M. and Vergallo, P.:
\emph{Classification of degenerate non-homogeneous Hamiltonian operators}. 
J. Math. Phys. 64(3) (2022). https://doi.org/10.1063/5.0135134. arXiv:2210.14289. March 2023

\bibitem{DellAGubVer1}
Dell'Atti, M., Gubbiotti, G., and Vergallo, P.:
\emph{On Dubrovin’s characterisation of Hamiltonian structures for multi-component Volterra-like equations}. 
In preparation, 2024.

\bibitem{donsha}
Donagi, R., Shaska, T.:
\emph{Integrable Systems and Algebraic Geometry}. 
Cambridge University Press; 2020.

\bibitem{dubfield}
Dubrovin, B.:
\emph{Integrable systems in topological field theory}.
Nuclear Physics B, 379, 3, 1992, pp. 627-689, DOI: \url{https://doi.org/10.1016/0550-3213(92)90137-Z}.

\bibitem{DN83}
Dubrovin, B.A. and Novikov, S.P.:
\newblock \emph{Hamiltonian formalism of one-dimensional systems of hydrodynamic type and the Bogolyubov--Whitham averaging method}.
\newblock {\em Soviet Math. Dokl.}, 27(3):665--669, 1983.

\bibitem{DubrovinNovikov:PBHT}
Dubrovin, B.~A. and Novikov, S.~P.:
\newblock \emph{Poisson brackets of hydrodynamic type}.
\newblock {\em Soviet Math. Dokl.}, 30:651--654, 1984.

\bibitem{DB2}
Dubrovin, B. A. and Novikov, S.P.: 1989 {\em Russ. Math. Surv.} 44, 35, DOI: \url{10.1070/RM1989v044n06ABEH002300}

\bibitem{DubKri}
Dubrovin, B. A., Krichever, I. M. and Novikov, S. P.:
\newblock Integrable systems. I. In \emph{Dynamical Systems IV}, volume 4 of Encyclopaedia of Mathematical Sciences, 
\newblock pages 173–280. Springer-Verlag, Berlin, 2 edition, 2001.

\bibitem{FPV14}
Ferapontov, E.V., Pavlov, M.V. and Vitolo, R.:
\newblock \emph{Projective-geometric aspects of homogeneous third-order {H}amiltonian operators}. 
\newblock {\em J. Geom. Phys.}, 85:16--28, 2014.
\newblock \texttt{DOI:10.1016/j.geomphys.2014.05.027}.

\bibitem{FultonHarris1991}
Fulton, W., Harris, J.: 
\newblock \emph{Representation Theory: A First Course}. 
\newblock Springer New York. 1991.

\bibitem{getz}
Getzler, E.:
\emph{A Darboux theorem for Hamiltonian operators in the formal calculus of variations}.
Duke Math. J., 111(3):535–560, 2002.

\bibitem{liecom2} Golubchik, I.Z., Sokolov, V.V. Compatible Lie Brackets and Integrable Equations of the Principal Chiral Model Type. Functional Analysis and Its Applications 36, 172–181 (2002). https://doi.org/10.1023/A:1020141820038

\bibitem{GubvGeVer1}
Gubbiotti, G., van Geemen, B., and Vergallo, P.:
\emph{Line geometry of pairs of second-order Hamiltonian operators and quasilinear systems}.
Proc. R. Soc. A.48020240280 http://doi.org/10.1098/rspa.2024.0280, 2024.

\bibitem{HuCas24}
Hu, X. and Casati, M.:
\emph{Multidimensional nonhomogeneous quasi-linear systems and their Hamiltonian structures}. 
SIGMA 20 (2024), 081, 17 pages.

\bibitem{Jacobson1962}
Jacobson, N.:
\newblock \emph{Lie Algebras}. 
\newblock Dover Publications, 2013.

\bibitem{hampds}
Kersten, P., Krasilshchik, I.S., Verbovetsky, A.M. and Vitolo, R.:
\emph{Hamiltonian structures for general PDEs}. 
In B. Kruglikov, V. Lychagin, E. Straume: Differential Equations – Geometry, Symmetries and Integrability, Proceedings of the 2008 Abel Symposium, Springer, 187–198.

\bibitem{Kersten2008HamiltonianSF}
Kersten, P., Krasil’shchik, I. S., Verbovetsky, A., and Vitolo, R.:
\emph{Hamiltonian Structures for General PDEs}.
In B. Kruglikov, V. Lychagin, E. Straume: Differential Equations – Geometry, Symmetries and Integrability, Proceedings of the 2008 Abel Symposium, Springer, 187–198.

\bibitem{Kirillov2008}
Kirillov, A.A.:
\newblock {\em An introduction to Lie Groups and Algebras}. 
\newblock Cambridge University Press, 2008.

\bibitem{koinho}
Konyaev, A.:
\emph{Geometry of inhomogeneous Poisson brackets, multicomponent Harry Dym hierarchies and multicomponent Hunter-Saxton equations}.
Preprint, arXiv:2112.05635

\bibitem{PC1}
Laurent-Gengoux, C., Pichereau, A., and Vanhaecke, P.:
\emph{Poisson Structures}.
Ser. Grundlehren der mathematischen Wissenschaften. Springer Berlin Heidelberg, 2013. DOI: \url{https://doi.org/10.1007/978-3-642-31090-4_4}.

\bibitem{LieTransformationGroups}
Lie, S.:
\emph{Theorie der Transformationsgruppen}.
Vol.1–3, Leipzig, 1888, 1890, 1893.

\bibitem{LV}
Lorenzoni, P. and Vitolo, R.:
\emph{Bi-Hamiltonian structures of KdV type, cyclic Frobenius algebrae and Monge metrics}.
J. Phys. A: Math. Theor. 57 485202, 2024.

\bibitem{LSV}
Lorenzoni, P., Shadrin, S., and Vitolo, R.:
\emph{Miura-reciprocal transformations and localizable Poisson pencils}. 
In Nonlinearity 37 (2024), no. 2, paper no. 025001, 35 pages.

\bibitem{Magri}
Magri, F.:
\newblock \emph{A simple model of the integrable {H}amiltonian equation}. 
\newblock {\em J. Math. Phys.}, 19:1156--1162, 1978.

\bibitem{omokh}
Mokhov, O.I.:
\newblock \emph{On compatible Poisson structures of hydrodynamic type}. 
\newblock Uspekhi Matemat. Nauk. 1997. V. 52, No. 6. P. 171–172.
\newblock English translation in: Russian Math. Surveys. 1997. V. 52. No. 6. P. 1310–1311.

\bibitem{mokhov98:_sympl_poiss}
Mokhov, O.I.:
\newblock \emph{Symplectic and {P}oisson geometry on loop spaces of smooth manifolds and integrable equations}. 
\newblock In S.P. Novikov and I.M. Krichever, editors, {\em Reviews in mathematics and mathematical physics}, volume~11, pages 1--128. Harwood academic publishers, 1998.

\bibitem{FerMok1}
Mokhov, O. I., Ferapontov, E. V.: 
\newblock \emph{Hamiltonian Pairs Associated with Skew-Symmetric Killing Tensors on Spaces of Constant Curvature}. 
\newblock Funktsional. Anal. i Prilozhen., 28:2 (1994), 60–63; Funct. Anal. Appl., 28:2 (1994), 123–125.

\bibitem{MF94}
Mokhov, O. I. and Ferapontov, E. V.:
\newblock \emph{Hamiltonian pairs associated with skew-symmetric Killing tensors on spaces of constant curvature}.
\newblock Functional Analysis and Its Applications, 28(2):123–125, 1994.

\bibitem{Mubarakzyanov1963a}
Mubarakzyanov, G. M.:
\emph{On solvable Lie algebras}.
Izv. Vys. Ucheb. Zaved. Matematika, 1963, N 1 (32), 114--123 (in Russian).

\bibitem{Mubarakzyanov1963b}
Mubarakzyanov, G. M.:
\emph{The classification of the real structure of five-dimensional Lie algebras}. 
Izv. Vys. Ucheb. Zaved. Matematika, 1963, N 3 (34), 99--106 (in Russian).

\bibitem{Mubarakzyanov1963c}
Mubarakzyanov, G. M.:
\emph{Classification of solvable Lie algebras of sixth order with a non-nilpotent basis element}.
Izv. Vys. Ucheb. Zaved. Matematika, 1963, N 4 (35), 104–116 (in Russian).

\bibitem{Mubarakzyanov1966}
Mubarakzyanov, G. M.:
\emph{Certain theorems on solvable Lie algebras}.
Izv. Vys. Ucheb. Zaved. Matematika, 1966, N 6 (55), 95–98 (in Russian).

\bibitem{NovikovManakovPitaevskiiZakharov:TS}
Novikov, S.~P., Manakov, S.~V., Pitaevskii, L.~P. and Zakharov, V.~E.:
\newblock \emph{Theory of Solitons}.
\newblock Plenum Press, 1984.

\bibitem{liecom3}
Odesskii, A. V., Sokolov, V. V.:
\emph{Compatible Lie brackets related to elliptic curve}. 
J. Math. Phys. 1 January 2006; 47 (1): 013506. https://doi.org/10.1063/1.2158434

\bibitem{Olver:ApLGDEq}
Olver, P.J.:
\newblock \emph{Applications of {L}ie Groups to Differential Equations}.
\newblock Springer-Verlag, 2nd edition, 1993.

\bibitem{PavVerVit1}
Pavlov, M. V., Vergallo, P., Vitolo, R.:
\newblock \emph{Classification of bi-Hamiltonian pairs extended by isometries}.
\newblock Proc. Roy. Soc. A, June 2021. 477:20210185.

\bibitem{PatWin77}
Patera, J. and Winternitz, P.:
\emph{Subalgebras of real three and four--dimensional Lie algebras}. 
Journal of Mathematical Physics, 18, 1449--1455, 1977.

\bibitem{Popovych2003}
Popovych, R. O., Boyko1, V. M., Nesterenko, M. O., and Lutfullin, M. W.:
\emph{Realizations of real low-dimensional Lie algebras}. 
J. Phys. A: Math. Gen., 2003, 36, 7337--7360. 

\bibitem{Riz1} Rizzo, A.:
\emph{Classification of non-homogeneous multi-dimensional Hamiltonian operators}.
Preprint, arXiv:2410.22455, 2024.

\bibitem{kY2}
Santillan, O. P.:
\emph{Hidden symmetries and supergravity solutions}.
J. Math. Phys. 2012; 53 (4): 043509. https://doi.org/10.1063/1.3698087

\bibitem{Scheuneman1966PhD}
Scheuneman, J. H.:
\newblock \emph{Two-step nilpotent Lie algebras}.
\newblock Ph.D. Thesis, Purdue University, 1966.

\bibitem{snowin}
Šnobl, L. and Winternitz, P.:
\emph{Classification and Identification of Lie Algebras}.
Ser. CRM Monograph Series, vol. 33. American Mathematical Society, Providence, RI, 2014. ISBN: 9780821843550. [Online]. Available: \url{https://api.semanticscholar.org/CorpusID:118764421}.

\bibitem{tsa2}
Tsarev, S.P.:
\newblock \emph{On {P}oisson brackets and one-dimensional {H}amiltonian systems of hydrodynamic type}. 
\newblock \emph{Soviet Math. Dokl.}, 31(3):488--491, 1985.

\bibitem{tsa3}
Tsarev, S. P.:
\emph{The hamiltonian property of stationary and inverse equations of condensed matter mechanics and mathematical physics}. 
Math. Notes 46, 569–573 (1989)

\bibitem{tsa1}
Tsarev, S.P.:
\newblock \emph{The geometry of {H}amiltonian systems of hydrodynamic type. The generalized hodograph method}.
\newblock {\em Math. USSR-Izv.}, 37(2):397--419, 1991.

\bibitem{Turkowski1990}
Turkowski, P.:
\emph{Solvable Lie algebras of dimension six}. 
J. Math. Phys., 1990, V.31, N 6, 1344--1350.

\bibitem{pvan}
Vanhaecke, P.:
\emph{Integrable Systems in the realm of Algebraic Geometry}.
Lecture Notes in Mathematics, Springer, pp. 220, 2013.

\bibitem{Ver2}
Vergallo, P.:
\emph{Non-homogeneous Hamiltonian structures for quasilinear systems}. 
Boll Unione Mat Ital 17, 513–526 (2024). https://doi.org/10.1007/s40574-023-00369-5

\bibitem{VerVit2}
Vergallo, P. and Vitolo, R.:
\emph{Projective geometry of homogeneous second order Hamiltonian operators}. 
Nonlinearity 36 (2023) 5311--5333, DOI: 10.1088/1361-6544/acf269

\bibitem{kY1}
Yano, K.:
\emph{Some Remarks on Tensor Fields and Curvature}.
Annals of Mathematics, vol. 55, no. 2, 1952, pp. 328–47. JSTOR, https://doi.org/10.2307/1969782.


%\bibitem{falq}
%G. Falqui, 
%\newblock \emph{On a Camassa-Holm type equation with two dependent variables}, 
%\newblock J. Phys. A.: Math. Gen. 39(2006),327-342.

%\bibitem{Getzler:DTHOpFCV}
%E.~Getzler.
%\newblock A {D}arboux theorem for {H}amiltonian operators in the formal
%  calculus of variations.
%\newblock {\em Duke Math. J.}, 111:535--560, 2002.

%\bibitem{haantjes55:_x}
%J.~Haantjes.
%\newblock On $x_m$-forming sets of eigenvectors.
%\newblock {\em Indag. Math.}, 17:158--162, 1955.

%\bibitem{Ibragimov:NCT}
%N.~H. Ibragimov.
%\newblock A new conservation theorem.
%\newblock {\em J. Math. Anal. Appl.}, 333:311--328, 2007.

%\bibitem{KrasilshchikKersten:SROpCSDE}
%I.~S. Krasil{\cprime}shchik and P.~H.~M. Kersten.
%\newblock {\em Symmetries and Recursion Operators for Classical and
%  Supersymmetric Differential Equations}.
%\newblock Kluwer, 2000.

%\bibitem{KVV17}
%J.~Krasil{\cprime}shchik, A.~Verbovetsky, and R.~Vitolo.
%\newblock {\em The symbolic computation of integrability structures for partial
%  differential equations}.
%\newblock Texts and Monographs in Symbolic Computation. Springer, 2018.
%\newblock ISBN 978-3-319-71654-1; see \url{http://gdeq.org/Symbolic_Book} for
%  downloading program files that are discussed in the book.

%\bibitem{KV11}
%J.~Krasil{\cprime}shchik and A.~M. Verbovetsky.
%\newblock Geometry of jet spaces and integrable systems.
%\newblock {\em J. Geom. Phys.}, 61:1633--1674, 2011.
%\newblock arxiv:1002.0077.

%\bibitem{LSV}
%P.~Lorenzoni, A.~Savoldi, and R.~Vitolo.
%\newblock \emph{{B}i-{H}amiltonian systems of {KdV} type}.
%\newblock {\em J. Phys. A}, 51(4):045202, 2018.

%\bibitem{MalNov2001}
%A.Ya. Maltsev and S.P. Novikov.
%\newblock On the local systems {H}amiltonian in the weakly non-local {P}oisson
%  brackets.
%\newblock {\em Phys.\ D}, 156(1-2):53--80, 2001.

%\bibitem{olver88:_darboux_hamil}
%P.~Olver.
%\newblock Darboux' theorem for hamiltonian differential operators.
%\newblock {\em J. Differential Equations}, 71:10--33, 1988.

%\bibitem{sav1} A. Savoldi: \emph{On deformations of one-dimensional Poisson structures of hydrodynamic type with degenerate metric}, J. Geom. Phys 104 (2016), pp. 246–276.

%\bibitem{vinogradov84:_categ}
%A.M. Vinogradov.
%\newblock Category of partial differential equations.
%\newblock {\em Lecture {N}otes in {M}ath.}, 1108:77--102, 1984.
%\newblock Springer-Verlag, Berlin.

%\bibitem{Vit1}
%R. Vitolo.
%\newblock \emph{Computing with Hamiltonian operators},
%\newblock Computer Physics 
%Communications Volume 244 (2019), 228-245.

%\bibitem{mk2} O I Mokhov 1990 Russ. Math. Surv. 45 218

\end{thebibliography}
\end{document}